\def\nofull{0}
  \newtheorem{theorem}{Theorem} \newtheorem{thm}{Theorem}[section]
    \newtheorem{lemma}[thm]{Lemma}
   \newtheorem{corollary}[thm]{Corollary} \newtheorem{observation}[thm]{Observation}
   \newtheorem{definition}{Definition} 
   \newtheorem{alg}{Algorithm}
\newcommand\myeq{\mathrel{\overset{\makebox[0pt]{\mbox{\normalfont\tiny\sffamily def}}}{=}}}
\newcommand{\BT}{\begin{thm}} \newcommand{\ET}{\end{thm}}
\newcommand{\BL}{\begin{lemma}} \newcommand{\EL}{\end{lemma}}
\newcommand{\BCM}{\begin{observation}} \newcommand{\ECM}{\end{observation}}
\newcommand{\BD}{\begin{definition}} \newcommand{\ED}{\end{definition}}
\newcommand{\BA}{\begin{alg}} \newcommand{\EA}{\end{alg}}
\newcommand{\BE}{\begin{enumerate}} \newcommand{\EE}{\end{enumerate}}
\newcommand{\BI}{\begin{itemize}} \newcommand{\EI}{\end{itemize}}
\def\FullBox{\hbox{\vrule width 8pt height 8pt depth 0pt}}
\newcommand{\qed}{\;\;\;\FullBox}
\newenvironment{proof}{\noindent{\bf Proof:~~}}{\(\qed\)}
\newcommand{\BEQ}{\begin{equation}} \newcommand{\EEQ}{\end{equation}}
\newcommand{\BEQN}{\begin{eqnarray}}\newcommand{\EEQN}{\end{eqnarray}}
\newcommand{\BPF}{\begin{proof}} \newcommand {\EPF}{\end{proof}}
\newenvironment{proofof}[1]{\noindent{\bf Proof of {#1}:~~}}{\(\qed\)}
\newcommand{\BPFOF}{\begin{proofof}} \newcommand {\EPFOF}{\end{proofof}}
\newcommand{\blue}[1]{\textcolor{black}{#1}}
\newcommand{\poly}{{\rm poly}}
\begin{document}

\begin{titlepage}
\title{Approximating the noise sensitivity of a monotone Boolean function}

\author{
	Ronitt Rubinfeld \thanks{
		CSAIL at MIT, and the
		Blavatnik School of Computer Science at Tel Aviv University,
		{\tt  ronitt@csail.mit.edu}.
		Ronitt Rubinfeld's research was supported by  NSF grants CCF-1650733,
		CCF-1733808, IIS-1741137 and CCF-1740751.}
	\and
	Arsen Vasilyan\thanks{	
		CSAIL at MIT,
		{\tt  vasilyan@mit.edu}.
		Arsen Vasilyan's research was supported by the EECS SuperUROP program,
		the MIT Summer UROP program and 
		the DeFlorez Endowment Fund.}
}

\maketitle

\begin{abstract}

\sloppy
The {\em noise sensitivity} of a Boolean function $f: \{0,1\}^n \rightarrow \{0,1\}$ is one of its fundamental properties. A function of a positive noise parameter $\delta$, it is denoted as $NS_{\delta}[f]$. 
Here we study the algorithmic problem of approximating it for monotone $f$, 
such that $NS_{\delta}[f] \geq 1/n^{C}$ for constant $C$,  and where $\delta$ satisfies
$1/n \leq \delta \leq 1/2$.
For such $f$ and $\delta$, we give a randomized algorithm performing $O\left(\frac{\min(1,\sqrt{n} \delta \log^{1.5} n) }{NS_{\delta}[f]} \poly\left(\frac{1}{\epsilon}\right)\right)$ queries and approximating $NS_{\delta}[f]$ to within a multiplicative factor of $(1\pm \epsilon)$. 
Given the same constraints on $f$ and $\delta$, we also prove a lower bound of $\Omega\left(\frac{\min(1,\sqrt{n} \delta)}{NS_{\delta}[f] \cdot n^{\xi}}\right)$ on the query complexity of any algorithm that approximates $NS_{\delta}[f]$ to within any constant factor, where $\xi$ can be any positive constant.  
Thus, our algorithm's query complexity is close to optimal in terms of its dependence on $n$.

We introduce a novel {\em descending-ascending view} of noise sensitivity, and use it as a central tool for the analysis of our algorithm. 
To prove lower bounds on query complexity, we develop a technique that 
reduces computational questions about query complexity to combinatorial 
questions about the existence of ``thin" functions with certain properties. 
The existence of such ``thin" functions is proved using the probabilistic method. 
These techniques also yield previously unknown lower bounds on the query complexity of approximating other fundamental properties of Boolean functions: 
the {\em total influence} and the {\em bias}.
\end{abstract}
\noindent

\end{titlepage}

\section{Introduction}
Noise sensitivity is a property of any Boolean function $f: \{0,1\}^n \rightarrow \{0,1\}$ 
defined as follows: First, pick $x = \{x_1,\ldots,x_n\}$ uniformly at random from $\{0,1\}^n$, then pick $z$ by flipping each $x_i$ with probability $\delta$. 
Here $\delta$, the noise parameter, is a given positive constant
no greater than $1/2$ (and at least $1/n$ in the interesting cases). 
The noise sensitivity of $f$, denoted as $NS_{\delta}[f]$, 
equals the probability that $f(x) \neq f(z)$. This definition was first explicitly given by Benjamini, Kalai and Schramm in \cite{num7}.
Noise sensitivity has been the focus of multiple papers:~\cite{num7,num20,num4,num8,num3,num16,num9,num6}. It has been applied to learning theory ~\cite{num17,num20,num4,
num2,num3,num5,num11}, property testing ~\cite{num19,num18}, hardness of approximation ~\cite{num10,num13}, hardness amplification ~\cite{num12}, theoretical economics and political science\cite{num8}, combinatorics ~\cite{num7,num16}, distributed computing ~\cite{num14} and differential privacy ~\cite{num20}.
Multiple properties and applications of noise sensitivity are summarized in \cite{num1} and \cite{num15}. 

In this work, we study the algorithmic question of \emph{approximating the noise sensitivity} of a function to which we are given oracle access. It can be shown that standard sampling 
techniques require $O \left(\frac{1}{NS_{\delta}[f] \epsilon^2}\right)$ 
queries to get a $(1+\epsilon)$-multiplicative approximation for $NS_{\delta}[f]$. In Appendix B, we show that this is optimal for a wide range of parameters of the problem. Specifically, it cannot be improved by more than a constant when $\epsilon$ is a sufficiently small constant, $\delta$ satisfies $1/n \leq \delta \leq 1/2$ and $NS_{\delta}[f]$ satisfies $\Omega \left(\frac{1}{2^n}\right) \leq NS_{\delta}[f] \leq O(1)$. Therefore, we focus on estimating the noise sensitivity of \emph{monotone} functions, which form an important subclass of Boolean functions.  

Noise sensitivity is closely connected to the total influence (henceforth just influence) of a Boolean function \cite{num1,num15}. 
Influence, denoted by $I[f]$, is defined as $n$ times the probability that $f(x) \neq f(x^{\oplus i})$, 
where $x$ is chosen uniformly at random, as for noise sensitivity, 
and $x^{\oplus i}$ is formed from $x$ by 
flipping a random index
(this latter probability 
is sometimes referred to as the {\em average sensitivity}). 
Despite their close connections, 
the noise sensitivity of a function can be quite different from its influence. 
For instance, for the parity function of all $n$ bits, 
the influence is $n$, but the noise sensitivity is 
$\frac{1}{2}(1-(1-2\delta)^n)$ (such disparities
also hold for monotone functions, 
see for example the discussion of influence and noise sensitivity of the majority function in \cite{num15}).

The approximation of parameters of monotone Boolean functions was
previously studied by \cite{num21, num26}, where they considered the 
question of how fast one can 
approximate the influence of a monotone function $f$ given oracle access to $f$.
It was shown that one can approximate the influence of a monotone function $f$ with only $\tilde{O} \left(\frac{\sqrt{n}}{I[f] \text{poly}(\epsilon)} \right)$ queries, which for constant $\epsilon$ beats the standard sampling algorithm by a factor of $\sqrt{n}$, ignoring logarithmic factors. 

Noise sensitivity is also closely related to the notion of noise stability (see for instance \cite{num15}). The noise stability of a Boolean function $f$ depends on a parameter $\rho$ and is denoted by $\text{Stab}_{\rho}[f]$. A well known formula connects it with noise sensitivity $NS_{\delta}[f]=\frac{1}{2}(1-\text{Stab}_{1-2 \delta}[f])$. 
This implies that by obtaining an approximation for $NS_{\delta}[f]$, 
one also achieves an approximation for the gap between $\text{Stab}_{1-2 \delta}[f]$ and one.

\subsection{Results}
Our main algorithmic result is the following:
\begin{theorem}
\label{main theorem}
Let $\delta$ be a parameter satisfying:
$$
\frac{1}{n}
\leq
\delta
\leq
\frac{1}{\sqrt{n} \log n}
$$
Suppose, $f:\{0,1\}^n \rightarrow \{0,1\}$ is a monotone function and $NS_\delta[f] \geq \frac{1}{n^{C}}$ for some constant $C$. 

Then, there is an algorithm that outputs an approximation to $NS_{\delta}[f]$ to within a multiplicative factor of $(1 \pm \epsilon)$,
with success probability at least $2/3$.   In expectation, the algorithm
makes $O\left(  \frac{\sqrt{n} \delta \log^{1.5} n}{NS_\delta[f] \epsilon^3} \right)$
  queries to the function. Additionally, it runs in time polynomial in $n$.
\end{theorem}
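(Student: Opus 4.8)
The plan is to build the algorithm around the \emph{descending-ascending view} of the noise process and thereby reduce estimating $NS_\delta[f]$ to an influence-flavored estimation problem amenable to the $\sqrt{n}$-type speedup of the monotone-influence work of \cite{num21,num26}, suitably adapted. Write $z = x \oplus \mathbf{1}_S$ with $S$ containing each coordinate independently with probability $\delta$, and factor the Hamming path from $x$ to $z$ through the meet $y := x \wedge z$: a \emph{descending} leg $x \searrow y$ flipping down the coordinates $D := \{i \in S : x_i = 1\}$, then an \emph{ascending} leg $y \nearrow z$ flipping up the coordinates $U := S \setminus D$. By monotonicity $f$ is non-increasing along the first leg and non-decreasing along the second, so $f(x) \neq f(z)$ forces $f(y) = 0$ and $f(x \vee z) = 1$; together with the measure-preserving swap $D \leftrightarrow S \setminus D$ this gives the exact identity
\[
NS_\delta[f] \;=\; 2\,\Pr\big[f(x)=0,\ f(z)=1\big] \;=\; 2\,\Pr\big[f(x)=1,\ f(z)=0\big].
\]
Since this forces one of $\Pr[f=0],\Pr[f=1]$ to be $\ge 1/2$, and since $f'(x) := 1-f(\bar{x})$ is again monotone with $NS_\delta[f']=NS_\delta[f]$ and $\Pr[f'=0]=\Pr[f=1]$, I would assume $\Pr[f=0]\ge1/2$ (running the mirror version on $f'$ otherwise) and reduce to estimating $\mu = \Pr_x[f(x)=0]\cdot\E_{x\,\mid\,f(x)=0}[q(x)]$, where $q(x) := \Pr_S[f(z)=1\mid x]$ is the probability that the descending-then-ascending experiment launched from the $0$-point $x$ crosses the boundary of $f$. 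The factor $\Pr[f=0]$ is estimated to $(1\pm\epsilon)$ by plain sampling (cheap, being $\ge 1/2$), so the real work is estimating $\E_{x\mid f(x)=0}[q(x)]$.

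The core is a subroutine that, given a $0$-point $x$, returns an (essentially) unbiased estimate of $q(x)$ using in expectation far fewer than the naive $\Theta(1/q(x))$ queries. I would implement the ascending leg as an adaptive walk that flips up coordinates a few at a time, stops the instant a crossing is detected, and carries an importance-sampling re-weighting to remain unbiased; monotonicity forces the local up-boundary density to increase along the walk, which is what makes it terminate quickly in expectation. A potential/martingale argument should bound the expected query cost, with the extremal monotone functions (threshold/majority-type) producing the $\sqrt{n}$ factor and concentration plus union bounds over the $\approx n\delta$ flipped coordinates and the $O(\log n)$ relevant scales of levels producing the $\log^{1.5} n$. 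The hypothesis $\delta \le 1/(\sqrt n \log n)$ enters precisely here: it keeps both legs short ($|D|,|U| \approx n\delta/2 \le \sqrt n/(2\log n)$), so that $q(x)$ is typically well below $1$ and the walk analysis goes through, and it is exactly the range in which the resulting bound beats plain sampling.

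To assemble, run the subroutine on $M = O\big(\tfrac{1}{NS_\delta[f]\,\epsilon^{2}}\big)$ independent uniform $0$-points, average, and rescale by the estimate of $\Pr[f=0]$; a Chebyshev/Chernoff bound over the $M$ runs yields the multiplicative $(1\pm\epsilon)$ guarantee with probability $\ge 2/3$, $M$ is polynomial because $NS_\delta[f]\ge1/n^{C}$, and the extra $1/\epsilon$ (so that the bound reads $\epsilon^{3}$, not $\epsilon^{2}$) comes from running the subroutine at fine enough granularity. Multiplying $M$ by the subroutine's expected cost $O(\sqrt n\,\delta\log^{1.5} n/\epsilon)$ reproduces the stated $O\big(\tfrac{\sqrt n\,\delta\log^{1.5}n}{NS_\delta[f]\,\epsilon^{3}}\big)$, and all the bookkeeping is polynomial-time.

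The main obstacle is the subroutine: proving that the adaptive ascending walk with its re-weighting is essentially unbiased \emph{and} has the stated expected query complexity. This is where the descending-ascending structure, monotonicity, a carefully chosen potential function, and an extremal bound for monotone functions must all be combined, and it is the step most sensitive to the precise range of $\delta$; keeping the estimator's variance small enough for a clean multiplicative (rather than merely additive) approximation is the other delicate point.
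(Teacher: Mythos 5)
Your starting point matches the paper's: the descending--ascending factorization of the noise process through the meet of $x$ and $z$, the observation that monotonicity forces the first leg to cross an influential edge, and the identity $NS_{\delta}[f]=2\Pr[f(x)=1\wedge f(z)=0]$. But the estimation strategy you build on top of it has a gap that is not just a missing technical lemma --- the outer loop itself cannot meet the query budget. You propose to draw $M=O\bigl(\tfrac{1}{NS_{\delta}[f]\epsilon^{2}}\bigr)$ sample points and run a query-making subroutine on each. Merely touching that many points costs $\Omega(M)$ queries, which is already the standard-sampling complexity; the theorem's bound $O\bigl(\tfrac{\sqrt{n}\,\delta\log^{1.5}n}{NS_{\delta}[f]\epsilon^{3}}\bigr)$ is \emph{smaller} than $M$ precisely in the regime where the result is nontrivial (one may assume $\epsilon\geq H\sqrt{n}\,\delta\log^{1.5}n$, since otherwise standard sampling already achieves the bound, so the per-sample budget $\sqrt{n}\,\delta\log^{1.5}n/\epsilon$ is a small constant fraction of one query). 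A per-point subroutine with sub-constant expected query cost that remains unbiased with variance small enough for a multiplicative guarantee is exactly the tension you flag as ``the main obstacle,'' and it is not resolvable within your architecture: the whole point of the speedup is to avoid drawing $\Theta(1/(NS_{\delta}[f]\epsilon^{2}))$ labeled samples at all.

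The paper escapes this by never estimating a per-point quantity over uniform points. It writes $NS_{\delta}[f]/2\approx p_A\,p_B$, where $p_A$ is the probability that the descending leg hits an influential edge and $p_B$ is the conditional probability that the ascending leg does not undo the flip. The key lemma is that a descending leg of (random, starting-point-dependent) length hits every influential edge in the middle layers with probability $(1\pm O(\epsilon))\,\delta/2^{n}$, whence $p_A\approx\delta I[f]/2$ and $p_A$ is obtained from the influence estimator of \cite{num21} at cost $\tilde{O}(\sqrt{n}/(I[f]\epsilon^{3}))$. Then $p_B$ --- typically far larger than $NS_{\delta}[f]$ --- is estimated with only $O(1/(p_B\epsilon^{2}))$ conditional samples, each produced by sampling a near-uniform influential edge (descending walk plus binary search, cost $O(\sqrt{n}\log^{1.5}n/(I[f]\epsilon))$) and then sampling the two walk lengths around that edge by rejection sampling so as to match the conditional distribution of the process. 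The product $\tfrac{1}{p_B\epsilon^{2}}\cdot\tfrac{\sqrt{n}\log^{1.5}n}{I[f]\epsilon}$ collapses to the stated bound via $NS_{\delta}[f]\approx\delta I[f]p_B$. If you want to salvage your route, you would need to replace the uniform outer sampling by some conditioning of comparable strength; as written, the proposal does not prove the theorem.
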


Note that computing noise-sensitivity using standard sampling\footnote{Standard sampling refers to the algorithm that picks 
$O \left(\frac{1}{NS_{\delta}[f] \epsilon^2}\right)$ pairs 
$x$ and $z$ as in the definition of noise sensitivity and computes the fraction of pairs
for which $f(x) \neq f(z)$.} requires $O \left(\frac{1}{NS_\delta[f] \epsilon^2} \right)$ samples.  Therefore, for a constant $\epsilon$, we have the most dramatic improvement if $\delta=\frac{1}{n}$, 
in which case, ignoring constant and logarithmic factors, 
our algorithm outperforms standard sampling by a factor of $\sqrt{n}$.

As in \cite{num21}
\footnote{In the following, we discuss how we build on \cite{num21}. We discuss the relation with \cite{num26} in Subsection \ref{possibilities}.}
, our algorithm requires that the noise sensitivity 
of the input function $f$ is larger than a specific threshold $1/n^C$. 
Our algorithm is not sensitive to the value of $C$ as long as it is a constant, 
and we think of $1/n^C$ as a rough initial lower bound known in advance.

We next give lower bounds for approximating three different quantities
of monotone Boolean functions:  the
bias, the influence and the noise sensitivity.  
A priori, it is not clear what kind of lower bounds one could hope for.
Indeed, determining whether a given function is the all-zeros function requires $\Omega(2^n)$ queries in the general function setting, but only $1$ query (of the all-ones input), if the function is promised to be monotone.
Nevertheless, we show that such a dramatic improvement for approximating
these quantities is not possible.


For monotone functions, we are not aware of previous
lower bounds on approximating the bias or noise sensitivity.
Our lower bound on approximating influence
is not comparable to the lower bounds in \cite{num21}, as we will elaborate
shortly.  

We now state our lower bound for approximating the noise
sensitivity: 
\begin{theorem} 
\label{2-1.3}
Suppose $n$ is a sufficiently large integer, $\delta$ 
satisfies $1/n \leq \delta \leq 1/2$, $C_1$ and $C_2$ are constants satisfying $C_1 - 1 > C_2 \geq 0$. Then,
given a monotone 
function $f:\{0,1\}^n \rightarrow \{0,1\}$,  one needs at least $ 
		\Omega\left( \frac{n^{C_2}}{e^{\sqrt{C_1 \log n/2}}} \right)$ 
queries to reliably distinguish\footnote{Here and everywhere else, to reliably distinguish means to distinguish with probability at least $2/3$.}  between the following two cases:
(i) $f$ has noise sensitivity between $\Omega(1/n^{C_1+1})$ and $O(1/n^{C_1})$ and 
(ii) $f$ has  noise sensitivity larger than $\Omega(\min(1,\delta \sqrt{n})/n^{C_2})$.
\end{theorem}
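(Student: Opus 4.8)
The plan is to establish the lower bound via Yao's minimax principle: it suffices to construct two distributions $\mathcal{D}_{\mathrm{lo}}$ and $\mathcal{D}_{\mathrm{hi}}$ over monotone functions $f:\{0,1\}^n\to\{0,1\}$, with every $f$ in the support of $\mathcal{D}_{\mathrm{lo}}$ satisfying the noise-sensitivity bounds of case (i) and every $f$ in the support of $\mathcal{D}_{\mathrm{hi}}$ satisfying those of case (ii), such that no deterministic algorithm issuing $q=o\!\big(n^{C_2}/e^{\sqrt{C_1\log n/2}}\big)$ queries can distinguish a sample of $\mathcal{D}_{\mathrm{lo}}$ from a sample of $\mathcal{D}_{\mathrm{hi}}$ with success probability at least $2/3$. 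Following the ``thin function'' strategy outlined in the introduction, each distribution is supported on monotone functions that agree with the constant~$0$ on all inputs of small Hamming weight and with the constant~$1$ on all inputs of large Hamming weight, and whose behaviour on the remaining ``active'' inputs is governed by a \emph{sparse, randomly placed monotone structure} (concretely, one may take a random monotone DNF whose term-width, term-count, and the Hamming band in which the terms live are functions of $n,\delta,C_1,C_2$); the two distributions use structures of different richness, calibrated so that $\mathcal{D}_{\mathrm{hi}}$ carries exactly enough extra ``activity'' to raise the noise sensitivity from the scale of case (i) up to the $\Theta(\min(1,\delta\sqrt n)/n^{C_2})$ scale of case (ii).

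The first real step is the reduction --- advertised in the abstract --- from this computational statement to a purely combinatorial one. The point is that every query of small weight is answered $0$ and every query of large weight is answered $1$ regardless of the planted structure, so the algorithm learns nothing until it issues an ``active'' query that actually exposes a difference between a $\mathcal{D}_{\mathrm{lo}}$-sample and a $\mathcal{D}_{\mathrm{hi}}$-sample; because the planting is random and sparse, for each fixed query $x$ the probability over $f$ that $x$ exposes such a difference is bounded by some $p=p(n)$, and a hybrid argument over the algorithm's (adaptive) decision tree then caps the distinguishing advantage by $O(qp)$. Hence it is enough to prove the combinatorial assertion that sparse plantings exist for which, simultaneously, (a) monotonicity holds by construction, (b) $NS_\delta[f]$ lands in the prescribed window --- between $\Omega(1/n^{C_1+1})$ and $O(1/n^{C_1})$ for $\mathcal{D}_{\mathrm{lo}}$, and at least $\Omega(\min(1,\delta\sqrt n)/n^{C_2})$ for $\mathcal{D}_{\mathrm{hi}}$ --- and (c) $1/p=\Omega\!\big(n^{C_2}/e^{\sqrt{C_1\log n/2}}\big)$. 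This is exactly where the probabilistic method is deployed: one draws the planting at random, computes $\E[NS_\delta[f]]$ (which comes down to estimating the chance that, for a uniform $x$ and its $\delta$-noisy copy $z$, exactly one of $x,z$ lies in the $1$-set) and $\E[p]$, and then invokes a bounded-difference inequality --- changing one planted term perturbs $f$, and hence $NS_\delta[f]$ and the per-query exposure probability, on only a tiny fraction of the cube --- to conclude that with probability close to $1$ a random planting satisfies (a), (b) and (c) at once, so the two required distributions exist.

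The step I expect to be the main obstacle is this joint calibration of the band location, the band width, and the richness of the planted structure. The active region must be generous enough --- and the planting dense enough --- to force the target noise sensitivity, in particular to realise the $\min(1,\delta\sqrt n)$ behaviour of case (ii), which is precisely the regime in which the algorithm of Theorem~\ref{main theorem} is tight; yet it must be thin and sparse enough that an adaptively chosen query almost never lands on an exposing point; and throughout, monotonicity has to be preserved. Optimising ``informative activity'' against ``rarely exposed'' is the trade-off that produces the sub-polynomial factor $e^{\sqrt{C_1\log n/2}}$ --- intuitively, this factor tracks the anticoncentration of the binomial distribution at the relevant Hamming layer, which sits at distance $\Theta(\sqrt{n\log n})$ from the center, together with the $\Theta(\sqrt{\delta n})$-scale smearing that the noise induces --- so pinning down the constant inside that exponent, while keeping $NS_\delta[f]$ inside the genuine two-sided range of case (i) rather than merely of the right order of magnitude, is the delicate part. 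Once this machinery is in place, the companion lower bounds for influence and bias mentioned in the introduction should follow by re-running the same argument with $NS_\delta[f]$ in step (b) replaced by $I[f]$, respectively the bias, together with the corresponding (and simpler) expectation computations.
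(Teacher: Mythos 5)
Your plan is essentially the paper's proof: the paper takes the case-(i) distribution to be the single truncated threshold function $f_0$ (equal to $1$ exactly above level $n/2+k\sqrt{n\log n}$, with $k\le\sqrt{C_1/8}$ forced by requiring bias $\approx n^{-C_1}$), takes the case-(ii) distribution to be the OR of $f_0$ with a randomly index-permuted sparse monotone DNF ($2^{\sqrt n}/n^{C_2}$ terms of width $\sqrt n$, a modified Talagrand function), proves exactly your reduction — an adaptive algorithm learns nothing until it queries a point of the planted $1$-set at or below the truncation level, and by monotonicity every query may be lifted to that critical level, so a union bound caps the advantage by $q/q_0$ — and then uses the probabilistic method to produce a planting with the required noise sensitivity, with the $\min(1,\delta\sqrt n)$ behaviour coming from the chance that the $\delta$-noise kills at least one of the $\sqrt n$ coordinates of the unique satisfied term. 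Two points where the paper's execution is lighter than what you anticipate: the case-(i) distribution carries no planting at all, so its two-sided noise-sensitivity bound is a direct binomial computation rather than part of the calibration; and ``thinness'' at the critical level (which is where the $e^{\sqrt{C_1\log n/2}}=e^{2k\sqrt{\log n}}$ factor arises, as the excess probability of satisfying a width-$\sqrt n$ term at level $n/2+k\sqrt{n\log n}$ versus level $n/2$) holds deterministically for \emph{every} function in the random family by a union bound over terms, so only the expectation of $NS_\delta$ is needed and no bounded-differences concentration argument is required.
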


\noindent
\textbf{Remark:} For any positive constant $\xi$, we have that
$e^{\sqrt{C_1 \log n/2}} \leq n^{\xi}$.

\noindent
\textbf{Remark:} 
The range of the  parameter $\delta$ can be
divided into two regions of interest. 
In the region $1/n \leq \delta \leq 1/(\sqrt{n} \log n)$, the algorithm from Theorem \ref{main theorem} can distinguish the two cases above with only $\tilde{O}(n^{C_2})$ queries. Therefore its query complexity is optimal up to a factor of $\tilde{O}(e^{\sqrt{C_1 \log n/2}})$.
Similarly, in the region $1/(\sqrt{n} \log n) \leq \delta \leq 1/2$, the standard sampling algorithm can distinguish the two distributions above with only $\tilde{O}(n^{C_2})$ queries. Therefore in this region of interest, standard sampling is optimal up to a factor of $\tilde{O}(e^{\sqrt{C_1 \log n/2}})$.



We define the {\em bias} of a Boolean function as $B[f] \myeq Pr[f(x)=1]$, where $x$ is chosen uniformly at random from $\{0,1\}^n$. It is arguably the most basic property of a Boolean function, so we consider the question of how quickly it can be approximated for monotone functions. 
To approximate the bias using standard sampling, 
one needs $O(1/(B[f] \epsilon^2))$ queries. 
We obtain a lower bound for approximating it similar to the previous theorem:

\begin{theorem} 
\label{2-1.1}
Suppose $n$ is a sufficiently large integer, $C_1$ and $C_2$ are constants satisfying $C_1 - 1 > C_2 \geq 0$. Then
given a monotone
function $f$\blue{$:\{0,1\}^n \rightarrow \{0,1\}$},  one needs at least $ 
		\Omega\left( \frac{n^{C_2}}{e^{\sqrt{C_1 \log n/2}}} \right)$ 
queries to \blue{reliably} distinguish  between the following two cases:
(i) $f$ has bias of $\Theta(1/n^{C_1})$ 
(ii) $f$ has  bias larger than $\Omega(1/n^{C_2})$.
\end{theorem}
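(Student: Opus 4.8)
The plan is to prove the lower bound by Yao's minimax principle: I exhibit a distribution $\mathcal D_{\mathrm{no}}$ supported on monotone functions of bias $\Theta(1/n^{C_1})$ and a distribution $\mathcal D_{\mathrm{yes}}$ supported on monotone functions of bias $\Omega(1/n^{C_2})$, and show that no deterministic algorithm making few queries can tell, with probability $2/3$, which part of $\tfrac12\mathcal D_{\mathrm{no}}+\tfrac12\mathcal D_{\mathrm{yes}}$ its input came from. In fact the construction I have in mind already rules out $o(n^{C_2})$ queries, which is somewhat stronger than the stated bound; I suspect the theorem is phrased with the extra $e^{\sqrt{C_1\log n/2}}$ slack only so that it reads uniformly with the companion lower bounds for influence and for noise sensitivity (Theorem \ref{2-1.3}), where — as I note at the end — that loss really is incurred.

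\textbf{The construction.} Let $k_1=\max\{k:\Pr_x[|x|\ge k]\ge n^{-C_1}\}$, for $x\in\{0,1\}^n$ uniform; standard binomial tail estimates give $k_1=\tfrac n2+\Theta(\sqrt{n\log n})$, hence $\Pr_x[|x|\ge k_1]=(1+o(1))\,n^{-C_1}$ and $k_1/n=\tfrac12+o(1)$. Let $f_0(x)=\mathbf 1[|x|\ge k_1]$ be the single function in the support of $\mathcal D_{\mathrm{no}}$; it has bias $\Theta(1/n^{C_1})$. Put $w=\lfloor C_2\log_2 n\rfloor$ and let $\mathcal D_{\mathrm{yes}}$ be the law of $f_z(x):=f_0(x)\ \vee\ \mathbf 1[x\ge z]$ (coordinatewise $\ge$), where $z$ is a uniformly random point of Hamming weight $w$. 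Each $f_z$ is monotone (a disjunction of monotone functions), and its bias lies between $2^{-w}=\Omega(1/n^{C_2})$ and $2^{-w}+\Theta(1/n^{C_1})=O(1/n^{C_2})$ (using $C_1>C_2$), so $f_z$ is a legitimate ``yes'' instance. The role of the common ``floor'' $f_0$ is that $\mathcal D_{\mathrm{yes}}$ and $\mathcal D_{\mathrm{no}}$ agree except on the width-$w$ random minterm $\{x:x\ge z\}$, and among those points only on the part lying strictly below level $k_1$ — which is exponentially rare.

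\textbf{The distinguishing bound.} Fix any deterministic $q$-query algorithm. As long as every answer it has received is consistent with $f_0$, it is in exactly the state it would be in on input $f_0$, so it issues a fixed sequence $x^{(1)},\dots,x^{(q)}$ of queries (the ones it makes on $f_0$). An answer first becomes inconsistent with $f_0$ only at a query $x^{(i)}$ with $f_z(x^{(i)})=1$ but $f_0(x^{(i)})=0$, i.e.\ with $x^{(i)}\ge z$ and $w\le|x^{(i)}|<k_1$. For a fixed such point $x$, $\Pr_z[x\ge z]=\binom{|x|}{w}/\binom nw\le(|x|/n)^w\le(k_1/n)^w=(\tfrac12+o(1))^w=2^{-w}(1+o(1))=O(1/n^{C_2})$, since $k_1/n=\tfrac12+o(1)$ and $w=O(\log n)$. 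A union bound over the $q$ queries shows that the algorithm's transcript on a random $f_z$ disagrees with its transcript on $f_0$ with probability $O(q/n^{C_2})$ over $z$. Whenever the two transcripts agree the algorithm cannot distinguish the cases at all (it sees the same string on $f_0$ and on $f_z$), so the usual Yao bookkeeping caps its success probability on $\tfrac12\mathcal D_{\mathrm{no}}+\tfrac12\mathcal D_{\mathrm{yes}}$ at $\tfrac12+O(q/n^{C_2})$. Reaching $2/3$ therefore forces $q=\Omega(n^{C_2})$, which implies the claimed $\Omega\!\big(n^{C_2}/e^{\sqrt{C_1\log n/2}}\big)$.

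\textbf{What needs care.} For the bias statement alone the only genuinely quantitative point is the binomial estimate pinning $k_1$ to $(\tfrac12+o(1))n$ tightly enough that $(k_1/n)^w=2^{-w}(1+o(1))$ survives with honest constants; everything else is the bookkeeping above, and adaptivity costs nothing because a transcript consistent with $f_0$ determines all subsequent queries. The substantive difficulty appears only if one insists on a single framework that also yields Theorem \ref{2-1.3} and the influence lower bound: there the ``floor'' must itself be randomized and one needs a monotone ``thin'' function that \emph{simultaneously} has the prescribed small bias and small upper shadow in every level below the threshold (and, for noise sensitivity, a controlled boundary there), which cannot be a single deterministic minterm and must be produced by the probabilistic method — and it is exactly that step that introduces the $e^{\sqrt{C_1\log n/2}}$ factor. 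For Theorem \ref{2-1.1} in isolation the deterministic construction above already suffices, with no such loss.
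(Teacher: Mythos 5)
Your proof is correct, but it takes a genuinely different route from the paper's. The paper proves Theorem \ref{2-1.1} as one instance of a uniform framework: the ``yes'' distribution is the OR of $f_0$ with a randomly permuted copy of a fixed ``thin'' function $F^B$, where $F^B$ is a modified random Talagrand DNF with $2^{\sqrt{n}}/n^{C_2}$ clauses of width $\sqrt{n}$ (Lemma \ref{2-1.7}); the indistinguishability argument (Lemma \ref{2-1.6}) first reduces any adaptive algorithm to a non-adaptive one querying only at the single level $n/2+k\sqrt{n\log n}$, and then union-bounds against the density of $F^B$ at that level, which is $\Theta(e^{\sqrt{C_1\log n/2}}/n^{C_2})$ because each width-$\sqrt{n}$ clause is $e^{2k\sqrt{\log n}}$ times denser at level $n/2+k\sqrt{n\log n}$ than at level $n/2$ --- this is exactly where the $e^{\sqrt{C_1\log n/2}}$ loss enters. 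You instead plant a single random minterm of width $w=\lfloor C_2\log_2 n\rfloor$ on top of the same floor $f_0$ and run a direct transcript-coupling union bound; since $(k_1/n)^{w}=2^{-w}(1+o(1))$ for $w=O(\log n)$, the level boost is negligible and you get the clean $\Omega(n^{C_2})$, strictly stronger than the stated bound. Your diagnosis of the trade-off is also accurate: the single narrow minterm has influence only $\Theta(\log n/n^{C_2})$ and correspondingly small noise sensitivity, so it cannot serve for Theorems \ref{2-1.2} and \ref{2-1.3}, which is why the paper pays the subpolynomial factor for a construction that works for all three quantities at once. In short: your argument is a valid and quantitatively sharper proof of Theorem \ref{2-1.1} in isolation; the paper's buys uniformity across the three lower bounds at the cost of the $e^{\sqrt{C_1\log n/2}}$ factor.
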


Finally we prove a lower bound for approximating influence:
\begin{theorem} 
\label{2-1.2}
Suppose $n$ is a sufficiently large integer, $C_1$ and $C_2$ are constants satisfying $C_1 - 1 > C_2 \geq 0$. Then
given a monotone
function $f$\blue{$:\{0,1\}^n \rightarrow \{0,1\}$}, one needs at least $ 
		\Omega\left( \frac{n^{C_2}}{e^{\sqrt{C_1 \log n/2}}} \right)$ 
queries to \blue{reliably} distinguish  between the following two cases:
(i) $f$ has influence between $\Omega(1/n^{C_1})$ and $O(n/n^{C_1})$
(ii) $f$ has  influence larger than $\Omega(\sqrt{n}/n^{C_2})$.
\end{theorem}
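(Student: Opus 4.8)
The quantity to be lower‑bounded is a (distributional, via Yao's minimax principle) query complexity, so the plan is to build two distributions $\mathcal{D}_{\mathrm{no}}$ and $\mathcal{D}_{\mathrm{yes}}$ over monotone functions, supported on case (i) and case (ii) respectively, and to show that no deterministic $q$‑query algorithm with $q=O\!\big(n^{C_2}/e^{\sqrt{C_1\log n/2}}\big)$ distinguishes a sample of $\mathcal{D}_{\mathrm{no}}$ from a sample of $\mathcal{D}_{\mathrm{yes}}$ with advantage more than $1/10$. This is exactly what the paper's ``thin‑function'' machinery is designed to produce; the construction here is a variant of the ones behind Theorems \ref{2-1.1} and \ref{2-1.3}, with the influence now playing the role played there by the bias / noise sensitivity.

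\textbf{The two distributions.} I would take $t_1:=\lceil n/2+\sqrt{(C_1/2)\,n\ln n}\,\rceil$ and $T(x):=\mathbf{1}[\,|x|\ge t_1\,]$; a binomial‑tail estimate gives $I[T]=\Theta(n^{1/2-C_1})$, which lies in the window $[\,\Omega(n^{-C_1}),\,O(n^{1-C_1})\,]$ of case (i), so $\mathcal{D}_{\mathrm{no}}$ is the point mass on $T$. For $\mathcal{D}_{\mathrm{yes}}$, set $t_2:=\lceil\sqrt n/2\rceil$ and $M:=\lceil n^{-C_2}2^{t_2}\rceil$ (a positive integer $\le\binom n{t_2}$ for large $n$, using $C_2\ge 0$), draw a uniformly random $M$‑element subset $A$ of the Hamming level $\{y:|y|=t_2\}$, and output $f_A:=T\vee\mathbf{1}_{U(A)}$, where $U(A):=\{x:x\ge y\text{ for some }y\in A\}$; each $f_A$ is monotone. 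The probabilistic method enters in showing that with probability $1-o(1)$ over $A$ one has $I[f_A]=\Omega(\sqrt n\,n^{-C_2})$, whereupon $\mathcal{D}_{\mathrm{yes}}$ is $\mathcal{D}_{\mathrm{yes}}$ conditioned on this event (which perturbs the transcript distribution by only $o(1)$): each planted up‑set $U(\{y\})$ contributes about $t_2\cdot\Theta(2^{n-t_2})$ new bichromatic edges — over the $t_2$ boundary directions $i\in\mathrm{supp}(y)$ and the $\Theta(2^{n-t_2})$ points $x\ge y$ with $|x|<t_1$ — hence $\Theta(t_2\,M\,2^{-t_2})=\Theta(\sqrt n\,n^{-C_2})$ to $I[f_A]$; the pairwise overlaps of the $U(\{y\})$'s and the influence possibly destroyed near level $t_1$ (at most $O(M(t_1/n)^{t_2})\cdot I[T]$) are both $o(\sqrt n\,n^{-C_2})$ — the latter because $C_1>1$, which follows from $C_1-1>C_2\ge 0$ — and a bounded‑differences argument over the $M$ choices composing $A$ supplies the concentration.

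\textbf{Indistinguishability.} Since $T$ and $f_A$ agree except on inputs $x$ with $|x|<t_1$ that lie in $U(A)$, a deterministic algorithm run on $f_A$ produces the same transcript (hence output) as on $T$ unless one of its queries is such an $x$, so the distinguishing advantage is at most $p:=\Pr_A[\text{the algorithm queries some }x\text{ with }|x|<t_1,\ x\in U(A)]$. For fixed $x$ with $|x|=w<t_1$ there are $\binom w{t_2}$ level‑$t_2$ points below $x$, so $\Pr_A[x\in U(A)]\le M\binom w{t_2}/\binom n{t_2}\le M(t_1/n)^{t_2}$; with $t_1/n=1/2+\sqrt{(C_1/2)\log n/n}$ and $t_2=\lceil\sqrt n/2\rceil$, the inequality $(1+\theta)^m\le e^{\theta m}$ gives $(t_1/n)^{t_2}\le 2^{-t_2}e^{(1+o(1))\sqrt{C_1\log n/2}}$, hence $\Pr_A[x\in U(A)]\le n^{-C_2+o(1)}e^{\sqrt{C_1\log n/2}}$. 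Revealing $A$ lazily as the (possibly adaptive) algorithm issues its $q$ queries — each ``miss'' discards at most $\binom{t_1}{t_2}=o(\binom n{t_2})$ candidate minimal points, so the per‑step hit probability is inflated by only a $1+o(1)$ factor — a union bound yields $p\le(1+o(1))\,q\cdot n^{-C_2+o(1)}e^{\sqrt{C_1\log n/2}}$, which is $<1/10$ once $q=O\!\big(n^{C_2}/e^{\sqrt{C_1\log n/2}}\big)$ with a small enough constant, sub‑polynomial factors being absorbed into the $\Omega(\cdot)$ of the statement. Adding the $o(1)$ cost of conditioning $\mathcal{D}_{\mathrm{yes}}$ and invoking Yao's principle completes the proof.

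\textbf{Where the difficulty lies.} The two delicate points are: (1) making the lazy‑revelation union bound rigorous against adaptive algorithms, which hinges on $q\cdot\binom{t_1}{t_2}/\binom n{t_2}=q\cdot 2^{-\Theta(\sqrt n)}e^{O(\sqrt{\log n})}=o(1)$ — true for our $q$ precisely because $2^{-\Theta(\sqrt n)}$ dominates every polynomial — so that conditioning on earlier misses changes nothing essential; and (2) the probabilistic estimate that a random planted $A$ produces influence $\Omega(\sqrt n/n^{C_2})$, for which one must verify that the up‑closures $U(\{y\})$ barely overlap and that merging with $T$ near level $t_1$ does not cancel the contribution of $A$'s minimal points (this is where $C_1-1>C_2$ is used). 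The surrounding arithmetic — calibrating $t_1$ so that $I[T]$ lands in the case‑(i) window while $(t_1/n)^{t_2}$ produces exactly the factor $e^{\sqrt{C_1\log n/2}}$, and taking $t_2=\Theta(\sqrt n)$ so that both the influence boost $t_2 M 2^{-t_2}$ and this exponential come out right — is routine and parallels the calibrations in Theorems \ref{2-1.1} and \ref{2-1.3}.
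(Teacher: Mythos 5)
Your proposal is essentially the paper's own proof in different clothing: your $T$ is its truncation function $f_0$ (calibrated to bias $\Theta(n^{-C_1})$, hence influence in the case-(i) window), and your random union of up-closures of $M$ weight-$t_2$ points \emph{is} its random DNF of $\Theta(2^{\sqrt n}/n^{C_2})$ AND-clauses on $\Theta(\sqrt n)$ indices, with the same thinness bound $n^{-C_2}e^{\sqrt{C_1\log n/2}}$ falling out of the same $(t_1/n)^{t_2}$ computation. The only real differences are presentational: your transcript-coupling plus union bound over the queries the deterministic algorithm makes on $T$ replaces the paper's explicit reduction to a non-adaptive algorithm querying a single level (Lemma \ref{2-2.3}) --- and for a deterministic algorithm the all-zeros-path queries below level $t_1$ are already fixed, so the lazy-revelation machinery you worry about in point (1) is not needed --- and your conditioning of $\mathcal{D}_{\mathrm{yes}}$ on a high-influence event replaces the paper's probabilistic-method extraction of one good function composed with a random coordinate permutation. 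One quantitative slip worth fixing: with $M=\lceil n^{-C_2}2^{t_2}\rceil$, the expected pairwise-overlap correction to the new boundary edges is of order $t_2M^2 2^{n-2t_2}/2^{n-1}=\Theta(\sqrt n\,n^{-2C_2})$, which is \emph{not} $o(\sqrt n\,n^{-C_2})$ when $C_2=0$, a case the theorem explicitly allows; either shrink $M$ by a sufficiently large constant factor, or do as the paper does and count only points covered by exactly one up-closure (those events are disjoint over the $M$ terms and each retains a constant fraction of its mass), after which the $\Omega(\sqrt n/n^{C_2})$ influence bound goes through for all $C_2\geq 0$.
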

This gives us a new sense in which the algorithm family in \cite{num21} is close to optimal, because for a function $f$ with influence $\Omega(\sqrt{n}/n^{C_2})$ this algorithm makes $\tilde{O}(n^{C_2})$ queries to estimate the influence up to any constant factor.

Our lower bound is incomparable to the lower bound in \cite{num21}, 
which makes the stronger requirement that $I[f] \geq \Omega(1)$, 
but gives a bound that is only a polylogarithmic 
factor smaller than the runtime of the algorithm in \cite{num21}. 
There are many possibilities 
for algorithmic bounds
that were compatible with the lower bound in \cite{num21}, 
but are eliminated with our lower bound. 
For instance, prior to this work, it was conceivable that an algorithm 
making as little as $O(\sqrt{n})$ queries could give a constant 
factor approximation to the influence of \textbf{any} monotone input function 
whatsoever. Our lower bound shows that not only is this impossible, 
no algorithm that makes $O(n^{C_2})$ 
queries for any constant $C_2$ can accomplish this either.

\subsection{Algorithm overview}
Here, we give the algorithm in Theorem \ref{main theorem} together with the subroutines it uses. Additionally, we give
an informal overview of the proof of correctness and the analysis of run-time and query complexity, which are presented in Section \ref{NS algo}. 

First of all, with $x$ and $z$ described as above, using a standard pairing argument we argue that $NS_{\delta}[f]=2 \cdot Pr[f(x)=1 \land f(z)=0]$. In other words, we can focus only on the case when the value of the function flips from one to zero. 

We introduce the {\em descending-ascending view of noise sensitivity}
(described in Subsection \ref{subsec:view}), which, roughly speaking, 
views the noise process as decomposed into a first \blue{phase} that operates only on the locations in $x$ that are $1$, and a second
\blue{phase} that operates only on the locations in $x$ that are set
to $0$. Formally, we define:  

\hfill \newline
\noindent\fbox{
    \parbox{\textwidth}{
\textbf{Process $D$}
\setlist{nolistsep}
\begin{itemize}[noitemsep]
\item Pick $x$ uniformly at random from $\{0,1\}^n$. Let $S_0$ be the set of indexes $i$ for which $x_i=0$, and conversely let $S_1$ be the rest of indexes.
\item \textbf{Phase 1:} go through all the indexes in $S_1$ in a random order, and flip each with probability $\delta$. Form the descending path $P_1$ from all the intermediate results. Call the endpoint $y$.
\item \textbf{Phase 2:} start at $y$, and flip each index in $S_0$ with probability $\delta$. As before, all the intermediate results form an ascending path $P_2$, which ends in $z$.
\end{itemize}
}}
\hfill \newline
This process gives us a path from
$x$ to $z$ that can be decomposed into two segments, such that the first part,
$P_1$, descends in the hypercube, and the second part $P_2$ ascends 
in the hypercube.  




Since $f$ is monotone, for $f(x)=1 $ and $f(z)=0$ to be the case, it is necessary, though not sufficient, that $f(x)=1$ and $f(y)=0$, which happens whenever $P_1$ hits an influential edge. 
Therefore we break the task of estimating the probability of $f(x) \neq f(z)$ into computing the product of:
\begin{itemize}
\item 
The probability that $P_1$ hits an influential edge, specifically,
the probability that $f(x)=1$ and $f(y)=0$, which we refer to as $p_A$. 
\item 
The probability that $P_2$ does not hit any influential edge,
 given that $P_1$ hits an influential edge:  specifically,
the probability that given $f(x)=1$ and $f(y)=0$, it is the case that $f(z)=0$.
We refer to this probability as $p_B$.

\end{itemize} 
The above informal definitions of $p_A$ and $p_B$ ignore some technical complications.
Specifically, the impact of certain ``bad events" is considered in our analysis. We redefine $p_A$ and $p_B$ precisely in Subsection~\ref{pa,pb}. 

To define those bad events, we use the following two values, which we reference in our algorithms: $t_1$ and $t_2$. Informally, $t_1$ and $t_2$ have the following intuitive meaning. A typical vertex $x$ of the hypercube has Hamming weight $L(x)$ between $n/2-t_1$ and $n/2+t_1$. A typical Phase 1 path from process $D$ will have length at most $t_2$. To achieve this, we assign $t_1\myeq \eta_1\sqrt{n \log n} $ and $t_2 \myeq n \delta (1+3\eta_2 \log n)$, where $\eta_1$ and $\eta_2$ are certain constants. We also define $M$ to be the set of edges $e=(v_1, v_2)$, for which both $L(v_1)$ and $L(v_2)$ are between and $n/2-t_1$ and $n/2+t_1$. Again, most of the edges in the hypercube are in $M$. 

The utility of this ascending-descending view is that given\footnote{Our analysis requires that only $\delta \leq 1/(\sqrt{n} \log n)$ as in the statement of Theorem \ref{main theorem}, however the analysis of the case $1/(\sqrt{n} \log^{2} n) \leq \delta \leq 1/(\sqrt{n} \log n)$ involves technical subtleties that
we ignore here.} $\delta \leq 1/(\sqrt{n} \log^{2} n)$, 
it is the case that $t_2$ will be shorter than $O(\sqrt{n}/\log n)$. Therefore, typically, the path $P_1$ is also shorter than $O(\sqrt{n} /\log n)$.
Similar short descending paths on the hypercube have been studied before: 
In \cite{num21}, paths of such lengths were used to 
estimate the number of influential edges by analyzing the probability
that a path would hit such an edge.  One useful insight given
by \cite{num21} is that 
the probability of hitting almost every single influential 
edge is roughly the same. 

However, the results in \cite{num21} cannot be immediately applied to analyze $P_1$, 
because (i) $P_1$ does not have a fixed length, 
but rather its lengths form a probability distribution, (ii) this probability distribution also depends on the starting point $x$ of $P_1$. 
We build upon the techniques in \cite{num21} to overcome these difficulties, 
and prove that again, roughly speaking, 
for almost every single influential edge, 
the probability that $P_1$ hits it depends very little on the location
of the edge, and our proof also computes this probability. 
This allows us to prove that $p_A \approx \delta I[f]/2$. Then, using the algorithm in \cite{num21} to estimate $I[f]$, we estimate $p_A$. 

Regarding $p_B$, we estimate it by approximately sampling paths $P_1$ and $P_2$ that would arise from process $D$,  
conditioned on that $P_1$ hits an influential edge. To that end, we first sample an influential edge $e$ that $P_1$ hits. Since $P_1$ hits almost every single influential edge with roughly the same probability, 
we do it by sampling $e$ approximately uniformly from among influential edges. 
For the latter task, we build upon the result in 
\cite{num21} as follows: 
As we have already mentioned, the algorithm in \cite{num21} samples descending paths of a fixed length to estimate the influence.  For those paths
that start at an $x$ for which $f(x)=1$ and end at a $z$ for which
$f(z)=0$, we add a binary search step in order to locate the influential 
edge $e$ that was hit by the path. \blue{Thus, we have the following algorithm:}

\hfill \newline
\noindent\fbox{
    \parbox{\textwidth}{
\textbf{Algorithm $\mathcal{A}$} (given oracle access to a monotone function $f:\{0,1\}^n \rightarrow \{0,1\}$ and a parameter $\epsilon$)
\setlist{nolistsep}
\begin{enumerate}[noitemsep]
\item Assign $w=\frac{\epsilon}{3100 \eta_1} \sqrt{\frac{n}{\log n}}$ 
\item Pick $x$ uniformly at random from $\{0,1\}^n$.
\item Perform a descending walk $P_1$ downwards in the hypercube starting at $x$. Stop at a vertex $y$ either after $w$ steps, or if you hit the all-zeros vertex. 
\item If $f(x)=f(y)$ output FAIL.
\item If $f(x) \neq f(y)$ perform a binary search on the path $P_1$ and find an influential edge $e_{inf}$.
\item If $e_{inf} \in M$ return $e_{inf}$. Otherwise output FAIL.
\end{enumerate}
}}
\hfill \newline
Finally, once we have obtained a roughly uniformly random influential edge $e$, we sample a path $P_1$ from among those that hit it. Interestingly, we show that this can be accomplished by a simple exponential time algorithm that makes no queries to $f$. However, the constraint on the run-time of our algorithm forces us to follow a different approach:

An obvious way to try to quickly sample such a path is to perform two random walks of lengths $w_1$ and $w_2$ in opposite directions from the endpoints of the edge, and then concatenate them into one path. However, to do this, one needs to somehow sample the lengths $w_1$ and $w_2$. This problem is not trivial, since longer descending paths are more likely to hit an influential edge, which biases the distribution of the path lengths towards longer ones. 

To generate $w_1$ and $w_2$ according to the proper distribution, we first sample a path $P_1$ hitting any edge at the same {\em layer}\footnote{We say that edges $e_1$ and $e_2$ are on the same layer if and only if their endpoints have the same Hamming weights. We denote the layer an edge $e$ belongs to as $\Lambda_e$.} $\Lambda_e$ as $e$.
We accomplish this by designing an algorithm that uses rejection sampling. The algorithm samples short descending paths from some conveniently chosen distribution, until it gets a path hitting the desired layer. Specifically, the algorithm is the following (recall that we use $L(x)$ to denote the Hamming weight of $x$ which equals to the number of indices $i$ on which $x_i=1$, and we use the symbol $\Lambda_e$ to denote the whole layer of edges that have the same endpoint levels as $e$):

\hfill \newline
\noindent\fbox{
    \parbox{\textwidth}{
\textbf{Algorithm $\mathcal{W}$} (given an edge $e\myeq(v_1, v_2)$ so $v_1 \preceq v_2$)
\setlist{nolistsep}
\begin{enumerate}[noitemsep]
\item Pick an integer $l$ uniformly at random among the integers in $[L(v_1) ,L(v_1)+t_2-1]$. Pick a vertex $x$ randomly at level $l$.
\item As in phase 1 of the noise sensitivity process, traverse in random order through the indices of $x$ and with probability $\delta$, flip each index that equals to one to zero. The intermediate results form a path $P_1$, and we call its endpoint $y$.
\item If $P_1$ does not intersect $\Lambda_e$ go to step $1$.
\item Otherwise, output $w_1=L(x)-L(v_1)$ and $w_2=L(v_2)-L(y)$.
\end{enumerate}
}}
\hfill \newline
Recall that $t_2$ has a technical role and is defined to be equal 
$n \delta (1+3\eta_2 \log n)$, where $\eta_2$ is a certain constant. $t_2$ is chosen to be
long enough that it is longer than most paths $P_1$, but short enough to make the sampling in $\mathcal{W}$ efficient. Since the algorithm involves short descending paths, we analyze this algorithm building upon the techniques we used to approximate $p_A$. 

\blue{
After obtaining a random path going through the same layer as $e$, we show how to transform it using the symmetries of the hypercube, into a a random path $P_1$ going through $e$ itself. Additionally, given the endpoint of $P_1$, we sample the path $P_2$ just as in the process $D$.
Formally, we use the following algorithm:}

\hfill \newline
\noindent\fbox{
    \parbox{\textwidth}{
\setlist{nolistsep}
\textbf{Algorithm $\mathcal{B}$} (given an influential edge $e$)
\begin{enumerate}[noitemsep] 
\item Use $\mathcal{W}(e)$ to sample $w_1$ and $w_2$.
\item Perform an ascending random walk of length $w_1$ starting at $v_1$ and call its endpoint $x$. Similarly, perform a descending random walk starting at $v_2$ of length $w_2$, call its endpoint $y$.
\item Define $P_1$ as the descending path that results between $x$ and $y$ by concatenating the two paths from above, oriented appropriately, and edge $e$. 
\item Define $P_2$ just as in phase 2 of our process starting at $y$. Consider in random order all the zero indices $y$ has in common with $x$ and flip each with probability $\delta$.
\item Return $P_1$ ,$P_2$, $x$, $y$ and $z$. 
\end{enumerate}
}}
\hfill \newline
\blue{
We then use sampling to estimate which fraction of the paths $P_2$ continuing these $P_1$ paths 
does not hit an influential edge. This allows us to estimate $p_B$, which, combined with our estimate for $p_A$, gives us an approximation for $NS_{\delta}[f]$.
}
Formally, we put all the previously defined subroutines together as follows:

\hfill \newline
\noindent\fbox{
    \parbox{\textwidth}{
\textbf{Algorithm for estimating noise sensitivity.} (given oracle access to a monotone function $f$\blue{$:\{0,1\}^n \rightarrow \{0,1\}$})
\setlist{nolistsep}
\begin{enumerate}[noitemsep] 
\item \blue{Using the algorithm from \cite{num21} as described in Theorem \ref{estimate influence}, compute an approximation to the influence of $f$ to within a multiplicative factor of ($1 \pm \epsilon/33$)}. This gives us $\tilde{I}$. 
\item Compute $\tilde{p}_A:=\delta \tilde{I}/2$.
\item Initialize $\alpha:=0$ and $\beta:=0$. Repeat the following until $\alpha=\frac{768 \ln 200}{\epsilon^2}$.
\begin{itemize}
\item Use algorithm $\mathcal{A}$ from Lemma \ref{lemma edge sampler} repeatedly to successfully sample an edge $e$.
\item From Lemma \ref{lemma 3} use the algorithm $\mathcal{B}$, giving it $e$ as input, and sample $P_1$, $P_2$, $x$, $y$ and $z$.
\item If it is the case that $f(x)=1$ and $f(z)=0$, then $\alpha:=\alpha+1$.
\item $\beta:=\beta+1$. 
\end{itemize}
\item Set $\tilde{p}_B=\frac{\alpha}{\beta}$.
\item Return $2 \tilde{p}_A \tilde{p}_B$.
\end{enumerate}
}}
\subsection{Lower bound techniques}
We use the same technique to lower bound algorithms
which approximate any of the following three quantities: 
the noise sensitivity, influence and bias. 

For concreteness, let us first focus on approximating the bias.
Recall that one can distinguish
the case where the bias is $0$ from the bias being $1/2^n$ using a single
query.   Nevertheless, we show that for the most part, no
algorithm for estimating
the bias can do much better than the random sampling approach.

We construct two probability distributions $D_1^B$ and $D_2^B$ 
that are relatively hard to distinguish but have drastically different biases. 
To create them, we use a special monotone function $F^B$, which we will explain later how to obtain. 
We pick a function from $D_2^B$ by taking $F^B$, randomly permuting the indices of its input, and finally ``truncating" it by setting it to one on all values $x$ for on levels higher than $l_0$, where $l_0$ is some fixed threshold.     

We form $D_1^B$ even more simply. We take the all-zeros function and truncate it at the same threshold $l_0$. The threshold $l_0$ is chosen in a way that this function in $D_1^B$ has a sufficiently small bias. 

The purpose of truncation is to prevent a distinguisher from gaining information by accessing the values of the function on the high-level vertices of the hypercube. Indeed, if there was no truncation, one could tell whether they have access to the all-zeros function by simply querying it on the all-ones input. Since $F^B$ is monotone, if it equals to one on at least one input, then it has to equal one on the all-ones input.

The proof has two main lemmas: The first one is computational and says that if $F^B$ is ``thin" 
then $D_1^B$ and $D_2^B$ are hard to reliably distinguish. 
We say that a function is ``thin" if it equals to one on only a small fraction of the points on the level $l_0$.
To prove the first lemma, we show that one could transform any adaptive algorithm for distinguishing $D_1^B$ from $D_2^B$ into an algorithm that is just as effective, is non-adaptive and queries points only on the layer $l_0$. 

To show this, we observe that, because of truncation, distinguishing a function 
in $D_2^{B}$ from a function in $D_1^{B}$ is in a certain sense equivalent to finding 
a point with level at most $l_0$ on which the given function evaluates to one. 
We argue that for this setting, adaptivity does not help. 
 Additionally, if $x \preceq y$ and both of them have levels at most $l_0$ then, since $f$ is monotone, $f(x)=1$ implies that $f(y)=1$ (but not necessarily the other way around). Therefore, for finding a point on which the function evaluates to one, it is never more useful to query $x$ instead of $y$.

Once we prove that no algorithm can do better than a non-adaptive algorithm that only queries points on the level $l_0$, we use a simple union bound to show that any such algorithm cannot be very effective for distinguishing our distributions. 

Finally, we need to show that there exist functions that are ``thin" and simultaneously have a high bias. This is a purely combinatorial question and is proven in our second main lemma. We build upon Talagrand random functions that were first introduced in \cite{num25}. In \cite{num9} it was shown that they are very sensitive to noise, which was applied for property testing lower bounds \cite{num18}. A Talagrand random DNF consists of $2^{\sqrt{n}}$ clauses of $\sqrt{n}$ indices chosen randomly with replacement. We modify this construction by picking the indices without replacement and generalize it by picking $2^{\sqrt{n}}/n^{C_2}$ clauses, where $C_2$ is a non-negative constant. We show that these functions are ``thin", so they are appropriate for our lower bound technique. 

``Thinness" allows us to conclude that $D_1^B$ and $D_2^B$ are 
hard to distinguish from each other. 
We then prove that they have drastically different biases. We do 
the latter by employing the probabilistic method and showing 
that in expectation our random function has a large enough bias. 
We handle influence and noise sensitivity analogously, 
specifically by showing that that as we pick fewer clauses, 
the expected influence and noise sensitivity decrease proportionally. 
We prove this by dividing the points where one of these random functions equals to one into two regions: (i) the region where only one clause is true and (ii) a region where more one clause is true. Roughly speaking, we show that the contribution from the points in (i) is sufficient to obtain a good lower bound on the influence and noise sensitivity.

\subsection{Possibilities of improvement?}
\label{possibilities}
In \cite{num26} (which is the journal version of \cite{num21}), it was shown that using the chain decomposition of the hypercube, one can improve the run-time of the algorithm to $O\left( \frac{\sqrt{n}}{\epsilon^2 I[f]}\right)$ and also improve the required lower bound on $I[f]$ to be $I[f]\geq \exp(-c_1 \epsilon^2 n+c_2 \log(n/\epsilon))$ for some constant $c_1$ and $c_2$ (it was $I[f] \geq 1/n^{C}$ for any constant $C$ in \cite{num21}). Additionally, the algorithm itself was considerably simplified.

A hope is that techniques based on the chain decomposition could help improve the algorithm in Theorem \ref{main theorem}. However, it is not clear how to
generalize our approach to use these techniques, since the ascending-descending view is a natural way to express noise sensitivity in terms of random walks, and it is not obvious whether one can replace these walks with chains of the hypercube.

\section{Preliminaries}
\subsection{Definitions}
\subsubsection{Fundamental definitions and lemmas pertaining to the hypercube.}
\begin{definition}
We refer to the poset over $\{0,1\}^n$ as the \textbf{$n$-dimensional hypercube}, 
viewing the domain as vertices of a graph, in which 
two vertices are connected by an edge if and only if the corresponding elements of $\{0,1\}^n$ differ in precisely one index.
For $x=(x_1,\ldots,x_n)$ 
and $y=(y_1,\ldots,y_n)$ in $\{0,1\}^n$, 
we say that $x \preceq y$ if and only if for all $i$ in $[n]$ it is the
case that $x_i \leq y_i$.
\end{definition}
\begin{definition}
The \textbf{level of a vertex} $x$ on the hypercube is the hamming weight of $x$, or in other words number of $1$-s in $x$. We denote it by $L(x)$.
\end{definition}
We define the set of edges that are in the same ``layer" of the hypercube as a given edge:
\begin{definition}
For an arbitrary edge $e$ suppose $e=(v_1,v_2)$ and $v_2 \preceq v_1$. We denote $\Lambda_{e}$ to be the set of all edges $e'=(v_1', v_2')$, so that $L(v_1) = L(v_1')$ and $L(v_2) = L(v_2')$. 
\end{definition}
The size of $\Lambda_{e}$ is $L(v_1){{n}\choose{L(v_1)}}$. The concept of $\Lambda_e$ will be useful because we will deal with paths that are symmetric with respect to change of coordinates, and these have an equal probability of hitting any edge in $\Lambda_e$.
\begin{definition}
\blue{
As we view the hypercube as a graph, we will often refer to paths on it. By referring to a path $P$ we will, depending on the context, refer to its set of vertices or edges. We call the path \textbf{descending} if for every pair of consecutive vertices $v_i$ and $v_{i+1}$, it is the case that $v_{i+1} \prec v_{i}$. Conversely, if the opposite holds and $v_{i} \prec v_{i+1}$, we call the path \textbf{ascending}. We consider an empty path to be vacuously both ascending and descending. We define the length of a path to be the number of edges in it, and denote it by $|P|$. We say we \textbf{take a descending random walk of length $w$ starting at $x$}, if we pick a uniformly random descending path of length $w$ starting at $x$.
}
\end{definition}

Descending random walks over the hyper-cube were used in an essential
way in \cite{num21} and were central for the recent advances in monotonicity testing algorithms \cite{num22, num23, num24}.

\begin{lemma}[Hypercube Continuity Lemma]
\label{continuity}

Suppose \blue{$n$ is a sufficiently large positive integer,} $C_1$ is a constant and we are given $l_1$ and $l_2$ satisfying:

$$
\frac{n}{2}
-\sqrt{C_1 n \log(n)}
\leq l_1
\leq l_2
\leq
\frac{n}{2}
+\sqrt{C_1 n \log(n)}
$$

If we denote $C_2 \myeq \frac{1 }{10 \sqrt{C_1}}$, then for any $\xi$ satisfying $0 \leq \xi \leq 1$,  if it is the case that
$
l_2 - l_1 \leq C_2 \xi \sqrt{\frac{n}{\log(n)}}
$, then, for large enough $n$, it is the case that $
1-\xi
\leq
\frac{{{n}\choose{l_1}}}{{{n}\choose{l_2}}}
\leq 1+\xi
$
\end{lemma}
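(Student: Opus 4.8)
The plan is to estimate the ratio $\binom{n}{l_1}/\binom{n}{l_2}$ by writing it as a telescoping product of consecutive-binomial ratios and bounding the whole product. Concretely, since $l_1 \le l_2$, I would write
\[
\frac{\binom{n}{l_1}}{\binom{n}{l_2}}
= \prod_{l=l_1}^{l_2-1} \frac{\binom{n}{l}}{\binom{n}{l+1}}
= \prod_{l=l_1}^{l_2-1} \frac{l+1}{n-l}.
\]
Each factor $\frac{l+1}{n-l}$ is close to $1$ because $l$ is within $\sqrt{C_1 n\log n}$ of $n/2$: writing $l = n/2 + a$ with $|a| \le \sqrt{C_1 n \log n} + 1$, the factor is $\frac{n/2 + a + 1}{n/2 - a}$, which differs from $1$ by roughly $\frac{2a+1}{n/2 - a} = O\!\bigl(\sqrt{C_1 \log n / n}\bigr)$. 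So each of the at most $l_2 - l_1 \le C_2 \xi \sqrt{n/\log n}$ factors is $1 + O(\sqrt{C_1\log n/n})$ in absolute deviation.

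Next I would combine the per-factor bound with the bound on the number of factors. Taking logarithms, $\bigl|\ln\frac{\binom{n}{l_1}}{\binom{n}{l_2}}\bigr| \le \sum_{l=l_1}^{l_2-1} \bigl|\ln\frac{l+1}{n-l}\bigr|$, and each summand is at most $c\sqrt{C_1 \log n/n}$ for an absolute constant $c$ (using $|\ln(1+u)| \le 2|u|$ for small $u$). Hence the total is at most
\[
(l_2 - l_1)\cdot c\sqrt{\tfrac{C_1\log n}{n}} \le C_2 \xi \sqrt{\tfrac{n}{\log n}} \cdot c\sqrt{\tfrac{C_1 \log n}{n}} = c\,C_2\sqrt{C_1}\,\xi.
\]
With the definition $C_2 = \frac{1}{10\sqrt{C_1}}$ this is at most $\frac{c}{10}\xi$, and choosing constants so that $\frac{c}{10} \le 1$ (tracking $c$ explicitly through $|\ln(1+u)|\le 2|u|$ valid once $n$ is large enough that $|u| \le 1/2$), I get $\bigl|\ln\frac{\binom{n}{l_1}}{\binom{n}{l_2}}\bigr| \le \xi$, so the ratio lies in $[e^{-\xi}, e^{\xi}]$. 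Finally, since $0 \le \xi \le 1$, I would use $e^{-\xi} \ge 1-\xi$ and the slightly more careful $e^{\xi} \le 1 + \xi$... which is false in general, so instead I would not exponentiate the upper bound naively: better to keep the linear estimate $\bigl|\frac{\binom{n}{l_1}}{\binom{n}{l_2}} - 1\bigr| \le$ (sum of per-factor deviations, expanded multiplicatively) and bound the product $\prod(1+u_i) - 1 \le e^{\sum u_i} - 1$, then absorb the constant slack by starting from a bound of $\xi/2$ on $\sum|u_i|$ and using $e^{\xi/2}-1 \le \xi$ for $\xi \le 1$. Symmetrically $1 - \prod(1-|u_i|) \le \xi$. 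Either route works; the cleanest is to show $\sum_{l} |\ln\frac{l+1}{n-l}| \le \ln(1+\xi)$ directly (possible since $\ln(1+\xi) \ge \xi/2$ for $\xi\in[0,1]$ and we have a free factor of $10$).

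The main obstacle — really the only non-routine point — is bookkeeping the absolute constants so that the $\frac{1}{10\sqrt{C_1}}$ in the definition of $C_2$ actually suffices after all the $O(\cdot)$ estimates are made explicit; in particular one must verify that the per-factor logarithmic deviation is genuinely at most (something like) $\frac{2(|a|+1)}{n/2-|a|} \le \frac{5\sqrt{C_1 n\log n}}{n} \le 5\sqrt{C_1\log n/n}$ once $n$ is large enough that $|a| \le n/4$, and then $(l_2-l_1)\cdot 5\sqrt{C_1\log n/n} \le 5 C_2\sqrt{C_1}\,\xi = \xi/2 \le \ln(1+\xi)$. The "sufficiently large $n$" hypothesis is used exactly to guarantee $\sqrt{C_1 n \log n} \le n/4$ (so all levels stay in, say, $[n/4, 3n/4]$ and the linearizations of $\ln(1+u)$ are valid) and to absorb lower-order terms; none of this requires any idea beyond careful estimation, so I would state the per-factor bound as a one-line sub-claim and then do the telescoping product.
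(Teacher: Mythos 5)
Your proof is correct and follows essentially the same route as the paper's: both expand $\binom{n}{l_1}/\binom{n}{l_2}$ as a telescoping product of consecutive ratios, bound each factor by $1+O(\sqrt{C_1\log n/n})$ using the proximity of every level to $n/2$, and multiply by the count $l_2-l_1\le C_2\xi\sqrt{n/\log n}$ so that the factor of $10$ in $C_2$ absorbs the constants (the paper closes with $e^{\xi/2}\le 1+\xi$, you with $\xi/2\le\ln(1+\xi)$ --- the same slack). The only organizational difference is that the paper splits into three cases according to the position of $l_1,l_2$ relative to $n/2$ (so that in each case all factors lie on one side of $1$) and transfers the remaining cases by the symmetry $l\mapsto n-l$, whereas you take absolute values of logarithms uniformly; the one detail to tighten is that your per-factor constant $5$ needs the denominator $n/2-|a|\ge(1/2-o(1))n$ (which holds since $|a|=o(n)$), not merely $n/4$.
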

\begin{proof}
See Appendix C, Subsection \ref{proof 1}.
\end{proof}

\subsubsection{Fundamental definitions pertaining to Boolean functions}
We define monotone functions over the $n$-dimensional
hypercube:
\begin{definition}
Let $f$ be a function $\{0,1\}^n \rightarrow \{0,1\}^n$. We say that $f$ is \textbf{monotone} if for any $x$ and $y$ in $\{0,1\}^n$, $x \preceq y$ implies that $f(x) \leq f(y)$
\end{definition}
Influential edges, and influence of a function are defined as follows:
\begin{definition}
An edge $(x,y)$ in the hypercube is called \textbf{influential} if $f(x) \neq f(y)$. Additionally, we denote the set of all influential edges in the hypercube as $E_I$. 
\end{definition}

\begin{definition}
The \textbf{influence} of function $f$\blue{$:\{0,1\}^n \rightarrow \{0,1\}$}, denoted by $I[f]$ is:

$$
I[f] \myeq n \cdot Pr_{x \in_R \{0,1\}^n, i \in_R [n]}
[f(x) \neq f(x^{\oplus i})]
$$
Where $x^{\oplus i}$ is $x$ with its $i$-th bit flipped.\footnote{
We use the symbol $\in_R$ to denote, depending on the type of object the symbol is followed by: (i) Picking a random element from a probability distribution. (ii) Picking a uniformly random element from a set (iii) Running a randomized algorithm and taking the result.}
Equivalently, the influence is $n$ times the probability that a random edge is influential.
Since there are $n \cdot 2^{n-1}$ edges, then $|E_I|=2^{n-1} I[f]$.
\end{definition}

\begin{definition}
Let $\delta$ be a parameter and let $x$ be selected uniformly at random from $\{0,1\}^n$. Let $z \in \{0,1\}^n$ be defined as follows:

\[
z_i=
\begin{cases}
    x_i & \text{with probability } 1-\delta\\
    1-x_i & \text{with probability } \delta
\end{cases}
\]
We denote this distribution of $x$ and $z$ by $T_{\delta}$. Then we define the \textbf{noise sensitivity} of $f$ as:

$$
NS_{\delta}[f] \myeq
Pr_{(x,z) \in_R T_{\delta}}
[f(x) \neq f(z)]
$$
\end{definition}

\begin{observation}
\label{pairing}
For every pair $x^0$ and $z^0$, the probability that for a pair $x,z$ drawn from $T_\delta$, it is the case that $(x,z)=(x^0, z^0)$, is equal to the probability that $(x,z)=(z^0, x^0)$.
Therefore, \[Pr[f(x)=0 \land f(z)=1)=Pr[f(x)=1 \land f(z)=0]\]
Hence: 
\[
NS_{\delta}[f]
=2 \cdot Pr[f(x)=1 \land f(z)=0]
\]
\end{observation}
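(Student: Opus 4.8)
The plan is to establish the claimed symmetry of the distribution $T_\delta$ directly from its definition, and then obtain the two displayed consequences by summing over the appropriate sets of ordered pairs.

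First I would compute the probability mass that $T_\delta$ assigns to a fixed ordered pair $(x^0,z^0)$. Since $x$ is drawn uniformly and each coordinate $z_i$ is then set independently conditioned on $x_i$, we have
\[
\Pr_{(x,z)\in_R T_\delta}\bigl[(x,z)=(x^0,z^0)\bigr]
= \frac{1}{2^n}\prod_{i=1}^{n}\Pr\bigl[z_i=z^0_i \mid x_i=x^0_i\bigr]
= \frac{1}{2^n}\,\delta^{d}(1-\delta)^{n-d},
\]
where $d$ is the Hamming distance between $x^0$ and $z^0$ (each agreeing coordinate contributes a factor $1-\delta$, each disagreeing coordinate a factor $\delta$). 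The right-hand side depends on the pair only through $d$, which is symmetric under swapping its two arguments; hence $\Pr[(x,z)=(x^0,z^0)] = \Pr[(x,z)=(z^0,x^0)]$, which is the first assertion.

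Next I would sum this identity over all ordered pairs with $f(x^0)=0$ and $f(z^0)=1$. The left-hand side sums to $\Pr[f(x)=0 \land f(z)=1]$. On the right-hand side, re-indexing the sum by the swapped pair $(z^0,x^0)$ — which ranges over exactly the ordered pairs whose first coordinate has $f$-value $1$ and whose second coordinate has $f$-value $0$ — gives $\Pr[f(x)=1 \land f(z)=0]$. This yields the second displayed equality. Finally, the event $f(x)\neq f(z)$ is the disjoint union of $\{f(x)=0 \land f(z)=1\}$ and $\{f(x)=1 \land f(z)=0\}$, so $NS_\delta[f]$ equals the sum of their probabilities, which by the previous step is $2\cdot\Pr[f(x)=1 \land f(z)=0]$.

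There is no genuine obstacle here: the only point requiring care is the explicit verification that the joint mass function of $T_\delta$ is symmetric in its two arguments, and this is immediate from the product form together with the symmetry of Hamming distance. I note that monotonicity of $f$ is not used in this observation; it enters only in the later analysis built on top of it.
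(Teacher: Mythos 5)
Your proof is correct and follows the same route the paper intends: the observation's justification is precisely the symmetry of the joint mass of $T_\delta$ under swapping its arguments, which you verify explicitly via the product formula $\frac{1}{2^n}\delta^{d}(1-\delta)^{n-d}$, and the two displayed consequences then follow by summing over ordered pairs exactly as you describe. The paper leaves this computation implicit, so your write-up simply makes the standard pairing argument fully explicit; nothing is missing.
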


\subsubsection{Influence estimation.}
To estimate the influence,
standard sampling would require $O\left(\frac{n}{I[f] \epsilon^2 }\right)$ samples. 
However, from \cite{num21} we have:
\begin{theorem}
\label{estimate influence}
There is an algorithm that approximates $I[f]$ to within a multiplicative factor of $(1 \pm \epsilon)$ for a monotone $f$\blue{$:\{0,1\}^n \rightarrow \{0,1\}$}. The algorithm requires that $I[f] \geq 1/n^{C'}$ for a constant $C'$ that is given to the algorithm. It outputs a good approximation with probability at least $0.99$ and in expectation requires $O\left(\frac{\sqrt{n} \log(n/\epsilon)}{I[f] \epsilon^3}\right)$ queries. Additionally, it runs in time polynomial in $n$.
\end{theorem}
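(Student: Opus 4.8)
The plan is to reduce estimating $I[f]$ to estimating the probability that a short descending random walk crosses an influential edge, in the spirit of \cite{num21}. Fix a walk length $w$ of order $\epsilon\sqrt{n/\log n}$. The sampling primitive is: pick $x$ uniformly from $\{0,1\}^n$, take a descending random walk of length $w$ from $x$ (stopping early only in the negligible event that it reaches $0^n$, which requires $L(x)<w$ and hence has probability $\le n^{-\omega(1)}$ since $L(x)$ is concentrated around $n/2$), let $y$ be the endpoint, and query $f(x)$ and $f(y)$. Because $f$ is monotone, $f$ is non-increasing along any descending path, so it changes value at most once along the walk; consequently $f(x)\neq f(y)$ holds exactly when the walk traverses a (then unique) influential edge, and over $e\in E_I$ the events ``the walk traverses $e$'' are pairwise disjoint. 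Writing $q\eqdef\Pr[f(x)=1\wedge f(y)=0]$ and $p_e$ for the probability the walk traverses $e$, we obtain $q=\sum_{e\in E_I}p_e$, and a single trial of the primitive is an unbiased $\{0,1\}$ sample of $q$ costing $O(1)$ queries.

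The technical core is to evaluate $q$ in terms of $I[f]$. First, the walk is invariant under permuting coordinates, so $p_e$ depends only on the layer $\Lambda_e$. Second, for every influential edge $e$ whose endpoints lie within $t_1=\eta_1\sqrt{n\log n}$ of level $n/2$ (i.e. $e\in M$), we claim $p_e=(1\pm O(\epsilon))\,\bar p$ with $\bar p=(1\pm o(1))\,w/(n\,2^{n-1})$: the probability that the walk passes through the layer of $e$ is $\sum_{j=L(v_1)}^{L(v_1)+w}\binom{n}{j}/2^n$, which divided by the layer size $L(v_1)\binom{n}{L(v_1)}$ gives $p_e$; the binomial ratios $\binom{n}{j}/\binom{n}{L(v_1)}$ appearing there span levels differing by at most $w=O(\epsilon\sqrt{n/\log n})$, so Lemma \ref{continuity} forces them to be $1\pm O(\epsilon)$, and the $\binom{n}{L(v_1)}$ factors cancel, leaving a value independent of the layer up to $1\pm O(\epsilon)$. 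Third, a crude tail estimate on $\sum_{k:\,|k-n/2|>t_1}k\binom{n}{k}$ shows that the influential edges outside $M$ number at most $n^{-A}\cdot n\,2^{n-1}$ for a constant $A$ that we can make as large as we wish by choosing $\eta_1$ large, hence $o(|E_I|)$ whenever $I[f]\ge 1/n^{C'}$; thus $|E_I\cap M|=(1-o(1))|E_I|=(1-o(1))\,2^{n-1}I[f]$. Putting these together, and bounding the contribution of edges outside $M$ to $q$ by $O(w\,n^{-A})=o(w I[f]/n)$, we get $q=(1\pm O(\epsilon))\,\dfrac{w\,I[f]}{n}$.

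The algorithm now repeats the sampling primitive, maintaining the number $\beta$ of trials and the number $\alpha$ of trials with $f(x)=1\wedge f(y)=0$, until $\alpha$ reaches a fixed $K=\Theta(1/\epsilon^2)$; it outputs $\tilde I\eqdef \frac{n}{w}\cdot\frac{\alpha}{\beta}$, with the constant hidden in $K$ (and in the $O(\epsilon)$ above) chosen so that the combined modelling-plus-sampling error is at most $\epsilon$. A Chernoff bound on the negative-binomial stopping rule gives that $\alpha/\beta$ is within $(1\pm\epsilon/\mathrm{const})$ of $q$ with probability $\ge 0.99$, hence $\tilde I=(1\pm\epsilon)I[f]$. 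The expected number of trials is $O(1/(\epsilon^2 q))=O\!\big(\tfrac{n}{\epsilon^2 w I[f]}\big)=O\!\big(\tfrac{\sqrt{n\log n}}{\epsilon^3 I[f]}\big)$, which is $O\!\big(\tfrac{\sqrt n\,\log(n/\epsilon)}{\epsilon^3 I[f]}\big)$ as claimed (finiteness of the expectation uses $q\ge w/n^{C'+1}$, which is where the a priori bound $I[f]\ge1/n^{C'}$ enters); if one additionally wishes to locate the influential edge on the successful trials, a binary search along the length-$w$ path costs only $O(\log w)$ further queries on a $q$-fraction of trials, an additive $O(\log n/\epsilon^2)$ that does not change the bound. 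Every step manipulates only $n$-bit strings, so the running time is polynomial in $n$.

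The main obstacle is the second paragraph: establishing the near-uniformity and the closed form for $p_e$ over $M$ rigorously. The subtlety is that the walk does not have a fixed trajectory — it starts at a random level $L(x)$ concentrated but not fixed in $[\tfrac n2-t_1,\tfrac n2+t_1]$ and descends $w$ steps — so the ``layer-passing probability divided by layer size'' computation must be carried out while averaging over $L(x)$ and while keeping every multiplicative deviation within a single $1\pm O(\epsilon)$ factor, using only the binomial-ratio control supplied by Lemma \ref{continuity}. Once that estimate and the $M$-tail bound are in place, the rest is standard concentration and bookkeeping.
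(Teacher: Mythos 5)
The paper does not actually prove this theorem: it is imported verbatim from \cite{num21}, so there is no in-paper proof to compare against. Your reconstruction is correct and is essentially the standard argument of \cite{num21}; it also coincides with the computation the paper \emph{does} carry out for the closely related edge-sampling task in Lemma \ref{lemma edge sampler} (Equations (\ref{eq 1-(7)})--(\ref{eq 1-(8.1)})), where the same walk length $w=\Theta(\epsilon\sqrt{n/\log n})$, the same coordinate-symmetry/layer argument, and the same use of the Hypercube Continuity Lemma appear. One small imprecision: for edges outside $M$ you cannot bound each $p_e$ by $O(w/(n2^{n-1}))$ (the binomial ratios are uncontrolled far from level $n/2$), so rather than multiplying an edge count by a uniform $p_e$, bound $\sum_{e\notin M}p_e$ directly by the probability that the walk ever leaves the middle band, i.e.\ $\Pr[|L(x)-n/2|>t_1-w]\le n^{-A}$ by Hoeffding; this is dominated by $\epsilon\, w\, I[f]/n$ once $\epsilon\ge \log n/\sqrt{n}$, which you may assume without loss of generality since otherwise standard sampling already meets the stated query bound.
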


%

\subsubsection{Bounds for $\epsilon$ and $I[f]$}

The following observation allows us to assume that without loss of generality $\epsilon$ is not too small. \blue{A similar technique was also used in \cite{num21}.}
\begin{observation}
\label{epsilon is large}
In this work (recalling that $\delta \geq 1/n$), we assume that 
$\epsilon \geq H \sqrt{n}  \delta \log^{1.5}(n) \geq  
H \frac{\log^{1.5} n}{\sqrt{n}}$, \blue{for any constant $H$}, which is without loss of generality for the following reason.
The standard sampling algorithm can estimate the noise sensitivity
in 
$O \left(\frac{1}{NS_\delta[f] \epsilon^2} \right)$ samples.
For 
$\epsilon =O (\sqrt{n}  \delta \log^{1.5}(n))$, this is 
$O \left(\frac{\sqrt{n}  \delta \log^{1.5}(n)}{NS_\delta[f] \epsilon^3}  \right)$ and already accomplishes the desired query complexity.
Additionally, throughout the paper whenever we need it, we will without loss of generality assume that $\epsilon$ is smaller than a sufficiently small positive constant.
\end{observation}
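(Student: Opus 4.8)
The plan is to verify the observation's two assertions in turn: (a) that standard sampling estimates $NS_\delta[f]$ to within a $(1\pm\epsilon)$ multiplicative factor using $O\!\left(\frac{1}{NS_\delta[f]\,\epsilon^2}\right)$ queries, and (b) that this query count already meets the bound of Theorem~\ref{main theorem} whenever $\epsilon = O(\sqrt n\,\delta\log^{1.5} n)$, so that in the complementary case the stated lower bound on $\epsilon$ may be assumed for free. The trailing remark — that we may additionally assume $\epsilon$ is below any fixed small positive constant — is then handled by a one-line rescaling argument.

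For (a), I would draw $m$ independent pairs $(x^{(j)},z^{(j)})$ from $T_\delta$, let $Y_j$ be the indicator of the event $f(x^{(j)})\neq f(z^{(j)})$ (computable with two queries), and output $\frac1m\sum_j Y_j$. Each $Y_j$ is Bernoulli with mean $NS_\delta[f]$, so the estimator is unbiased and, by the multiplicative Chernoff bound,
\[
\Pr\!\left[\Big|\tfrac1m\textstyle\sum_j Y_j - NS_\delta[f]\Big| > \epsilon\,NS_\delta[f]\right] \;\le\; 2\exp\!\big(-\Omega(\epsilon^2 m\,NS_\delta[f])\big).
\]
Taking $m = \Theta\!\left(\frac{1}{NS_\delta[f]\,\epsilon^2}\right)$ with a sufficiently large constant drives the right-hand side below $1/3$, and this uses $2m = O\!\left(\frac{1}{NS_\delta[f]\,\epsilon^2}\right)$ queries. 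The only point requiring care is invoking the \emph{multiplicative} (rather than additive) Chernoff bound — this is exactly what produces the $1/NS_\delta[f]$ factor instead of a $1/NS_\delta[f]^2$ one.

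For (b), suppose $\epsilon \le H'\sqrt n\,\delta\log^{1.5} n$ for a constant $H'$. Then
\[
\frac{1}{NS_\delta[f]\,\epsilon^2} \;=\; \frac{\epsilon}{NS_\delta[f]\,\epsilon^3} \;\le\; \frac{H'\sqrt n\,\delta\log^{1.5} n}{NS_\delta[f]\,\epsilon^3} \;=\; O\!\left(\frac{\sqrt n\,\delta\log^{1.5} n}{NS_\delta[f]\,\epsilon^3}\right),
\]
which is precisely the query bound of Theorem~\ref{main theorem}; hence, in the complementary case $\epsilon > H\sqrt n\,\delta\log^{1.5} n$, assuming this lower bound on $\epsilon$ costs nothing, the constant $H$ being absorbed into the $O(\cdot)$. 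The displayed inequality $\sqrt n\,\delta\log^{1.5} n \ge \frac{\log^{1.5} n}{\sqrt n}$ is then immediate from the standing hypothesis $\delta \ge 1/n$. Finally, for the last remark: given a target accuracy $\epsilon$ larger than a fixed small constant $\epsilon_0$, one simply runs the algorithm with parameter $\epsilon_0$ — a $(1\pm\epsilon_0)$-multiplicative estimate is in particular a $(1\pm\epsilon)$-multiplicative estimate, and since $\epsilon_0$ is constant the resulting query bound $O\!\left(\frac{\sqrt n\,\delta\log^{1.5} n}{NS_\delta[f]\,\epsilon_0^3}\right) = O\!\left(\frac{\sqrt n\,\delta\log^{1.5} n}{NS_\delta[f]}\right)$ is no worse, up to constants and using $\epsilon \le 1$, than the target $O\!\left(\frac{\sqrt n\,\delta\log^{1.5} n}{NS_\delta[f]\,\epsilon^3}\right)$. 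I do not anticipate any genuine obstacle here: the whole argument is elementary, and the sole subtlety is keeping the direction of each inequality — and the multiplicative form of the concentration bound — straight.
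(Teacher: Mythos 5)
Your proposal is correct and follows essentially the same route as the paper, whose justification is contained inline in the observation itself: standard sampling via a multiplicative Chernoff bound costs $O\!\left(\frac{1}{NS_\delta[f]\epsilon^2}\right)$ queries, and the identity $\frac{1}{NS_\delta[f]\epsilon^2}=\frac{\epsilon}{NS_\delta[f]\epsilon^3}$ shows this already meets the target complexity when $\epsilon=O(\sqrt{n}\,\delta\log^{1.5}n)$. Your additional details (the explicit Chernoff argument and the rescaling trick for the small-constant assumption) are correct elaborations of what the paper leaves implicit.
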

We will also need a known lower bound on influence:
\begin{observation}
\label{I is large}
For any function $f$\blue{$:\{0,1\}^n \rightarrow \{0,1\}$} and $\delta \leq 1/2$ it is the case that:

$$
NS_{\delta}[f]
\leq
\delta I[f]
$$
Therefore it is the case that $I[f] \geq \frac{1}{n^{C}}$.
\end{observation}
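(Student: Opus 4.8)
The plan is to prove the inequality $NS_{\delta}[f] \leq \delta I[f]$ directly, and then read off the consequence $I[f] \geq 1/n^C$ by combining it with the standing hypothesis $NS_{\delta}[f] \geq 1/n^C$ (used throughout for monotone $f$) together with $\delta \leq 1/2 \leq 1$. The second half is immediate once the inequality is in hand, so essentially all the work is in the first half.

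To bound $NS_{\delta}[f]$, I would couple the noise process $T_{\delta}$ with a single-index-flip process so that the event $f(x) \neq f(z)$ forces at least one influential edge to be traversed along a natural monotone path from $x$ to $z$. Concretely: given $(x,z) \sim T_{\delta}$, let $D = \{ i : x_i \neq z_i\}$ be the set of flipped coordinates, and consider any ordering of $D$ that first flips the coordinates where $x_i = 1, z_i = 0$ (a descending segment) and then the coordinates where $x_i = 0, z_i = 1$ (an ascending segment) — this is exactly the descending–ascending view already introduced. If $f(x) \neq f(z)$, then somewhere along this path of $|D|$ steps the value of $f$ changes, so at least one of the $|D|$ edges on the path is influential. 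By a union bound over the steps, $\Pr[f(x)\neq f(z)] \leq \sum_{i} \Pr[\text{the $i$-th step uses coordinate } i \text{ and that edge is influential}]$. For a fixed coordinate $i$, coordinate $i$ lies in $D$ with probability exactly $\delta$, and conditioned on that, the endpoints of the $i$-th flip are $(x', x'^{\oplus i})$ with $x'$ distributed so that $x'$ is uniform (the other coordinates are uniform and independent of whether $i$ flipped). Hence the probability that the $i$-th edge is influential is at most $\Pr_{x' \in_R \{0,1\}^n}[f(x') \neq f(x'^{\oplus i})]$, and summing over $i \in [n]$ gives $NS_{\delta}[f] \leq \delta \sum_{i} \Pr_{x'}[f(x') \neq f(x'^{\oplus i})] = \delta I[f]$ by the definition of $I[f]$. (Monotonicity is not strictly needed for this inequality, but it fits the paper's framing; alternatively one can cite the standard Fourier identity $NS_{\delta}[f] = \tfrac12\sum_{S} \hat f(S)^2 (1-(1-2\delta)^{|S|})$ and $I[f] = \sum_S |S| \hat f(S)^2$, using $1-(1-2\delta)^{k} \leq 2\delta k$.)

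The main obstacle — really the only subtlety — is making the union-bound step honest: one must be careful that the "edge used at step $i$" is well-defined and that, after conditioning on $i \in D$ and on the relative order of the flips, the lower endpoint of that edge is genuinely uniform over $\{0,1\}^n$ in the relevant coordinates, so that the per-coordinate probability really is bounded by the average sensitivity in direction $i$. The cleanest way to sidestep all of this is to use the Fourier route, where the inequality reduces to the elementary estimate $1-(1-2\delta)^{|S|} \leq 2\delta|S|$ for $0 \leq \delta \leq 1/2$; I would present that as the proof and relegate the combinatorial coupling to a remark. Finally, for the displayed conclusion: since $f$ is assumed to satisfy $NS_{\delta}[f] \geq 1/n^C$ and $\delta \leq 1/2 < 1$, we get $1/n^C \leq NS_{\delta}[f] \leq \delta I[f] \leq I[f]$, i.e. $I[f] \geq 1/n^C$, which is what is claimed.
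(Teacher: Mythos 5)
Your proposal is correct, and the route you ultimately settle on (the Fourier identities for $NS_{\delta}[f]$ and $I[f]$ together with the elementary bound $1-(1-2\delta)^{|S|}\leq 2\delta|S|$, i.e.\ Bernoulli's inequality) is exactly the proof given in the paper's Appendix A, with the consequence $I[f]\geq 1/n^{C}$ read off from the standing hypothesis $NS_{\delta}[f]\geq 1/n^{C}$ just as you describe. The combinatorial coupling you sketch as an alternative is a valid standard argument, but since you relegate it to a remark there is nothing further to check.
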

A very similar statement is proved in \cite{num9} and for completeness we prove it in Appendix A.

\section{An improved algorithm for small $\delta$}
\label{NS algo}

In this section we give an 
improved algorithm for small $\delta$, namely  
$1/n \leq \delta \leq 1/(\sqrt{n}\log{n})$.
We begin by describing the descending-ascending view on which the algorithm is
based.  

\subsection{Descending-ascending framework}
\label{subsec:view}
\subsubsection{The descending-ascending process.}
It will also be useful to view noise sensitivity in the context of the following process
\footnote{We will use the following notation for probabilities of various events: for an algorithm or a random process $X$ we will use the expression $Pr_{X}[]$ to refer to the random variables we defined in the context of this process or algorithm. It will often be the case that the same symbol, say $x$, will refer to different random random variables in the context of different random processes, so $Pr_{X}[x=1]$ might not be the same as $Pr_{Y}[x=1]$.}.

\hfill \newline
\noindent\fbox{
    \parbox{\textwidth}{
\textbf{Process $D$}
\setlist{nolistsep}
\begin{itemize}[noitemsep]
\item Pick $x$ uniformly at random from $\{0,1\}^n$. Let $S_0$ be the set of indexes $i$ for which $x_i=0$, and conversely let $S_1$ be the rest of indexes.
\item \textbf{Phase 1:} go through all the indexes in $S_1$ in a random order, and flip each with probability $\delta$. Form the descending path $P_1$ from all the intermediate results. Call the endpoint $y$.
\item \textbf{Phase 2:} start at $y$, and flip each index in $S_0$ with probability $\delta$. As before, all the intermediate results form an ascending path $P_2$, which ends in $z$.
\end{itemize}
}}
\hfill \newline

By inspection, $x$ and $z$ are distributed the identically in $D$ as in $T_{\delta}$. Therefore from Observation \ref{pairing}:

$$
NS_{\delta}[f]=2 \cdot Pr_D[f(x)=1 \land f(z)=0]
$$

\begin{observation}
\label{observation xyz}
Since the function is monotone, if $f(x)=1$ and $f(z)=0$, then it has to be that $f(y)=0$.
\end{observation}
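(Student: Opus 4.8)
The plan is to extract from the definition of Process $D$ the single structural fact that $y \preceq z$, and then invoke monotonicity. Concretely, I would first observe that Phase 1 acts only on the coordinates in $S_1$ (those set to $1$ in $x$), and each such step replaces a $1$ by a $0$ with probability $\delta$ and otherwise leaves it unchanged; no coordinate is ever flipped from $0$ to $1$. Hence every intermediate vertex of $P_1$ is coordinatewise below its predecessor, and in particular the endpoint $y$ satisfies $y \preceq x$; moreover $y$ agrees with $x$ on every coordinate in $S_0$, so $y_i = 0$ for all $i \in S_0$.

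Next I would analyze Phase 2 symmetrically: it acts only on the coordinates in $S_0$, flipping some of them from $0$ to $1$ and leaving the rest (and all of $S_1$) untouched. Therefore $z$ agrees with $y$ on $S_1$, and on $S_0$ we have $z_i \geq y_i$ since $y_i = 0$ there. Combining the two cases over all coordinates gives $y \preceq z$.

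Finally, since $f$ is monotone and $y \preceq z$, we get $f(y) \leq f(z)$. Under the hypothesis $f(z) = 0$ this forces $f(y) = 0$, which is the claim. (The hypothesis $f(x) = 1$ is not actually needed for this implication, but it is the regime in which the observation will be applied, following Observation~\ref{pairing}.)

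I do not anticipate a genuine obstacle here: the statement is a direct consequence of the fact that $P_1$ is a descending path and $P_2$ is an ascending path sharing the endpoint $y$, together with monotonicity. The only point requiring a little care is bookkeeping which coordinates each phase touches, so as to be sure the inequalities $y \preceq x$ and $y \preceq z$ hold on \emph{all} $n$ coordinates and not merely on the active ones.
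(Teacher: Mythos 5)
Your proof is correct and matches the paper's (implicit) reasoning: the paper states this as an observation without a written proof, the intended argument being exactly that $P_2$ is ascending so $y \preceq z$, whence monotonicity gives $f(y) \leq f(z) = 0$. Your careful coordinate bookkeeping and the remark that the hypothesis $f(x)=1$ is not actually used are both accurate.
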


Now we define the probability that a Phase 1 path starting somewhere at level $l$ makes at least $w$ steps downwards:
\begin{definition}
For any $l$ and $w$ in $[n]$ we define $Q_{l,w}$ as follows: 
$$
Q_{l;w}
\myeq
Pr_{D}[\bigl\lvert P_1 \bigr\rvert\geq w |L(x)=l]
$$
\end{definition}
This notation is useful when one wants to talk about the probability that a path starting on a particular vertex hits a specific level.

\subsubsection{Defining bad events}
In this section, we give the parameters that we use to determine the lengths
of our walks, as well as the ``middle'' of the hypercube.

Define the following values:
\begin{align*}
t_1 \myeq \eta_1\sqrt{n \log n} 
&&
t_2 \myeq n \delta (1+3\eta_2 \log n)
\end{align*}
Here $\eta_1$ and $\eta_2$ are large enough constants. For constant $C$, taking $\eta_1=\sqrt{C}+4$ and $\eta_2=C+2$ is sufficient for our purposes. 

Informally, $t_1$ and $t_2$ have the following intuitive meaning. A typical vertex $x$ of the hypercube has $L(x)$ between $n/2-t_1$ and $n/2+t_1$. A typical Phase 1 path from process $D$ will have length at most $t_2$. 

We define the ``middle edges" $M$ as the following set of edges:
$$
M \myeq \{e=(v_1,v_2):
\frac{n}{2}-
t_1
 \leq L(v_2)\leq L(v_1) \leq
\frac{n}{2}+
t_1
  \}
$$
Denote by $\overline{M}$ the rest of the edges.

We define the following two bad events in context of $D$. 
The first one happens roughly when $P_1$ (from $x$ to $y$, as defined by
Process D) is much longer than it should be in expectation, and the second one happens when $P_1$ crosses one of the edges that are too far from the middle of the hypercube, which could happen because $P_1$ is long or because of
a starting point that is far from the middle.
\begin{itemize}
\item $E_1$ happens when both of the following hold (i) $P_1$ crosses an edge $e \in E_I$  and (ii) denoting $e=(v_1,v_2)$, so that $v_2 \preceq v_1$, it is the case that  $L(x)-L(v_1) \geq t_2$.
\item $E_2$ happens when $P_1$ contains an edge in $E_I \cap \overline{M}$.
\end{itemize}
While defining $E_1$ we wanted two things from it. First of all, we wanted its probability to be upper-bounded easily. Secondly, we wanted it not to complicate the sampling of paths in Lemma \ref{length sampler}. There exists a tension between these two requirements, and as a result the definition of $E_1$ is somewhat contrived.

\subsubsection{Defining $p_A,p_B$}
\label{pa,pb}

We define:
\begin{align*}
p_A \myeq Pr_{D}[f(x)=1 \land f(y)=0 \land \overline{E_1} \land \overline{E_2}] &&
p_B\myeq Pr_{D}[f(z)=0 | f(x)=1 \land f(y)=0\land \overline{E_1} \land \overline{E_2}]
\end{align*}
Ignoring the bad events, $P_A$ is the probability that $P_1$ hits an influential edge, and $P_B$ is the probability that given that $P_1$ hits an influential edge $P_2$ does not hit an influential edge. From Observation (\ref{observation xyz}), if and only if these two things happen, it is the case that $f(x)=1$ and $f(z)=0$. From this fact and the laws of conditional probabilities we have:
\begin{equation}
\label{eq 1-(4.5)}
Pr_D[f(x)=1 \land f(z)=0 \land \overline{E_1} \land \overline{E_2}]
=Pr_{D}[f(x)=1 \land f(y)=0 \land f(z)=0 \land \overline{E_1} \land \overline{E_2}]=p_A p_B
\end{equation}

We can consider for every individual edge $e$ in $M \cap E_I$ the probabilities:  
$$
p_{e}
\myeq
Pr_{D}[e \in P_1 \land \overline{E_1} \land \overline{E_2}]
$$
$$
q_{e}
\myeq
Pr_{D}[f(x)=1 \land f(z)=0 |e \in P_1 \land \overline{E_1} \land \overline{E_2}]
=
Pr_{D}[f(z)=0 |e \in P_1 \land \overline{E_1} \land \overline{E_2}]
$$
The last equality is true because $e \in P_1$ already implies $f(x)=1$. Informally and ignoring the bad events again, $p_e$ is the probability that $f(x)=1$ and $f(y)=0$ \textbf{because} $P_1$ hits $e$ and not some other influential edge. Similarly, $q_e$ is the probability $f(x)=1$ and $f(z)=0$ given that $P_1$ hits specifically $e$.

Since $f$ is monotone, $P_1$ can hit at most one influential edge. Therefore, the events of $P_1$ hitting different influential edges are disjoint. Using this, Equation (\ref{eq 1-(4.5)}) and the laws of conditional probabilities we can write:

\begin{equation}
\label{eq 1-(6)}
p_A=
\sum_{e \in E_I \cap M}
p_e
\end{equation}
Furthermore, the events that $P_1$ hits a given influential edge and then $P_2$ does not hit any are also disjoint for different influential edges. Therefore, analogous to the previous equation we can write:
\begin{equation}
\label{eq 1-(5)}
p_A p_B=Pr_D[(f(x)=1) \land (f(z)=0) \land \overline{E_1} \land \overline{E_2}]
=\sum_{e \in E_I \cap M}
p_{e} q_{e}
\end{equation}

\subsubsection{Bad events can be ``ignored''}

In the following proof, we will need to consider probability distributions
in which bad events do not happen.  For the most part, conditioning on the
fact that bad events do not happen changes little in the calculations.  In
this subsection, we prove lemmas that allow us to formalize these claims.

The following lemma suggests that almost all influential edges are in $M$.
\begin{observation}
\label{M is large}
It is the case that:
$$
\left(1-\frac{\epsilon}{310}\right)|E_I|\leq|M \cap E_I| \leq
|E_I|
$$

\end{observation}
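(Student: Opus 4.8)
The plan is to bound the number of influential edges lying outside the ``middle'' set $M$, and show this is at most an $\frac{\epsilon}{310}$ fraction of $|E_I|$. The upper bound $|M \cap E_I| \le |E_I|$ is trivial since $M \cap E_I \subseteq E_I$, so the content is in the lower bound, equivalently the claim that $|E_I \cap \overline{M}| \le \frac{\epsilon}{310}|E_I|$. An edge $e = (v_1, v_2)$ with $v_2 \preceq v_1$ lies in $\overline{M}$ precisely when $L(v_2) < n/2 - t_1$ or $L(v_1) > n/2 + t_1$; since $L(v_1) = L(v_2)+1$, this means one of the endpoints sits at a level outside roughly $[n/2 - t_1, n/2 + t_1]$. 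So it suffices to count \emph{all} edges (influential or not) with an endpoint at an extreme level and compare to a lower bound on $|E_I|$.

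First I would count the edges incident to the extreme levels. The number of edges of the hypercube with lower endpoint at level $\ell$ is $(\ell+1)\binom{n}{\ell+1}$ (equivalently $(n-\ell)\binom{n}{\ell}$), so the total number of edges touching levels outside $[n/2 - t_1, n/2+t_1]$ is at most $n \sum_{|\ell - n/2| > t_1 - 1} \binom{n}{\ell} \le n \cdot 2^n \cdot \Pr[|\mathrm{Bin}(n,1/2) - n/2| > t_1 - 1]$. With $t_1 = \eta_1 \sqrt{n \log n}$ and $\eta_1 = \sqrt{C}+4$, a standard Hoeffding/Chernoff bound gives $\Pr[|\mathrm{Bin}(n,1/2)-n/2| > t_1-1] \le 2 e^{-2(t_1-1)^2/n} \le n^{-\Omega(\eta_1^2)}$, which is at most $n^{-(C + \text{something})}$ for $n$ large — in particular much smaller than $\frac{\epsilon}{310} \cdot \frac{1}{n^{C}} \cdot \frac{1}{n}$ once we absorb the extra factor of $n$ from the edge-count and use $\epsilon \ge H \log^{1.5}(n)/\sqrt{n}$ from Observation~\ref{epsilon is large}. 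Hence $|E_I \cap \overline{M}| \le$ (number of edges touching extreme levels) $\le \frac{\epsilon}{310} \cdot 2^{n-1} \cdot \frac{1}{n^{C}}$.

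Then I would invoke the lower bound on influence: by Observation~\ref{I is large}, $I[f] \ge 1/n^{C}$, and since $|E_I| = 2^{n-1} I[f]$, we get $|E_I| \ge 2^{n-1}/n^{C}$. Combining, $|E_I \cap \overline{M}| \le \frac{\epsilon}{310} |E_I|$, which rearranges to $|M \cap E_I| = |E_I| - |E_I \cap \overline{M}| \ge (1 - \frac{\epsilon}{310})|E_I|$, as desired.

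The main obstacle — really just bookkeeping rather than a genuine difficulty — is making the constants line up: one must choose $\eta_1$ large enough (relative to $C$) that the Chernoff tail beats not only $1/n^{C}$ but also the extra factor of $n$ from summing over edges rather than vertices and the $\frac{1}{\epsilon}$ blowup, while only assuming the lower bound $\epsilon \ge H \log^{1.5}(n)/\sqrt{n}$. The stated choice $\eta_1 = \sqrt{C}+4$ gives tail exponent $2\eta_1^2 \log n \ge (2C + \text{slack})\log n$, which comfortably absorbs all these polynomial factors for sufficiently large $n$, so the inequality holds with room to spare.
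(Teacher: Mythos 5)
Your proposal is correct and follows essentially the same route as the paper: bound $|E_I\cap\overline{M}|$ by the total number of hypercube edges with an endpoint at an extreme level (at most $n$ times the Hoeffding tail of the level distribution times $2^n$), and compare against $|E_I|=2^{n-1}I[f]\geq 2^{n-1}/n^{C}$ using Observations \ref{I is large} and \ref{epsilon is large}. The constant bookkeeping with $\eta_1=\sqrt{C}+4$ matches the paper's computation $|\overline{M}|\leq 2^{n-1}\cdot 4/n^{2\eta_1^2-1}\leq |E_I|/n\leq \frac{\epsilon}{310}|E_I|$.
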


\begin{proof}
This is the case, because:
$$|\overline{M} \cap E_I| \leq |\overline{M}| \leq 2^n n\cdot 2\exp(-2 \eta_1^2 \log(n))
=2^{n-1} \cdot 4/n^{2\eta_1^2-1}\leq 2^{n-1} I[f]/n 
=|E_I|/n \leq \frac{\epsilon}{310} |E_I|$$

The second inequality is the Hoeffding bound, 
then we used Observations \ref{I is large} and \ref{epsilon is large}.
\end{proof}

\begin{lemma}
\label{corner cutting}
We proceed to prove that ignoring these bad events does not distort our estimate for $NS_{\delta}[f]$. 
It is the case that:

\begin{equation*}
p_A p_B
\leq
\frac{1}{2}NS_{\delta}[f]
\leq
\left(1+\frac{\epsilon}{5} \right)
p_A p_B
\end{equation*}
\end{lemma}
\begin{proof}
See Appendix C, Subsection \ref{proof 2}.
\end{proof}

\subsection{Main lemmas}
\blue{
Here we prove the correctness and run-time of the main subroutines used in our algorithm for estimating noise sensitivity. For completeness, we will repeat all the algorithms.}
\subsection{Lemmas about descending paths hitting influential edges}
Here we prove two lemmas that allow the estimation of the probability that a certain descending random walk hits an influential edge. 
As we mentioned in the introduction, except for the binary search step, 
the algorithm in Lemma \ref{lemma edge sampler} is similar to the algorithm in \cite{num21}. 
In principle, we could have carried out much of the analysis of the algorithm in Lemma \ref{lemma edge sampler} by referencing an equation in \cite{num21}.
However, for subsequent lemmas, 
including Lemma \ref{lemma edges uniform}, 
we build on the application of the Hypercube Continuity Lemma 
to the analysis of random walks on the hypercube.
Thus, we give a full analysis of the algorithm in 
Lemma \ref{lemma edge sampler} here, in order to demonstrate how 
the Hypercube Continuity Lemma (Lemma \ref{continuity}) can be used to analyze random walks on the hypercube, before handling the more complicated
subsequent lemmas, including Lemma \ref{lemma edges uniform}. 
\begin{lemma}
\label{lemma edge sampler}
There exists an algorithm $\mathcal{A}$ that samples edges from $M \cap E_I$ so that for every two edges $e_1$ and $e_2$ in $M \cap E_I$:
$$
\left(1-\frac{\epsilon}{70}\right) Pr_{e \in_R \mathcal{A}}[e=e_2]
\leq
Pr_{e \in_R \mathcal{A}}[e=e_1] \leq \left(1+\frac{\epsilon}{70}\right) Pr_{e \in_R \mathcal{A}}[e=e_2]
$$
The probability that the algorithm succeeds is at least $\frac{1}{O(\sqrt{n } \log^{1.5} n/I[f] \epsilon)}$. If it succeeds, the algorithm makes $O(\log n)$ queries, and if it fails, it makes only $O(1)$ queries. In either case, it runs in time polynomial in $n$.
\end{lemma}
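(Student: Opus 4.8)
The plan is to analyze Algorithm $\mathcal{A}$ directly: pick $x$ uniformly, walk down $w = \frac{\epsilon}{3100\eta_1}\sqrt{n/\log n}$ steps (or until hitting $\vec 0$), and if $f(x) \neq f(y)$ binary-search the path for an influential edge $e_{\mathrm{inf}}$, returning it iff it lies in $M$. The key quantity to control is, for a fixed $e \in M \cap E_I$, the probability $r_e$ that $\mathcal{A}$ outputs $e$. Since $f$ is monotone, a descending path crosses at most one influential edge, so outputting $e$ is exactly the event ``$P_1$ crosses $e$ and $e \in M$'' — and because $e \in M$ by assumption, it is just the event that the length-$w$ descending walk from a uniform random $x$ passes through $e$. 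So I need to show this probability is nearly the same for all $e \in M \cap E_I$, and then the normalization (conditioning on success) gives the claimed $(1 \pm \epsilon/70)$ multiplicative closeness.

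First I would write $r_e$ as a sum over starting vertices: $r_e$ equals $2^{-n}$ times the number of descending walks of length $w$ (with the walk-generation probabilities of step 3) that start at some $x$ with $L(x) \ge L(v_1)$ and pass through $e = (v_1, v_2)$. By the coordinate-symmetry of the walk, this count depends only on the \emph{level} $L(v_1)$ of the upper endpoint, not on $e$ itself; call it $g(L(v_1))$. So $r_e = 2^{-n} g(L(v_1))$, and two edges on the same layer are hit with \emph{exactly} equal probability. The remaining work is to bound the ratio $g(l)/g(l')$ for two levels $l, l'$ both within $t_1 = \eta_1\sqrt{n\log n}$ of $n/2$ (which is what $e_1, e_2 \in M$ guarantees). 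Here I would decompose: starting at level $l$, the walk reaches $v_1$ after $k$ steps for various $k \le w$, with some probability that involves a product of binomial-type factors; the dependence on $l$ enters only through ratios of binomial coefficients $\binom{n}{l}/\binom{n}{l'}$ across levels that differ by at most $w + t_1$. Since $w = O(\sqrt{n/\log n})$ and these levels stay within $O(\sqrt{n\log n})$ of $n/2$, the Hypercube Continuity Lemma (Lemma \ref{continuity}), applied with $\xi$ a small multiple of $\epsilon$, shows every such binomial ratio is $1 \pm O(\epsilon)$, and hence $g(l)/g(l') = 1 \pm \epsilon/70$ after choosing the constant in $w$ appropriately (this is exactly why $w$ carries the factor $\frac{\epsilon}{3100\eta_1}$).

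For the success probability: $\mathcal{A}$ succeeds iff the walk crosses an influential edge lying in $M$, i.e. with probability $\sum_{e \in M \cap E_I} r_e = 2^{-n}\sum_{e \in M \cap E_I} g(L(v_1))$. Since each $g(L(v_1)) = \Theta(g(n/2))$ by the ratio bound, and a similar (cruder) estimate shows the probability that a length-$w$ walk from uniform $x$ hits \emph{any} given middle influential edge is $\Theta(w/n)$ times an $O(1)$ factor, summing over the $|M \cap E_I| \ge (1 - \epsilon/310)|E_I| = \Theta(2^n I[f])$ such edges (Observation \ref{M is large}) gives success probability $\Theta(w \, I[f]/n) = \Theta\!\left(\frac{\epsilon\, I[f]}{\sqrt{n\log n}}\right) = \frac{1}{O(\sqrt n \log^{1.5} n / (I[f]\epsilon))}$, matching the claim up to the stated form. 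The query bounds are immediate: on success, steps 4–6 cost one query for $f(x)$, one for $f(y)$, and $O(\log n)$ for the binary search on the length-$O(\sqrt n)$ path; on failure, only $f(x)$ and $f(y)$ are queried, which is $O(1)$; all bookkeeping is polynomial in $n$.

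The main obstacle is the middle step — proving $g(l)/g(l')$ is $1 \pm \epsilon/70$ uniformly over $l, l' \in [n/2 - t_1, n/2 + t_1]$. The subtlety is that $g(l)$ is not a single binomial ratio but a sum over walk lengths $k \le w$ of products of such ratios, so I must argue the $(1\pm O(\epsilon))$ perturbation is preserved termwise (each factor in each product stays within the Continuity Lemma's regime because $w + t_1 = O(\sqrt{n\log n})$ keeps all relevant levels near $n/2$) and then that it survives the weighted sum over $k$ — which it does, since a convex combination of numbers in $[1-\eta, 1+\eta]$ stays in $[1-\eta,1+\eta]$. Getting the constants to line up so the final bound is exactly $\epsilon/70$ (rather than some larger $O(\epsilon)$) is where the precise choice of the constant $\frac{1}{3100\eta_1}$ in $w$ gets used.
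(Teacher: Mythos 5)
Your proposal is correct and follows essentially the same route as the paper: coordinate symmetry reduces the probability of hitting a fixed edge $e$ to the probability of hitting its layer $\Lambda_e$ divided by $|\Lambda_e| = L(v_1){{n}\choose{L(v_1)}}$, the Hypercube Continuity Lemma controls the dependence on the level, and disjointness of the hitting events (monotonicity) gives both the normalization under conditioning on success and the success probability via Observation \ref{M is large}. The only notable difference is that your ``main obstacle'' dissolves in the paper's formulation: a $w$-step descending walk crosses $\Lambda_e$ exactly when $L(v_1) \le L(x) \le L(v_1)+w-1$, so the numerator is simply a sum of $w$ binomial coefficients (no products over walk steps and no sum over walk lengths), and the comparison across layers is a single application of Lemma \ref{continuity} together with the estimate $1/L(v_1) = (1\pm O(\epsilon))\cdot 2/n$.
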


\textbf{Remark:}
Through the standard repetition technique, the probability of error can be decreased to an arbitrarily small constant, at the cost of $O(\frac{ \sqrt{n } \log^{1.5} n}{I[f] \epsilon})$ queries. Then, the run-time still stays polynomial in $n$, since $I[n] \geq 1/n^C$.

\textbf{Remark:}
The distribution $\mathcal{A}$ outputs is point-wise close to the uniform distribution over $M \cap E_I$. We will also obtain such approximations to other distributions in further lemmas. Note that this requirement is stronger than closeness in $L_1$ norm. 
 
\begin{proof}
The algorithm is as follows:

\hfill \newline
\noindent\fbox{
    \parbox{\textwidth}{
\textbf{Algorithm $\mathcal{A}$} (given oracle access to a monotone function $f:\{0,1\}^n \rightarrow \{0,1\}$ and a parameter $\epsilon$)
\setlist{nolistsep}
\begin{enumerate}[noitemsep]
\item Assign $w=\frac{\epsilon}{3100 \eta_1} \sqrt{\frac{n}{\log n}}$ 
\item Pick $x$ uniformly at random from $\{0,1\}^n$.
\item Perform a descending walk $P_1$ downwards in the hypercube starting at $x$. Stop at a vertex $y$ either after $w$ steps, or if you hit the all-zeros vertex. 
\item If $f(x)=f(y)$ output FAIL.
\item If $f(x) \neq f(y)$ perform a binary search on the path $P_1$ and find an influential edge $e_{inf}$.
\item If $e_{inf} \in M$ return $e_{inf}$. Otherwise output FAIL.
\end{enumerate}
}}
\hfill \newline

Note that $P_1$ is distributed symmetrically with respect to change of indices. Pick an arbitrary edge $e_0$ in $M \cap E_{I}$ and let $e_0=(v_1, v_2)$, so $v_2 \preceq v_1$. Recall that $\Lambda_{e_0}$ was the set of edges in the same ``layer" of the hypercube and that $|\Lambda_{e_0}|$ equals $L(v_1){{n}\choose{L(v_1)}}$. 

For any $e'$ in $\Lambda_{e_0}$, $e_0$ and $e'$ are different only up to a permutation of indexes. Therefore, the probability that $P_1$ passes through $e'$ equals to the probability of it passing through $e_0$. Additionally, since $P_1$ is descending it can cross at most one edge in $\Lambda_{e_0}$, which implies that the events of $P_1$ crossing each of these edges are disjoint. Thus, we can write:

$$
Pr_{\mathcal{A}}
[e_0 \in P_1]
=
\frac{Pr_{\mathcal{A}}
[P_1 \cap \Lambda_{e_0} \neq \emptyset]}{L(v_1){{n}\choose{L(v_1)}}}
$$
But $P_1$ will intersect $\Lambda_{e_0}$ if and only if $L(v_1)\leq L(x) \leq L(v_1)+w-1$. This allows us to express the probability in the numerator as a sum over the $w$ layers of the hypercube right above $\Lambda_{e_0}$.
\begin{equation}
\label{eq 1-(7)}
Pr_{\mathcal{A}}
[e_0 \in P_1]
=
\frac{\sum_{l=L(v_1)}^{L(v_1)+w-1} \frac{1}{2^n} {{n}\choose{l}}
}{L(v_1){{n}\choose{L(v_1)}}}
\end{equation}
Observation \ref{epsilon is large} allows us to lower-bound $\epsilon$ and argue that for sufficiently large $n$ we have:
\begin{equation}
\label{eq 1-(8)}
\left(1-\frac{\epsilon}{1300}\right)\frac{2}{n}
\leq
\frac{2}{n}
\frac{1}{1+2t_1/n}
\leq
\frac{1}{L(v_1)}
\leq
\frac{2}{n} \frac{1}{1-2t_1/n}
\leq
\left(1+\frac{\epsilon}{1300}\right)\frac{2}{n}
\end{equation}

Since $e \in M$, it is the case that $\frac{n}{2}-
t_1
 \leq L(v_2) \leq L(v_1) \leq
\frac{n}{2}+
t_1$. This allows us to use Lemma \ref{continuity} and deduce that $1-\epsilon/310
\leq
{{n}\choose{l}}/{{{n}\choose{L(v_1)}}}
\leq
1+\epsilon/310
$. This and Equation (\ref{eq 1-(8)}) allow us to approximate $Pr_{\mathcal{A}}[e_0 \in P_1]$ in Equation (\ref{eq 1-(7)}) the following way:

\begin{equation}
\label{eq 1-(8.1)}
\left(1-\frac{\epsilon}{150}\right)\frac{w}{n2^{n-1}}
\leq
Pr_{\mathcal{A}}
[e_0 \in P_1]
\leq \left(1+\frac{\epsilon}{150}\right)\frac{w}{n2^{n-1}}
\end{equation}
The algorithm outputs an influential edge if and only if $P_1$ hits an influential edge in $M \cap E_I$. At the same time, these events corresponding to different edges in $M \cap E_I$ are disjoint since $P_1$ can hit only at most one influential edge. This together with Bayes rule allows us to express the probability that the algorithm outputs $e_0$, conditioned on it succeeding:
\begin{equation}
\label{eq 1-(8.2)}
Pr_{e \in_R \mathcal{A}}[e=e_0]
=
Pr_{\mathcal{A}}
[e_{inf}=e_0]
=
Pr_{\mathcal{A}}
\left[e_0 \in P_1 \bigg \vert \bigvee_{e' \in M \cap E_I} (e' \in P_1) \right]
=
\frac{Pr_{\mathcal{A}}
\left[e_0 \in P_1 \right]}{Pr_{\mathcal{A}}
\left[\bigvee_{e' \in M \cap E_I} (e' \in P_1) \right]}
\end{equation}
Substituting Equation (\ref{eq 1-(8.2)}) into Equation (\ref{eq 1-(8.1)}) we get:

\begin{equation*}
\left(1-\frac{\epsilon}{150}\right)\frac{w}{n2^{n-1}}
\cdot
Pr_{\mathcal{A}}
\left[\bigvee_{e_1 \in M \cap E_I} (e_1 \in P_1) \right]
\leq
Pr_{e \in_R \mathcal{A}}[e=e_0]
\leq \left(1+\frac{\epsilon}{150}\right)\frac{w}{n2^{n-1}}
\cdot
Pr_{\mathcal{A}}
\left[\bigvee_{e_1 \in M \cap E_I} (e_1 \in P_1) \right]
\end{equation*}

Substituting two different edges $e_1$ and $e_2$ in $E_I \cap M$ in place of $e_0$ and then dividing the resulting inequalities gives us:
$$
\frac{1-\epsilon/150}{1+\epsilon/150}
Pr_{e \in_R \mathcal{A}}[e=e_2]
\leq
Pr_{e \in_R \mathcal{A}}[e=e_1] \leq 
\frac{1+\epsilon/150}{1-\epsilon/150} Pr_{e \in_R \mathcal{A}}[e=e_2]
$$
From this, the correctness of the algorithm follows. In case of failure, it makes $O(1)$ queries and in case of success, it makes $O(\log \epsilon +\log(n/\log n))=O(\log n)$ queries, because of the additional binary search. In either case, the run-time is polynomial. 

Now, regarding the probability of success, note that the events of $P_1$ crossing different edges in $E_I \cap M$ are disjoint since the function is monotone. Therefore, we can sum Equation (\ref{eq 1-(8.1)}) over all edges in $M \cap E_I$ and get that the probability of success is at least $\Theta \left(\frac{|E_I \cap M| w}{n 2^{n-1}} \right)$. Applying Lemma \ref{M is large} and substituting $w$, the inverse of the success probability is:
$$
O \left(\frac{n 2^{n-1}}{|E_I \cap M| w} \cdot \log n \right)
=
O \left(\frac{n}{I[f] w} \cdot \log n \right)
=
O \left(\frac{\log^{1.5}(n) \sqrt{n }}{I[f]\epsilon} \right)
$$ 
\end{proof}

The following lemma, roughly speaking, shows that just as in previous lemma, the probability that $P_1$ in $D$ hits an influential edge $e$ does not depend on where exactly $e$ is, as long as it is in $M \cap E_I$. The techniques we use are similar to the ones in the previous lemma and it follows the same outline. However here we encounter additional difficulties for two reasons: first of all, the length of $P_1$ is not fixed, but it is drawn from a probability distribution. Secondly, this probability distribution depends on the starting point of $P_1$. 

Note that unlike what we have in the previous lemma, here $P_1$ comes from the ascending-descending view of noise sensitivity.
\begin{lemma}
\label{lemma edges uniform}
For any edge $e \in M \cap E_I$ it is the case that:

\begin{equation*}
\left(1-\frac{\epsilon}{310}\right)\frac{\delta}{2^{n}}
\leq
p_{e}
\leq
\left(1+\frac{\epsilon}{310}\right)\frac{\delta}{2^{n}}
\end{equation*}
\end{lemma}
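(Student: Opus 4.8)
The plan is to mimic the structure of the proof of Lemma \ref{lemma edge sampler}, but now working with the random path $P_1$ coming from Process $D$, whose length is not fixed. Fix an edge $e = (v_1, v_2) \in M \cap E_I$ with $v_2 \preceq v_1$. By the symmetry of Process $D$ under permutations of the coordinates, the probability that $P_1$ passes through a given edge of the layer $\Lambda_e$ is the same for all edges in $\Lambda_e$, and since a descending path crosses at most one edge of $\Lambda_e$, these events are disjoint. Hence $p_e$ (modulo the bad events $\overline{E_1}, \overline{E_2}$) equals $Pr_D[P_1 \cap \Lambda_e \neq \emptyset \wedge \overline{E_1} \wedge \overline{E_2}]$ divided by $|\Lambda_e| = L(v_1)\binom{n}{L(v_1)}$. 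So the task reduces to estimating the numerator.

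To estimate the numerator, I would condition on the starting level $L(x) = l$. Given $L(x) = l$, the path $P_1$ hits $\Lambda_e$ precisely when $|P_1| \geq L(v_1) - l$ (i.e. $l \le L(v_1)$ and the Phase-1 walk descends at least $L(v_1)-l$ further), which by definition has probability $Q_{l; L(v_1)-l}$. Summing over $l$ from $L(v_1) - t_2 + 1$ up to $L(v_1)$ (the restriction to this range is exactly what the bad event $\overline{E_1}$ buys us — it caps the contribution of starting points more than $t_2$ above the edge; and $\overline{E_2}$ together with Observation \ref{M is large} lets us ignore edges outside $M$) gives, up to $(1\pm\epsilon/\text{const})$ factors,
\[
Pr_D[P_1 \cap \Lambda_e \neq \emptyset \wedge \overline{E_1} \wedge \overline{E_2}]
\approx
\sum_{l = L(v_1) - t_2 + 1}^{L(v_1)} \frac{\binom{n}{l}}{2^n}\, Q_{l; L(v_1) - l}.
\]
Now the key point: since all the relevant levels $l$ lie within $O(t_1 + t_2)$ of $n/2$, the Hypercube Continuity Lemma (Lemma \ref{continuity}) gives $\binom{n}{l} = (1 \pm \epsilon/\text{const})\binom{n}{L(v_1)}$ for every such $l$, so $\binom{n}{l}$ can be pulled out of the sum. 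What remains is to show that $\sum_{l} Q_{l; L(v_1)-l}$ — a sum, over start levels, of tail probabilities of the Phase-1 descent length — is (up to $(1\pm\epsilon/\text{const})$) a quantity independent of $L(v_1)$, namely approximately $n\delta$. Intuitively, $\sum_{k \ge 0} Q_{l; k}$ is the expected length of a Phase-1 path started at level $l$, which is $l\delta \approx (n/2)\delta$; and the sum over the $t_2$ starting levels just below $L(v_1)$ captures essentially all of this expectation because paths longer than $t_2$ are rare (again $\overline{E_1}$), and the per-level expected descent $\approx l\delta$ varies by only a $(1\pm\epsilon/\text{const})$ factor across the window since all $l$ are $\frac n2(1\pm o(1))$. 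Combining: numerator $\approx (1\pm\epsilon/\text{const}) \binom{n}{L(v_1)} \cdot \frac{n\delta}{2^n} \cdot \frac12$, and dividing by $L(v_1)\binom{n}{L(v_1)}$ and using $L(v_1) = \frac n2(1\pm\epsilon/\text{const})$ (Equation (\ref{eq 1-(8)})) yields $p_e = (1\pm\epsilon/310)\,\delta/2^n$, as claimed.

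The main obstacle is the analysis of $\sum_l Q_{l; L(v_1)-l}$: one must show both that truncating the sum at $t_2$ starting levels loses only a negligible fraction (this is where the precise choice $t_2 = n\delta(1 + 3\eta_2 \log n)$ and a Chernoff/Hoeffding tail bound on the number of successes in $\le n$ independent $\delta$-coin flips enters — the definition of $E_1$ was tailored for exactly this), and that the surviving sum equals $(1\pm\epsilon/\text{const})$ times a value independent of the edge's location. The subtlety is that $Q_{l;w}$ depends on $l$ not only through the "distance to go" $w$ but also through $l$ itself (more $1$-bits means more bits that can flip), so one cannot treat the walk as a clean fixed-bias process; one handles this by noting all relevant $l$ are within a $(1\pm\epsilon/\text{const})$-window of $n/2$ and bounding the dependence accordingly, exactly as Lemma \ref{continuity} is used for the binomial coefficients. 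Everything else is bookkeeping of the accumulated $(1\pm\epsilon/\text{const})$ factors to make sure they combine into the stated $(1 \pm \epsilon/310)$.
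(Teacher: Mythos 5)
Your proposal follows essentially the same route as the paper's proof: reduce $p_e$ by coordinate symmetry to a statement about the layer $\Lambda_e$, expand over starting levels, pull the binomial coefficients out via Lemma \ref{continuity}, identify the remaining sum of tail probabilities $Q_{\cdot;\cdot}$ with the expected Phase-1 descent length $\delta l \approx n\delta/2$, and control the truncation at $t_2$ levels with a Chernoff bound (this is exactly how the paper handles it, sandwiching $\sum_i Q_{L(v_1)+i-1;i}$ between $\sum_i Q_{L(v_1);i}$ and $\sum_i Q_{L(v_1)+t_2-1;i}$ using the monotonicity $Q_{l,w}\leq Q_{l+1,w}$).

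One concrete error in your writeup: the orientation of the summation window is reversed. Since $P_1$ is \emph{descending} and $e=(v_1,v_2)$ has $v_2\preceq v_1$, the walk can only cross $\Lambda_e$ if it \emph{starts at or above} level $L(v_1)$; the relevant starting levels are $l = L(v_1), L(v_1)+1,\dots,L(v_1)+t_2-1$, and the required descent from level $l$ is $l-L(v_1)+1$ steps (one must reach level $L(v_2)=L(v_1)-1$, not merely level $L(v_1)$). You instead sum over $l$ from $L(v_1)-t_2+1$ up to $L(v_1)$ with tail index $L(v_1)-l$ — levels ``just below'' the edge, from which a descending walk can never reach $\Lambda_e$, and with an off-by-one in the index (your $l=L(v_1)$ term would be $Q_{L(v_1);0}=1$). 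Because the binomial coefficients are nearly symmetric across this narrow window, the arithmetic happens to come out the same either way, so this is a fixable slip rather than a flaw in the method, but it should be corrected; the paper's Equation (\ref{eq 1-(9)}) has the correct form.
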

\begin{proof}
Let $e=(v_1, v_2)$, so $v_2 \preceq v_1$. 
We can use the same argument from symmetry as in the proof of Lemma \ref{lemma edge sampler}. 
We get:
\begin{multline*}
p_{e}
=
Pr_{D}[(e \in P_1) \land \overline{E_1} \land \overline{E_2}]=
Pr_{D}[(e \in P_1) \land \overline{E_1}]
\\=
Pr_{D}[(e \in P_1) \land (L(x)-L(v_1) < t_2) ]
=
\frac{Pr_{D}
[(P_1 \cap \Lambda_e\neq \emptyset)\land (L(x)-L(v_1) < t_2)]}{L(v_1){{n}\choose{L(v_1)}}}
\end{multline*}
Above we did the following: Recall that $E_2$ is the event that $P_1$ crosses an edge in $E_I \cap \overline{M}$. The first equality is true because $e\in E_I \cap M$ and $P_1$ can cross at most one influential edge, which implies that $E_2$ cannot happen. For the second equality we substituted the definition of $E_1$. For the third equality we used the symmetry of $D$ with respect to change of indexes just as in the proof of Lemma \ref{lemma edge sampler}.

Recall that we defined:
$$
Q_{l;w}
\myeq
Pr_{D}[\bigl\lvert P_1 \bigr\rvert\geq w |L(x)=l]
$$

This allows us to rewrite:

\begin{equation}
\label{eq 1-(9)}
p_{e}
=
\frac{\sum_{i=1}^{t_2} \frac{1}{2^n} {{n}\choose{L(v_1)+i-1}} Q_{L(v_1)+i-1;i}
}{L(v_1){{n}\choose{L(v_1)}}}
\end{equation}
Above we just looked at each of the layers of the hypercube and summed the contributions from them. 

To prove a bound on $p_e$ we will need a bound on $t_2$. Observation \ref{epsilon is large} implies that for any $H$ we can assume that $\sqrt{n} \delta \leq \frac{\epsilon}{H \log^{1.5} n}$ using which we deduce $$t_2=n \delta(1+3 \eta_2 \log(n))\leq 4 \eta_2 n \delta  \log(n) \leq \frac{4\eta_2 \epsilon}{H}\sqrt{\frac{n}{\log n}}$$
Furthermore, we will need a bound on the binomial coefficients in Equation (\ref{eq 1-(9)}). By picking $H$ to be a large enough constant, this allows us to use Lemma \ref{continuity} to bound the ratios of the binomial coefficients. For any $i$ between $1$ and $t_2$ inclusive:
\begin{equation}
\label{eq 1-(10)}
1-\frac{\epsilon}{1300}
\leq
\frac{{{n}\choose{L(v_1)+i-1}}}{{{n}\choose{L(v_1)}}}
\leq
1+\frac{\epsilon}{1300}
\end{equation}
Equation (\ref{eq 1-(8)}) is valid in this setting too. Substituting Equation (\ref{eq 1-(8)}) together with Equation (\ref{eq 1-(10)}) into Equation (\ref{eq 1-(9)}) we get:

\begin{equation}
\label{eq 1-(11)}
\left(1-\frac{\epsilon}{630}\right)
\frac{1}{n 2^{n-1}}
\sum_{i=1}^{t_2} Q_{L(v_1)+i-1;i}
\leq
p_{e}
\leq
\left(1+\frac{\epsilon}{630}\right)
\frac{1}{n 2^{n-1}}
\sum_{i=1}^{t_2} Q_{L(v_1)+i-1;i}
\end{equation}

Observe that a vertex on the level $l+1$ has more ones than a vertex on level $l$. So for a positive integer $w$, the probability that at least $w$ of them will flip is larger. Therefore, $Q_{l,w} \leq Q_{l+1,w}$. This allows us to bound:
\begin{equation}
\label{eq 1-(12)}
\sum_{i=1}^{t_2} Q_{L(v_1);i}
\leq
\sum_{i=1}^{t_2} Q_{L(v_1)+i-1;i}
\leq
\sum_{i=1}^{t_2} Q_{L(v_1)+t_2-1;i}
\end{equation}
Then, using Observation \ref{epsilon is large}:

\begin{multline}
\label{eq 1-(13)}
\sum_{i=1}^{t_2} Q_{L(v_1)+t_2-1;i}
\leq
\sum_{i=1}^{n} Q_{L(v_1)+t_2-1;i}
=
E_{D}[
L(x)-L(y)
|L(x)=L(v_1)+t_2-1]
\\
=
\delta(
L(v_1)+t_2-1
)
\leq
\delta(
n/2+t_1+t_2)
\leq
\left(1+\frac{\epsilon}{630}\right)\frac{n\delta}{2}
\end{multline}

Now, we bound from the other side:

\begin{equation}
\label{eq 1-(14)}
\sum_{i=1}^{t_2} Q_{L(v_1);i}=
\sum_{i=1}^{n} Q_{L(v_1);i}-\sum_{i=t_2+1}^{n} Q_{L(v_1);i}
\geq
\sum_{i=1}^{n} Q_{L(v_1);i}-n \cdot Q_{L(v_1);t_2}
\end{equation}
We bound the first term just as in Equation (\ref{eq 1-(13)}), and we bound the second one using a Chernoff bound:
\begin{equation}
\label{eq 1-(15)}
\sum_{i=1}^{n} Q_{L(v_1);i}
=
\delta L(v_1)
\geq
\delta(n/2-t_1)
\geq
\left(1-\frac{\epsilon}{1300}\right)\frac{\delta n}{2}
\end{equation}
\begin{equation}
\label{eq 1-(16)}
n \cdot Q_{L(v_1);t_2}
\leq
n \cdot \exp(
-
\frac{1}{3}
n \cdot \delta 3 
\eta_2 \log n
)
\leq
\frac{1}{n^{\eta_2-1}}
\end{equation}
Substituting Equations (\ref{eq 1-(15)}) and (\ref{eq 1-(16)}) into Equation (\ref{eq 1-(14)}) and using Observation \ref{epsilon is large} we get:
\begin{equation}
\label{eq 1-(17)}
\sum_{i=1}^{t_2} Q_{L(v_1);i}
\geq
\left(1-\frac{\epsilon}{1300}\right)\frac{\delta n}{2}
-\frac{1}{n^{\eta_2-1}}
\geq
\left(1-\frac{\epsilon}{630}\right)\frac{\delta n}{2}
\end{equation}
Substituting Equations (\ref{eq 1-(17)}) and (\ref{eq 1-(13)}) into Equation (\ref{eq 1-(12)}) we get:
\begin{equation*}
\left(1-\frac{\epsilon}{630}\right)\frac{\delta n}{2}
\leq
\sum_{i=1}^{t_2} Q_{L(v_1)+i-1;i}
\leq
\left(1+\frac{\epsilon}{630}\right)\frac{\delta n}{2}
\end{equation*}
Combining this with Equation (\ref{eq 1-(11)}) we deduce that:
$$
\left(1-\frac{\epsilon}{310}\right)\frac{\delta}{2^n}
\leq
p_{e}
\leq
\left(1+\frac{\epsilon}{310}\right)\frac{\delta}{2^n}
$$
\end{proof}
\subsection{Sampling descending and ascending paths going through a given influential edge.}

While we will use Lemma \ref{lemma edges uniform} in order to estimate $p_A$, we will use the machinery developed in this section to estimate $p_B$ in Section \ref{the ns estimation algo}. To that end, we will need to sample from a distribution of descending and ascending paths going through a given edge. The requirement on the distribution is that it should be close to the conditional distribution of such paths $P_1$ that would arise from process $D$, conditioned on going through $e$ and satisfying $\bar{E}_1$ and $\bar{E}_2$. See a more formal 
explanation in the statements of the lemmas.

In terms of resource consumption, the algorithms in this section require no queries to $f$ but only run-time. Note that the following simple exponential time algorithm achieves the correctness guarantee of Lemma \ref{lemma 3} and still does not need to make any new queries to the function: The algorithm repeatedly samples $P_1$ and $P_2$ from the process $D$ until it is the case that $P_1$ crosses the given edge $e$ and neither $E_1$ nor $E_2$ happen. When this condition is satisfied the algorithm outputs these paths $P_1$ and $P_2$. The resulting distribution would exactly equal the distribution we are trying to approximate in Lemma \ref{lemma 3}, which is the ultimate goal of the section. The polynomial run-time constraint compels us to do something else. Furthermore, this is the only part of the algorithm for which the polynomial time constraint is non-trivial.

A first approach to sampling $P_1$ would be to take random walks in opposite directions from the endpoints of the edge $e$ and then concatenate them together. This is in fact what we do. However, difficulty comes from determining the appropriate lengths of the walks for the following reason. If $P_1$ is longer, it is more likely to hit the influential edge $e$. This biases the distribution of the descending paths hitting $e$ towards the longer descending paths. In order to accommodate for this fact we used the following two-step approach: 
\begin{enumerate}
\item Sample only the levels of the starting and ending points of the path $P_1$. This is equivalent to sampling the length of the segment of $P_1$ before the edge $e$ and after it. This requires careful use of rejection sampling together with the techniques we used to prove Lemmas \ref{lemma edge sampler} and \ref{lemma edges uniform}. Roughly speaking, we use the fact that $P_1$ is distributed symmetrically with respect to the change of indices in order to reduce a question about the edge $e$ to a question about the layer $\Lambda_e$. Then, we use the Lemma \ref{continuity} to answer questions about random walks hitting a given layer. This is handled in Lemma \ref{length sampler}.
\item Sample a path $P_1$ that has the given starting and ending levels and passes through an influential edge $e$. This part is relatively straightforward. We prove that all the paths satisfying these criteria are equally likely. We sample one of them randomly by performing two random walks in opposite directions starting at the endpoints of $e$. This all is handled in Lemma \ref{lemma 3}.
\end{enumerate}

\begin{lemma}
\label{length sampler}
There is an algorithm $\mathcal{W}$ that takes as input an edge $e=(v_1, v_2)$ in  $M \cap E_I$, so that $v_2 \preceq v_1$, and samples two non-negative numbers $w_1$ and $w_2$, so that for any two non-negative $w_1'$ and $w_2'$:
\begin{multline}
\label{eq 1-(18)}
\left(1-\frac{\epsilon}{70}\right)
Pr_{\mathcal{W}(e)}
[(w_1=w_1') \land (w_2=w_2')
]
\\ \leq
Pr_{D}
[
(
L(x)-L(v_1)
=w_1')
\land
(L(v_2)-L(y)
=w_2')
\vert
(e \in P_1)
\land \overline{E_1}
\land \overline{E_2}
]
\\
\leq
\left(1+\frac{\epsilon}{70}\right)
Pr_{\mathcal{W}(e)}
[(w_1=w_1') \land (w_2=w_2')
]
\end{multline}
The algorithm requires no queries to $f$ and runs in time polynomial in $n$.
\end{lemma}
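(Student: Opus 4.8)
The plan is to write down, in closed form, both the target conditional distribution on the right of~(\ref{eq 1-(18)}) and the output distribution of $\mathcal{W}(e)$, observe that as functions of $(w_1',w_2')$ they differ only by a factor that Lemma~\ref{continuity} forces to be $1\pm o(1)$, and then use the fact that both are genuine probability distributions to pin that factor down. Write $\ell:=L(v_1)$ (so $L(v_2)=\ell-1$) and, for integers $a,b\ge0$, let $R(a,b):=\Pr[\mathrm{Bin}(\ell+a,\delta)=a+b+1]$ be the probability that starting Phase~1 of $D$ from a vertex of weight $\ell+a$ flips exactly $a+b+1$ of its ones.

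First I would compute the target. Exactly as in the proofs of Lemmas~\ref{lemma edge sampler} and~\ref{lemma edges uniform}, invariance of $D$ under coordinate permutations lets me replace $e\in P_1$ by $P_1\cap\Lambda_e\neq\emptyset$ at the cost of the factor $|\Lambda_e|=\ell\binom{n}{\ell}$; moreover, once $e\in P_1$ the event $\overline{E_2}$ holds automatically (since $e\in M\cap E_I$ and a descending path meets at most one influential edge) and $\overline{E_1}$ reduces to $L(x)-\ell<t_2$. Because $P_1$ descends one level at a time in $D$, fixing $L(x)=\ell+w_1'$ and $L(y)=(\ell-1)-w_2'$ already forces $P_1$ to cross $\Lambda_e$, so the numerator factors as $\mathbf{1}[w_1'<t_2]\cdot\frac{1}{|\Lambda_e|}\cdot\frac{\binom{n}{\ell+w_1'}}{2^{n}}\cdot R(w_1',w_2')$; dividing by the normalizer, which is exactly $p_e$ (Lemma~\ref{lemma edges uniform}), the target equals
\[
\frac{\binom{n}{\ell+w_1'}\,R(w_1',w_2')\,\mathbf{1}[w_1'<t_2]}{p_e\,\ell\binom{n}{\ell}\,2^{n}}.
\]

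Next I would compute the distribution of $\mathcal{W}(e)$. Conditioned on drawing $l=\ell+w_1'$ in Step~1 (probability $1/T$, where $T=\Theta(t_2)$ is the number of admissible levels), Steps~1--2 produce $(x,P_1)$ with exactly the distribution of $(x,P_1)$ in $D$ conditioned on $L(x)=\ell+w_1'$, and acceptance in Step~3 is precisely the event $L(y)\le\ell-1$. Hence a single attempt outputs $(w_1',w_2')$ with probability $\frac1T R(w_1',w_2')\,\mathbf{1}[w_1'\text{ admissible}]$, and, since $\mathcal{W}$ repeats until it accepts, its output distribution is this quantity renormalized — i.e.\ proportional to $R(w_1',w_2')\,\mathbf{1}[w_1'\text{ admissible}]$, with a constant independent of $(w_1',w_2')$.

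Comparing the two formulas, the sole $(w_1',w_2')$-dependent discrepancy is the factor $\binom{n}{\ell+w_1'}$ in the target, which I would remove with Lemma~\ref{continuity}: as in the proof of Lemma~\ref{lemma edges uniform}, Observation~\ref{epsilon is large} gives $t_2\le\frac{4\eta_2\epsilon}{H}\sqrt{n/\log n}$ for an arbitrarily large constant $H$, while $\ell$ and $\ell+w_1'$ both lie within $t_1+t_2\le 2t_1$ of $n/2$; choosing $H$ large, $\binom{n}{\ell+w_1'}=(1\pm\epsilon/280)\binom{n}{\ell}$ for every admissible $w_1'$. Thus the target and the $\mathcal{W}(e)$-output are pointwise proportional up to $(1\pm\epsilon/280)$ with a single constant $c$; since each sums to $1$ over the (common) support, $c\in(1\pm\epsilon/140)$, and the claimed $(1\pm\epsilon/70)$ pointwise bound follows with room to spare. (One sets the admissible range in Step~1 to coincide exactly with $\{w_1':w_1'<t_2\}$ — a definitional point about the integer part of $t_2$ — so that the two supports literally agree.) Finally, $\mathcal{W}$ makes no queries to $f$ (it inspects only Hamming weights and the fixed layer $\Lambda_e$), each attempt costs $\mathrm{poly}(n)$ time, and the level $l=\ell$ alone already accepts with probability $1-(1-\delta)^{\ell}=\Omega(1)$ (using $\delta\ge1/n$ and $\ell\ge n/2-t_1$), so the expected number of attempts is $O(T)=O(\sqrt n)$ and the total running time is polynomial. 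The main obstacle is the reduction in the first two paragraphs: one has to recognize that $\mathcal{W}$'s trick of drawing the starting \emph{level} uniformly (rather than drawing $x$ uniformly in the cube as $D$ does) introduces exactly one discrepancy, the binomial ratio, and that this ratio is tame only because $t_2=o(\sqrt{n/\log n})$; the rest — the two bad events being vacuous or support-restricting, and the normalization step — is bookkeeping.
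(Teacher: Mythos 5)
Your proof is correct and follows essentially the same route as the paper's: reduce from the edge $e$ to the layer $\Lambda_e$ by permutation symmetry, write both distributions in closed form so that the only $(w_1',w_2')$-dependent discrepancy is the binomial factor $\binom{n}{\ell+w_1'}$, and remove it with the Hypercube Continuity Lemma using $t_2=o(\sqrt{n/\log n})$. The only cosmetic differences are that the paper bounds the two normalizing denominators term by term (its Equations for $\Pr_{\mathcal{W}(e)}$ and the conditional $D$-probability share the sum $\sum_i Q_{L(v_1)+i-1;i}$ up to binomial ratios) rather than via your ``both sum to one'' argument, and its runtime analysis yields $\tilde{O}(1)$ expected attempts where you settle for $O(t_2)$ --- both suffice for the polynomial-time claim.
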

\textbf{Remark:} the approximation guarantee here is similar to the one in Lemma \ref{lemma edge sampler}. It guaranteed that the relative distance should be small point-wise. This guarantee is stronger than closeness in either $L_1$ and $L_\infty$ norms. We also employ an analogous approximation guarantee in Lemma \ref{lemma 3}.

\begin{proof}
Here is the algorithm (recall that we used the symbol $\Lambda_e$ to denote the whole layer of edges that have the same endpoint levels as $e$):

\hfill \newline
\noindent\fbox{
    \parbox{\textwidth}{
\textbf{Algorithm $\mathcal{W}$} (given an edge $e\myeq(v_1, v_2)$ so $v_1 \preceq v_2$)
\setlist{nolistsep}
\begin{enumerate}[noitemsep]
\item Pick an integer $l$ uniformly at random among the integers in $[L(v_1) ,L(v_1)+t_2-1]$. Pick a vertex $x$ randomly at level $l$.
\item As in phase 1 of the noise sensitivity process, traverse in random order through the indices of $x$ and with probability $\delta$, flip each index that equals to one to zero. The intermediate results form a path $P_1$, and we call its endpoint $y$.
\item If $P_1$ does not intersect $\Lambda_e$ go to step $1$.
\item Otherwise, output $w_1=L(x)-L(v_1)$ and $w_2=L(v_2)-L(y)$.
\end{enumerate}
}}
\hfill \newline
To prove the correctness of the algorithm, we begin by simplifying the expression above. If $e \in P_1$, then $P_1$ cannot contain any other influential edge, so $E_2$ cannot happen and we can drop it from notation. Additionally we can substitute the definition for $E_1$ so:
\begin{multline*}
Pr_{D}
[
(
L(x)-L(v_1)
=w_1')
\land
(L(v_2)-L(y)
=w_2')
\vert
e \in P_1
\land \overline{E_1}
\land \overline{E_2}
]
\\
=
Pr_{D}
\left[
(
L(x)-L(v_1)
=w_1')
\land
(
L(v_2)-L(y)
=w_2')
\bigg \vert
e \in P_1
\land (L(x)-L(v_1) < t_2)
\right]
\end{multline*}

Now we can use the fact that $D$ is symmetric with respect to the change of indices, so we can substitute $e$ with any $e'$ in $\Lambda_e$. Therefore:

\begin{multline}
\label{eq 1-(19)}
Pr_{D}
[
(
L(x)-L(v_1)
=w_1')
\land
(L(v_2)-L(y)
=w_2')
\vert
(
e \in P_1)
\land \overline{E_1}
\land \overline{E_2}
]
\\
=
\sum_{e' \in \Lambda_e}
Pr_{D}
\left[
(L(x)-L(v_1)
=w_1')
\land
(L(v_2)-L(y)
=w_2')
\bigg \vert
e' \in P_1
\land (L(x)-L(v_1) < t_2)
\right]
\cdot
\frac{1}{|\Lambda_e|}
\\
=
\sum_{e' \in \Lambda_e}
Pr_{D}
\left[
L(x)-L(v_1)
=w_1'
\land
(L(v_2)-L(y)
=w_2')
\bigg \vert
e' \in P_1
\land (L(x)-L(v_1) < t_2)
\right]
\\
\times
Pr_{D}\left[e' \in P_1 \bigg \vert P_1 \cap \Lambda_e \neq \emptyset\right]
\\ =
Pr_{D}
\left[(
L(x)-L(v_1)
=w_1')
\land
(L(v_2)-L(y)
=w_2')
\bigg\vert
(P_1 \cap \Lambda_e \neq \emptyset)
\land (L(x)-L(v_1) < t_2)
\right]
\end{multline}

\begin{observation}
\label{obs w1 w2}
The algorithm never returns $w_1 \geq t_2$, which is appropriate since the distribution being approximated is conditioned on $L(x)-L(v_1) < t_2$. In what follows we consider $1\leq w_1 < t_2$ and $1\leq w_1' < t_2$. Additionally, recall that $w_2'$ is non-negative.
\end{observation}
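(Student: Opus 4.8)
The plan is to read the bound off directly from step 1 of Algorithm $\mathcal{W}(e)$, which is where the value of $w_1$ is ultimately determined. There the starting level $l=L(x)$ is drawn uniformly from the integers in $[L(v_1),\,L(v_1)+t_2-1]$, so $L(x)\le L(v_1)+t_2-1$ in every run of the algorithm. Whenever $\mathcal{W}$ reaches step 4 and actually outputs a value, it sets $w_1=L(x)-L(v_1)$, and hence $w_1\le t_2-1<t_2$; in particular $\mathcal{W}$ never returns $w_1\ge t_2$. I would then observe that this is exactly consistent with the distribution being approximated: as established at the start of this proof (after dropping $E_2$ and substituting the definition of $E_1$), the conditional distribution on the right-hand side of Equation (\ref{eq 1-(19)}) is conditioned on $L(x)-L(v_1)<t_2$, so it, too, assigns probability zero to every outcome with $w_1'\ge t_2$. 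Thus for $w_1'\ge t_2$ both the outer and the middle quantities in Equation (\ref{eq 1-(18)}) vanish and the claimed inequality is trivially true, so it suffices to verify Equation (\ref{eq 1-(18)}) for $w_1'<t_2$.

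I would also record here the scope for the rest of the proof: the analysis will be carried out for $1\le w_1<t_2$ and $1\le w_1'<t_2$, with the degenerate boundary case $w_1'=0$ (which forces $x=v_1$) being handled in exactly the same way. Finally, to justify that $w_2'$ ranges over the non-negative integers, I would note that $\mathcal{W}$ reaches step 4 only when $P_1$ contains an edge of $\Lambda_e$, whose lower endpoint lies at level $L(v_1)-1=L(v_2)$; since $P_1$ is descending, this forces $L(y)\le L(v_2)$, so $w_2=L(v_2)-L(y)\ge 0$, matching the non-negativity already asserted in the statement of Lemma \ref{length sampler}.

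There is essentially no obstacle here: the inequality $w_1<t_2$ is simply forced by the range of $l$ chosen in step 1 of $\mathcal{W}$, and the only point worth double-checking is that this upper truncation coincides with the event $\overline{E_1}$ (given $e\in P_1$) defining the target conditional distribution, which was already verified earlier in this proof.
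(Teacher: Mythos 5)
Your proposal is correct and follows essentially the same (immediate) justification the paper relies on: the bound $w_1<t_2$ is forced by the range of $l$ in step 1 of $\mathcal{W}$, and it matches the conditioning on $L(x)-L(v_1)<t_2$ coming from $E_1$, so both sides of Equation (\ref{eq 1-(18)}) vanish when $w_1'\geq t_2$. The paper states this observation without further proof, and your added remarks on $w_2'\geq 0$ and the boundary case are consistent with how the observation is used later.
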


Before we continue, we derive the following intermediate results.
Because of how $P_1$ is chosen in $\mathcal{W}$, we can use the notation:
$$
Q_{l;w}
=
Pr_{D}[\bigl\lvert P_1 \bigr\rvert\geq w |L(x)=l]
=
Pr_{\mathcal{W}(e)}[\bigl\lvert P_1 \bigr\rvert\geq w |L(x)=l]
$$
The last equality is true because after picking $x$ both process $D$ and algorithm $\mathcal{W}(e)$ pick the path $P_1$ exactly the same way. Therefore, the path $P_1$ is distributed the same way for the both processes if we condition on them picking the same starting point $x$. 

We can express:

\begin{multline}
\label{eq 1-(19.5)}
Pr_{\mathcal{W}(e)}
[
(L(x)-L(v_1)=w_1')
\land
(L(v_2)-L(y)=w_2')
]
=
\frac{1}{t_2}
Pr_{\mathcal{W}(e)}
[
L(v_2)-L(y)=w_2'
\vert
L(x)-L(v_1)=w_1'
]
\\=
\frac{1}{t_2}
Pr_{D}
[
L(v_2)-L(y)=w_2'
\vert
L(x)-L(v_1)=w_1'
]
\\=
\frac{1}{t_2}Q_{L(v_1)+w_1';w_1'+1}
Pr_{D}\left[
L(v_2)-L(y)=w_2' \bigg \vert (L(x)-L(v_1)= w_1')
\land (|P_1| \geq w_1'+1)
\right]
\end{multline}
Above: (i) The first equality uses the fact that in $\mathcal{W}(e)$ the level of $x$ is chosen uniformly between the $t_2$ levels, and therefore  $Pr_{\mathcal{W}(e)}[L(x)-L(v_1)=w_1']=1/t_2$. (ii) The second inequality uses the fact that after picking $x$, both $W(e)$ and $D$ pick $P_1$ and consequently $y$ identically. (iii) The third inequality is an application of the law of conditional probabilities together with the definition of $Q_{l,w}$. 

Building on the previous equality we have:

\begin{multline}
\label{eq 1-(20)}
Pr_{\mathcal{W}(e)}
[
(w_1=w_1')
\land
(w_2=w_2')
]
=
Pr_{\mathcal{W}(e)}
\left[
(L(x)-L(v_1)=w_1')
\land
(L(v_2)-L(y)=w_2')
\bigg \vert
P_1 \cap \Lambda_e \neq \emptyset
\right]
\\
=
\frac{Pr_{\mathcal{W}(e)}
[
(L(x)-L(v_1)=w_1')
\land
(L(v_2)-L(y)=w_2')
\land (P_1 \cap \Lambda_e \neq \emptyset)
]}{Pr_{\mathcal{W}}[P_1 \cap \Lambda_e \neq \emptyset
]]}
\\=
\frac{Pr_{\mathcal{W}(e)}
[
(L(x)-L(v_1)=w_1')
\land
(L(v_2)-L(y)=w_2')
]}{Pr_{\mathcal{W}}[P_1 \cap \Lambda_e \neq \emptyset
]]}
\\=
\frac{Q_{L(v_1)+w_1';w_1'+1}
Pr_{D}\left[
L(v_2)-L(y)=w_2' \bigg \vert (L(x)=L(v_1)+w_1')\land (|P_1| \geq w_1'+1)\right]}
{\sum_{i=1}^{t_2} Q_{L(v_1)+i-1;i}
}
\end{multline}
Above: (i) At step (5) of the algorithm, we have $w_1=L(x)-L(v_1)$ and $w_2=L(v_2)-L(y)$, but this happens only after the condition on step (4) is satisfied. This adds a conditioning at the first equality. (ii) The second equality comes from Observation \ref{obs w1 w2} since $w_1' \geq 1$ and $w_2'\geq 0$, hence $\Lambda_e$ starts above $\Lambda_e$ and ends below it. Thus, the first two clauses imply the last one, so we drop it. (iii) Regarding the third equality, in the numerator we substituted Equation (\ref{eq 1-(19.5)}) whereas in the denominator we computed $Pr_{\mathcal{W}}[P_1 \cap \Lambda_{e} \neq \emptyset]$ by breaking it into contributions from the $t_2$ levels above $\Lambda_e$. Finally, we canceled $\frac{1}{t_2}$ from both the numerator and the denominator.

Back to Equation (\ref{eq 1-(19)}). Analogous to how we derived Equation (\ref{eq 1-(19.5)}), we have:

\begin{multline*}
Pr_{D}
[(
L(x)-L(v_1)
=w_1')
\land(
L(v_2)-L(y)
=w_2')
]
\\ =
Pr_{D}
[
L(x)-L(v_1)
=w_1'
]
\cdot
Pr_{D}
\left[
L(v_2)-L(y)
=w_2'
\bigg \vert
L(x)-L(v_1)
=w_1'
\right]
\\
=
\frac{1}{2^n} {{n}\choose{L(v_1)+w_1'}} Q_{L(v_1)+w_1';w_1'+1}
\cdot
Pr_{D}\left[
L(v_2)-L(y)=w_2' \bigg \vert (L(x)-L(v_1)=w_1') \land (|P_1| \geq w_1'+1)
\right]
\end{multline*}
In addition to the steps analogous to the ones involved in getting Equation (\ref{eq 1-(19)}), above we used the fact that the layer $L(v_1)+w_1'$ has ${{n}\choose{L(v_1)+w_1'}} Q_{L(v_1)+w_1';w_1'+1}$ vertices out of the $2^n$ vertices overall.

Again, the same way we derived Equation (\ref{eq 1-(20)}) from Equation (\ref{eq 1-(19.5)}) using Bayes rule, we get:

\begin{multline}
\label{eq 1-(21)}
Pr_{D}
\left[
(L(x)-L(v_1)
=w_1')
\land
(L(v_2)-L(y)
=w_2')
\bigg \vert
(P_1 \cap \Lambda_e \neq \emptyset )
\land (L(x)-L(v_1) < t_2)\right]
\\
=
\frac{Pr_{D}
[
(L(x)-L(v_1)
=w_1')
\land
(L(v_2)-L(y)
=w_2')
\land 
(P_1 \cap \Lambda_e \neq \emptyset )
\land (L(x)-L(v_1) < t_2)
]}{Pr_{D}[(P_1 \cap \Lambda_e \neq \emptyset) 
\land (L(x)-L(v_1) < t_2)]}
\\=
\frac{Pr_{D}
[
(L(x)-L(v_1)
=w_1')
\land
(L(v_2)-L(y)
=w_2')
]}{Pr_{D}[(P_1 \cap \Lambda_e \neq \emptyset) 
\land (L(x)-L(v_1) < t_2)]}
\\
=
\frac{\frac{1}{2^n}{{n}\choose{L(v_1)+w_1'}} Q_{L(v_1)+w_1';w_1'+1}
\cdot
Pr_{D}\left[
L(v_2)-L(y)=w_2' \bigg\vert (L(x)-L(v_1)=w_1')\land (|P_1| \geq w_1'+1)\right]}{\sum_{i=1}^{t_2} \frac{1}{2^n} {{n}\choose{L(v_1)+i-1}} Q_{L(v_1)+i-1;i}}
\end{multline}
In the second equality we dropped the clauses $(P_1 \cap \Lambda_e \neq \emptyset)$ and $(L(x)-L(v_1) < t_2)$ because by Observation \ref{obs w1 w2} they are implied by the first two clauses.

Since by Observation \ref{obs w1 w2} it is the case that $1 \leq i \leq t_2$ and $0 \leq w_1' < t_2$, the same way we proved Equation (\ref{eq 1-(10)}) we have:

\begin{equation}
\label{eq 1-(22)}
1-\frac{\epsilon}{150}
\leq
\frac{{{n}\choose{L(v_1)+i-1}}}{{{n}\choose{L(v_1)+w_1'}}}
\leq
1+\frac{\epsilon}{150}
\end{equation}
Combining Equations (\ref{eq 1-(21)}) and (\ref{eq 1-(22)}) we get:
\begin{multline}
\label{eq 1-(23)}
\left(1-\frac{\epsilon}{70}\right)
\frac{Q_{L(v_1)+w_1';w_1'+1}
\cdot
Pr_{D}[
L(v_2)-L(y)=w_2' \vert (L(x)-L(v_1)=w_1')\land (|P_1| \geq w_1'+1)]}{\sum_{i=1}^{t_2} Q_{L(v_1)+i-1;i}}
\\
\leq
Pr_{D}
[(
L(x)-L(v_1)
=w_1')
\land
(L(v_2)-L(y)
=w_2')
\vert
(P_1 \cap \Lambda_e \neq \emptyset)
\land (L(x)-L(v_1) \leq t_2)
]
\\
\leq
\left(1+\frac{\epsilon}{70}\right)
\frac{Q_{L(v_1)+w_1';w_1'+1}
\cdot
Pr_{D}[
L(v_2)-L(y)=w_2' \vert (L(x)-L(v_1)=w_1')\land (|P_1| \geq w_1'+1)]}{\sum_{i=1}^{t_2} Q_{L(v_1)+i-1;i}}
\end{multline}
Substituting Equations (\ref{eq 1-(19)}) and (\ref{eq 1-(20)}) into Equation (\ref{eq 1-(23)}) proves the correctness of the algorithm. 

Now, we consider the run-time. Since $L(x) \in [L(v_1), L(v_1)+t_2-1]$ and $L(v_1) \geq n/2-t_1 \geq n/4$, then $L(x) \geq n/4$. In expectation at least $\delta n/4$ of these indices, which equal to one, should flip. We can use a Chernoff bound to bound the probability of less than $\delta n/8$ of them flipping by $\exp\left(-\frac{1/4 \cdot n \delta/4
}{2}\right)=\exp\left(-\frac{n \delta}{32}\right) \leq \exp(-\frac{1}{32})$. Therefore:

$$
Pr_{\mathcal{W}}[|P_1| \geq n \delta/8]
\geq
\Theta(1)
$$

If this happens, it is sufficient for $l$ to be less than $L(v_1) +n \delta/8$ for $P_1$ to intersect $\Lambda_e$. The probability of this happening is at least $\frac{n \delta/8}{t_2}$. Therefore, we can conclude:

$$
Pr_{\mathcal{W}}[P_1 \cap \Lambda_e \neq \emptyset]
\geq
\Omega\left(\frac{n \delta}{t_2}\right)
$$
Then, the number of time the algorithm goes through the loop is $O(t_2/(n \delta))=\tilde{O}(1)$. Thus, the algorithm runs in polynomial time.
\end{proof}

\begin{lemma}
\label{lemma 3}
There exists an algorithm $\mathcal{B}$ with the following properties. It takes as input an edge $e=(v_1, v_2)$ in $M \cap E_I$, so that $v_2 \preceq v_1$ and outputs paths $P_1$ and $P_2$  together with hypercube vertices $x, y$ and $z$. It is the case that $x$ is the starting vertex of $P_1$, $y$ is both the starting vertex of $P_2$ and the last vertex of $P_1$, and $z$ is the last vertex of $P_2$. Additionally, $P_1$ is descending and $P_2$ is ascending. Furthermore, for any pair of paths $P_1'$ and $P_2'$ we have:

\begin{multline}
\label{eq 1-(24)}
\bigl \lvert
Pr_{\mathcal{B}(e)}[(P_1=P_1') \land (P_2=P_2')]
-
Pr_{D}[(P_1=P_1') \land (P_2=P_2')\vert (e \in P_1) \land \overline{E_1} \land \overline{E_2}]
\bigr \rvert
\\ \leq
\frac{\epsilon}{70}
Pr_{\mathcal{B}(e)}[(P_1=P_1') \land (P_2=P_2')]
\end{multline}

It requires no queries to the function and takes computation time polynomial in $n$ to draw one sample.
\end{lemma}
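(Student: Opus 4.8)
The plan is to regard $\mathcal{B}(e)$ as a two-stage sampler and reduce its correctness to Lemma~\ref{length sampler}. Recall the algorithm: $\mathcal{B}(e)$ first calls $\mathcal{W}(e)$ to obtain a pair of level-offsets $(w_1,w_2)$; it then performs an ascending random walk of length $w_1$ from $v_1$ to produce $x$, a descending random walk of length $w_2$ from $v_2$ to produce $y$, forms $P_1$ by concatenating the reverse of the first walk, the edge $e$, and the second walk, and finally generates $P_2$ by the Phase~2 rule of process $D$ started at $y$ over the coordinates on which $x$ is $0$. All the structural claims of the lemma (that $x,y,z$ are the stated endpoints, that $P_1$ is descending and $P_2$ ascending, that no queries to $f$ are made, and that the running time is polynomial) are immediate from this description; the only subroutine whose polynomial running time is not obvious is $\mathcal{W}(e)$, which was already established in Lemma~\ref{length sampler}. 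Note also that the support of $\mathcal{B}(e)$ coincides with the support of the conditional law appearing in the lemma (both consist of a descending $P_1'$ through $e$ with $L(x')-L(v_1)<t_2$, together with a compatible ascending $P_2'$ over the $0$-coordinates of $x'$), so we may assume $(P_1',P_2')$ lies in it.

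The core of the argument is to show that, \emph{conditioned on the value of $(w_1,w_2)$}, the distribution $\mathcal{B}(e)$ induces on $(P_1,P_2)$ is \emph{exactly} the law of $(P_1,P_2)$ under process $D$ conditioned on $e\in P_1 \land \overline{E_1} \land \overline{E_2} \land (L(x)-L(v_1)=w_1) \land (L(v_2)-L(y)=w_2)$. As in the earlier proofs, conditioning on $e\in P_1$ rules out $E_2$ (a descending $P_1$ meets at most one influential edge) and turns $\overline{E_1}$ into the constraint $L(x)-L(v_1)<t_2$, which is already built into $\mathcal{W}$. For the identity itself, let $j^{*}$ be the coordinate on which $v_1$ and $v_2$ differ and condition on $e\in P_1$ and on the offsets $w_1,w_2$. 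I would verify, in order: (i) for any fixed vertex $x \succeq v_1$ at level $L(v_1)+w_1$, the probability that Phase~1 of $D$ routes through $e$ depends only on $w_1$ and not on $x$, so $x$ is uniform over such vertices --- equivalently, the set $A$ of $w_1$ coordinates on which $x$ exceeds $v_1$ is a uniform $w_1$-subset of the $0$-coordinates of $v_1$; (ii) the portion of $P_1$ from $x$ down to $v_1$ is a uniform ordering of $A$, this being exactly how Phase~1 orders the coordinates it decides to flip once that set is fixed, so its reverse is a uniform ascending walk of length $w_1$ from $v_1$, matching step~2 of $\mathcal{B}$; (iii) symmetrically, the portion of $P_1$ from $v_2$ down to $y$ is a uniform descending walk of length $w_2$ from $v_2$; (iv) these two pieces are independent, since they involve disjoint coordinate sets and the only coupling --- that all of $A$ precedes $j^{*}$ which precedes all of $B$ in the Phase~1 order --- has probability $w_1!\,w_2!/(w_1+w_2+1)!$ regardless of which sets $A,B$ are chosen; and (v) given $x$ and $y$, the path $P_2$ under $D$ is obtained from $y$ by flipping each $0$-coordinate of $x$ independently with probability $\delta$ in a uniform order, independently of everything in Phase~1 and of $\overline{E_1},\overline{E_2}$, which is precisely step~4 of $\mathcal{B}$. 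Identifying ``pick a uniform subset of the correct size, then a uniform ordering'' with the step-by-step choices of an ascending or descending random walk is a one-line counting check.

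With this exact conditional identity in hand, I would finish by invoking Lemma~\ref{length sampler}. Let $\hat P(w_1,w_2)$ denote the marginal law of $(w_1,w_2)$ under $\mathcal{W}(e)$, let $P^{*}(w_1,w_2)$ denote the law of $(L(x)-L(v_1),\,L(v_2)-L(y))$ under $D$ conditioned on $e\in P_1 \land \overline{E_1} \land \overline{E_2}$, and let $C(\cdot \mid w_1,w_2)$ be the common conditional law established above. Lemma~\ref{length sampler} gives $|P^{*}-\hat P|\le(\epsilon/70)\,\hat P$ pointwise. Writing $(w_1',w_2')$ for the offsets determined by $P_1'$, we have $Pr_{\mathcal{B}(e)}[(P_1,P_2)=(P_1',P_2')]=\hat P(w_1',w_2')\,C(P_1',P_2'\mid w_1',w_2')$ and $Pr_{D}[(P_1,P_2)=(P_1',P_2')\mid e\in P_1 \land \overline{E_1} \land \overline{E_2}]=P^{*}(w_1',w_2')\,C(P_1',P_2'\mid w_1',w_2')$, so subtracting and applying the pointwise bound on $|P^{*}-\hat P|$ yields
\[
\bigl|\, Pr_{\mathcal{B}(e)}[(P_1,P_2)=(P_1',P_2')] - Pr_{D}[(P_1,P_2)=(P_1',P_2')\mid e\in P_1 \land \overline{E_1} \land \overline{E_2}] \,\bigr| \le \frac{\epsilon}{70}\, Pr_{\mathcal{B}(e)}[(P_1,P_2)=(P_1',P_2')],
\]
which is exactly Inequality~(\ref{eq 1-(24)}).

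I expect the main obstacle to be steps (i)--(iv): arguing carefully that conditioning process $D$ on $e\in P_1$ and on the two level-offsets genuinely decouples $P_1$ into two independent uniform walks with the correct ground sets, with no hidden bias introduced either by the ``$A$ before $j^{*}$ before $B$'' ordering constraint or by the Phase~1 coordinates that $D$ chooses not to flip. Everything afterward --- the counting identification of the walks and the final combination with Lemma~\ref{length sampler} --- is routine.
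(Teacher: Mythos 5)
Your proposal is correct and follows essentially the same route as the paper: both decompose the law of $(P_1,P_2)$ into the marginal on the level-offsets (handled by Lemma~\ref{length sampler}) times the conditional law of the paths given those offsets, and both argue that the latter conditional law is \emph{exactly} the same for $\mathcal{B}(e)$ and for the conditioned process $D$, so the $(1\pm\epsilon/70)$ error enters only through the offsets. The only difference is that the paper dispatches your steps (i)--(iv) with a one-line appeal to symmetry of $P_1$ under coordinate exchange, whereas you spell out the uniform-subset/uniform-ordering/independence verification explicitly; this is a matter of detail, not of approach.
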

\begin{proof}
Below we give the algorithm:
\hfill \newline
\noindent\fbox{
    \parbox{\textwidth}{
\setlist{nolistsep}
\textbf{Algorithm $\mathcal{B}$} (given an influential edge $e$)
\begin{enumerate}[noitemsep] 
\item Use $\mathcal{W}(e)$ to sample $w_1$ and $w_2$.
\item Perform an ascending random walk of length $w_1$ starting at $v_1$ and call its endpoint $x$. Similarly, perform a descending random walk starting at $v_2$ of length $w_2$, call its endpoint $y$.
\item Define $P_1$ as the descending path that results between $x$ and $y$ by concatenating the two paths from above, oriented appropriately, and edge $e$. 
\item Define $P_2$ just as in phase 2 of our process starting at $y$. Consider in random order all the zero indices $y$ has in common with $x$ and flip each with probability $\delta$.
\item Return $P_1$ ,$P_2$, $x$, $y$ and $z$. 
\end{enumerate}
}}
\hfill \newline
Now, we analyze the algorithm.
Without loss of generality we assume that $P_1'$ is descending and starts at a vertex $x'$ and ends at a vertex $y'$. $P_2'$, in turn, is ascending, starts at $y'$ and ends at $z'$. For all the other paths all the probabilities in (24) equal to zero. We have that:

\begin{multline}
\label{eq 1-(25)}
Pr_{D}\left[(P_1=P_1') \land (P_2=P_2')
\bigg \vert
(e \in P_1) \land \overline{E_1} \land \overline{E_2}\right]
\\ =
Pr_{D}\left[(L(x)=L(x')) \land (L(y)=L(y'))
\bigg \vert
(e \in P_1) \land \overline{E_1} \land \overline{E_2}\right]
\\
\times
Pr_{D}\left[(P_1=P_1') \land (P_2=P_2')
\bigg \vert 
(e \in P_1) \land \overline{E_1} \land \overline{E_2} \land (L(x)=L(x')) \land (L(y)=L(y'))\right]
\end{multline}
Similarly, we can write:
\begin{multline}
\label{eq 1-(26)}
Pr_{\mathcal{B}(e)}[(P_1=P_1') \land (P_2=P_2')]
=
Pr_{\mathcal{B}(e)}[(L(x)=L(x')) \land (L(y)=L(y'))]
\\
\times
Pr_{\mathcal{B}(e)}
\left[(P_1=P_1') \land (P_2=P_2')
\bigg \vert
(L(x)=L(x')) \land (L(y)=L(y'))\right]
\end{multline}

By Lemma \ref{length sampler} we know that:
\begin{multline}
\label{eq 1-(27)}
\left(1-\frac{\epsilon}{70}\right)
Pr_{\mathcal{B}(e)}[(L(x)=L(x')) \land (L(y)=L(y'))]
\\
\leq
Pr_{D}\left[(L(x)=L(x')) \land (L(y)=L(y'))
\bigg \vert
(e \in P_1) \land \overline{E_1} \land \overline{E_2}\right]
\\
\leq
\left(1+\frac{\epsilon}{70}\right)
Pr_{\mathcal{B}(e)}[(L(x)=L(x')) \land (L(y)=L(y'))]
\end{multline}
Considering Equations (\ref{eq 1-(25)}), (\ref{eq 1-(26)}) and (\ref{eq 1-(27)}) together, for the lemma to be true, it is enough that:

\begin{multline}
\label{eq 1-(28)}
Pr_{D}[(P_1=P_1') \land (P_2=P_2')|e \in P_1 \land \overline{E_1} \land \overline{E_2} \land (L(x)=L(x')) \land (L(y)=L(y'))]
\\=
Pr_{\mathcal{B}(e)}[(P_1=P_1') \land (P_2=P_2')|(L(x)=L(x')) \land (L(y)=L(y'))]
\end{multline}
Since $e$ is in $M \cap E_I$, if $e \in P_1$ then $E_2$ cannot happen. If $L(x') \geq L(v_1)+t_2$ or $L(x') <L(v_1)$ both sides are zero. Otherwise, $E_1$ cannot happen either and Equation (\ref{eq 1-(28)}) is equivalent to:

\begin{multline}
\label{eq 1-(29)}
Pr_{D}[(P_1=P_1') \land (P_2=P_2')|(e \in P_1) \land (L(x)=L(x')) \land (L(y)=L(y'))]
\\=
Pr_{\mathcal{B}(e)}[(P_1=P_1') \land (P_2=P_2')|(L(x)=L(x')) \land (L(y)=L(y'))]
\end{multline} 
We first argue that:
\begin{multline}
\label{eq 1-(30)}
Pr_{D}[P_1=P_1'|(e \in P_1) \land (L(x)=L(x')) \land (L(y)=L(y'))]
=
Pr_{\mathcal{B}(e)}[P_1=P_1'|(L(x)=L(x')) \land (L(y)=L(y'))]
\end{multline}
This comes from the fact that in both $D$ and $\mathcal{B}(e)$ the distribution of $P_1$ is symmetric with respect to exchange of coordinates. Now we argue that:

\begin{multline}
\label{eq 1-(31)}
Pr_{D}[P_2=P_2'|(e \in P_1) \land (L(x)=L(x')) \land (L(y)=L(y')) \land (P_1=P_1')]
\\=
Pr_{\mathcal{B}(e)}[P_2=P_2'|(L(x)=L(x')) \land (L(y)=L(y')) \land (P_1=P_1')]
\end{multline}
This is true because given $P_1$, both $D$ and $\mathcal{B}(e)$ choose $P_2$ the same way. Combining Equation (\ref{eq 1-(31)}) with Equation (\ref{eq 1-(30)}) we get Equation (\ref{eq 1-(29)}) which completes the proof of correctness. 

The claims about run-time follow from the run-time guarantee on $\mathcal{W}$.
\end{proof}
\subsection{The noise sensitivity estimation algorithm.}
\label{the ns estimation algo}
Our algorithm is:
\hfill \newline
\noindent\fbox{
    \parbox{\textwidth}{
\textbf{Algorithm for estimating noise sensitivity.} (given oracle access to a monotone function $f$\blue{$:\{0,1\}^n \rightarrow \{0,1\}$})
\setlist{nolistsep}
\begin{enumerate}[noitemsep] 
\item \blue{Using the algorithm from \cite{num21} as described in Theorem \ref{estimate influence}, compute an approximation to the influence of $f$ to within a multiplicative factor of ($1 \pm \epsilon/33$)}. This gives us $\tilde{I}$. 
\item Compute $\tilde{p}_A:=\delta \tilde{I}/2$.
\item Initialize $\alpha:=0$ and $\beta:=0$. Repeat the following until $\alpha=\frac{768 \ln 200}{\epsilon^2}$.
\begin{itemize}
\item Use algorithm $\mathcal{A}$ from Lemma \ref{lemma edge sampler} repeatedly to successfully sample an edge $e$.
\item From Lemma \ref{lemma 3} use the algorithm $\mathcal{B}$, giving it $e$ as input, and sample $P_1$, $P_2$, $x$, $y$ and $z$.
\item If it is the case that $f(x)=1$ and $f(z)=0$, then $\alpha:=\alpha+1$.
\item $\beta:=\beta+1$. 
\end{itemize}
\item Set $\tilde{p}_B=\frac{\alpha}{\beta}$.
\item Return $2 \tilde{p}_A \tilde{p}_B$.
\end{enumerate}
}}

We analyze the algorithm by combining all the lemmas from the previous section. First, we prove that:

\begin{lemma}
\label{bound for p_A 1}
It is the case that:
$$
\left(1-\frac{\epsilon}{150}\right) \delta I[f]/2 
\leq
p_A
\leq
\left(1+\frac{\epsilon}{150}\right)\delta I[f] /2
$$

\end{lemma}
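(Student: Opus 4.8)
The plan is to obtain the bound by aggregating the per-edge estimates already proved. The key facts I would use are Equation~(\ref{eq 1-(6)}), which writes $p_A = \sum_{e \in E_I \cap M} p_e$ exactly, Lemma~\ref{lemma edges uniform}, which pins down each $p_e$ to within $(1 \pm \epsilon/310)\delta/2^n$, and Observation~\ref{M is large} together with the identity $|E_I| = 2^{n-1} I[f]$, which controls the number of terms in the sum.

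First I would apply Lemma~\ref{lemma edges uniform} termwise and sum over $e \in E_I \cap M$, using Equation~(\ref{eq 1-(6)}), to get
$$\left(1-\frac{\epsilon}{310}\right)\frac{\delta \, |E_I \cap M|}{2^n} \le p_A \le \left(1+\frac{\epsilon}{310}\right)\frac{\delta \, |E_I \cap M|}{2^n}.$$
Then I would substitute the counting bounds $\left(1-\frac{\epsilon}{310}\right)|E_I| \le |E_I \cap M| \le |E_I|$ from Observation~\ref{M is large} and replace $|E_I|$ by $2^{n-1} I[f]$. This turns the upper bound into $\left(1+\frac{\epsilon}{310}\right)\frac{\delta I[f]}{2}$ and the lower bound into $\left(1-\frac{\epsilon}{310}\right)^2\frac{\delta I[f]}{2}$.

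The last step is just constant bookkeeping: $1+\epsilon/310 \le 1+\epsilon/150$, and $\left(1-\epsilon/310\right)^2 \ge 1 - \epsilon/155 \ge 1-\epsilon/150$ (using, if needed, that $\epsilon$ may be assumed smaller than a small constant, per Observation~\ref{epsilon is large}), which yields the claimed window $\left(1\pm\frac{\epsilon}{150}\right)\frac{\delta I[f]}{2}$. There is no genuine obstacle in this lemma — all of the analytic work sits in Lemma~\ref{lemma edges uniform} and Observation~\ref{M is large}, and this statement is purely the step that assembles them into an estimate for $p_A$.
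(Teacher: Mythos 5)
Your proposal is correct and follows essentially the same route as the paper's proof: sum the per-edge bounds of Lemma \ref{lemma edges uniform} over $E_I \cap M$, invoke Equation (\ref{eq 1-(6)}), and then apply Observation \ref{M is large} together with $|E_I| = 2^{n-1} I[f]$. The constant bookkeeping at the end ($(1-\epsilon/310)^2 \ge 1-\epsilon/155 \ge 1-\epsilon/150$) is also right.
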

\begin{proof}
Summing Lemma \ref{lemma edges uniform} over all the edges in $E_I \cap M$ we get that:

$$
\left(1-\frac{\epsilon}{310}\right)\frac{\delta \cdot |E_I \cap M|}{2^{n}}
\leq
\sum_{e \in E_I \cap M}
p_{e}
\leq
\left(1+\frac{\epsilon}{310}\right)\frac{\delta \cdot |E_I \cap M|}{2^{n}}
$$

Substituting Equation (\ref{eq 1-(6)}) we get:

$$
\left(1-\frac{\epsilon}{310}\right) \frac{\delta I|E_I \cap M|}{2|E_I|}
\leq
p_A
\leq
\left(1+\frac{\epsilon}{310}\right)\frac{ \delta I|E_I \cap M|}{2|E_I|}
$$

With Observation \ref{M is large} this implies the lemma.

\end{proof}

Now, we proceed to prove that $\tilde{p}_A$ and $\tilde{p}_B$ are good approximations for $p_A$ and $p_B$ respectively.
\begin{corollary}
\label{bound for p_A 2}
With probability at least $0.99$:

$$
\left(1-\frac{\epsilon}{16}\right)
\tilde{p}_A
\leq
p_A
\leq
\left(1+\frac{\epsilon}{16}\right)\tilde{p}_A
$$
\end{corollary}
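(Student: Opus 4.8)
The plan is to condition on the success of the influence-estimation subroutine and then chain together the multiplicative guarantees, with essentially no new idea required beyond Lemma~\ref{bound for p_A 1} and Theorem~\ref{estimate influence}. By Theorem~\ref{estimate influence}, the algorithm of \cite{num21} invoked in Step~1 of the noise-sensitivity estimation algorithm outputs a value $\tilde I$ satisfying $(1-\epsilon/33)I[f] \le \tilde I \le (1+\epsilon/33)I[f]$ with probability at least $0.99$; I would work on this event throughout, and this is precisely where the claimed success probability $0.99$ comes from. Note that, given $\tilde I$, the quantity $\tilde p_A = \delta \tilde I /2$ is deterministic and $p_A$ is a fixed number, so no further randomness (and hence no union bound or concentration argument) enters at this stage. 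On the success event we immediately get $(1-\epsilon/33)\,\delta I[f]/2 \le \tilde p_A \le (1+\epsilon/33)\,\delta I[f]/2$.

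Next I would invoke Lemma~\ref{bound for p_A 1}, which gives $(1-\epsilon/150)\,\delta I[f]/2 \le p_A \le (1+\epsilon/150)\,\delta I[f]/2$. Since $I[f] \ge 1/n^C > 0$ (Observation~\ref{I is large}) and $\delta > 0$, the common quantity $A \eqdef \delta I[f]/2$ is positive, so both $p_A$ and $\tilde p_A$ are sandwiched around $A$ by positive multiplicative factors, and dividing the two chains of inequalities yields
\[
\frac{1-\epsilon/150}{1+\epsilon/33} \le \frac{p_A}{\tilde p_A} \le \frac{1+\epsilon/150}{1-\epsilon/33}.
\]

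It then remains only to check that both outer bounds lie in $[1-\epsilon/16,\,1+\epsilon/16]$. The slack is genuine, since $1/150 + 1/33 \approx 0.037 < 0.0625 = 1/16$, and the quadratic error terms are absorbed using Observation~\ref{epsilon is large}, which lets us assume $\epsilon$ is below a sufficiently small positive constant: for the lower bound $\frac{1-\epsilon/150}{1+\epsilon/33} \ge (1-\epsilon/150)(1-\epsilon/33) \ge 1 - \epsilon(1/150+1/33) \ge 1-\epsilon/16$, and for the upper bound $\frac{1+\epsilon/150}{1-\epsilon/33} \le 1 + 1.1\,\epsilon(1/150+1/33) \le 1+\epsilon/16$ once $\epsilon$ is small enough that the $O(\epsilon^2)$ terms in the expansion of $1/(1-\epsilon/33)$ are dominated by the $1.1$ factor. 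The only "obstacle" is this bookkeeping of constants; there is no conceptual difficulty, and one should simply be careful that the chosen constants $\epsilon/33$ and $\epsilon/150$ leave enough room under $\epsilon/16$.
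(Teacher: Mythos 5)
Your proposal is correct and follows essentially the same route as the paper: condition on the $0.99$-probability event that the influence estimate is accurate, combine with Lemma~\ref{bound for p_A 1}, and check that the constants $\epsilon/33$ and $\epsilon/150$ compose to something under $\epsilon/16$ (the paper simply multiplies the two guarantees rather than forming the ratio, but this is the same bookkeeping). The only cosmetic difference is the direction in which you state the multiplicative guarantee on $\tilde I$ versus $I[f]$, which changes nothing since the discrepancy is $O(\epsilon^2)$ and is absorbed exactly as you describe.
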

\begin{proof}
From the correctness of the influence estimation algorithm we have that with probability at least $0.99$:

$$
\left(1-\frac{\epsilon}{33}\right) \tilde{I}
\leq
I[f]
\leq
\left(1+\frac{\epsilon}{33}\right) \tilde{I}
$$

Which together with our lemma implies that:

$$
\left(1-\frac{\epsilon}{16}\right)\frac{\delta \tilde{I} }{2}
\leq
p_A
\leq
\left(1+\frac{\epsilon}{16}\right)\frac{ \delta \tilde{I} }{2}
$$
\end{proof}
\begin{definition}
We call an iteration of the main loop \textbf{ successful} if $f(x)=1$ and $f(z)=0$. We also denote by $\phi$ the probability of any given iteration to be successful.  
\end{definition}
In the following two lemmas, we show that $\tilde{p}_B$ is a good approximation to $\phi$ and that $\phi$, in turn, is a good approximation to $p_B$. This will allow us to conclude that $\tilde{p}_B$ is a good approximation to $p_B$.
\begin{lemma}
\label{bound for phi}
With probability at least $0.99$ we have that:
$$
\left(1-\frac{\epsilon}{16}\right)\tilde{p}_B
\leq
\phi
\leq
\left(1+\frac{\epsilon}{16}\right)\tilde{p}_B
$$
Additionally, the expected number of iterations of the main loop is  $O(1/(\phi \epsilon^2))$.
\end{lemma}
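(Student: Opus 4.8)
\textbf{Proof proposal for Lemma~\ref{bound for phi}.}

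The plan is to treat $\beta$ as an inverse-sampling (negative-binomial) stopping time. Each pass through the main loop is an independent Bernoulli$(\phi)$ trial: it resamples an edge $e$ via repeated calls to $\mathcal A$ until one succeeds, then resamples $P_1,P_2,x,y,z$ via $\mathcal B(e)$, all with fresh randomness, so the success indicators are i.i.d.\ Bernoulli$(\phi)$ by the very definition of $\phi$ as the per-iteration success probability. The loop halts the first time the number of successes reaches $a_0 := \tfrac{768\ln 200}{\epsilon^2}$ (read as $\lceil\cdot\rceil$ if one wants an integer), at which point $\tilde p_B = a_0/\beta$. Consequently the two displayed inequalities are equivalent to the single event $(1-\tfrac{\epsilon}{16})\,a_0/\phi \le \beta \le (1+\tfrac{\epsilon}{16})\,a_0/\phi$, and it suffices to bound the probabilities of $\beta$ falling below or above this window by $1/200$ each and take a union bound.

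For the lower endpoint, set $T^- := \lceil (1-\tfrac{\epsilon}{16})a_0/\phi \rceil$ and observe that $\beta < T^-$ happens exactly when the first $T^-$ trials already contain at least $a_0$ successes; writing $X_{T^-}\sim {\rm Bin}(T^-,\phi)$ for that count, $\E[X_{T^-}] = T^-\phi$ equals $(1-\tfrac{\epsilon}{16})a_0$ up to an additive rounding term of size $<1$, which is negligible since $a_0=\Theta(1/\epsilon^2)$ and (by Observation~\ref{epsilon is large}) we may assume $\epsilon$ below a sufficiently small constant. Hence $a_0 \ge (1+\Theta(\epsilon))\,\E[X_{T^-}]$, and the multiplicative Chernoff upper tail gives $\Pr[\beta < T^-] = \Pr[X_{T^-}\ge a_0] \le \exp\!\bigl(-\Theta(\epsilon^2)\,\E[X_{T^-}]\bigr)$; the constant $768\ln 200$ is calibrated precisely so that this is at most $1/200$. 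The upper endpoint is symmetric: with $T^+ := \lfloor (1+\tfrac{\epsilon}{16})a_0/\phi\rfloor$, the event $\beta > T^+$ is the event that the first $T^+$ trials contain fewer than $a_0$ successes, and the Chernoff lower tail applied to ${\rm Bin}(T^+,\phi)$ (whose mean is $\approx(1+\tfrac{\epsilon}{16})a_0$) again bounds this by $1/200$. A union bound over the two events yields the claimed success probability $0.99$.

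For the expected number of iterations, $\beta$ is by construction a sum of $a_0$ i.i.d.\ ${\rm Geom}(\phi)$ random variables (the numbers of trials between consecutive successes), so $\E[\beta] = a_0/\phi = \tfrac{768\ln 200}{\epsilon^2\phi} = O\!\bigl(1/(\phi\epsilon^2)\bigr)$.

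I do not expect a genuine obstacle: this is a textbook inverse-sampling concentration argument. The only mild care needed is in two places: confirming that the loop iterations really are i.i.d.\ Bernoulli$(\phi)$ (immediate from $\mathcal A$ and $\mathcal B$ running with independent randomness each pass), and absorbing the floor/ceiling perturbations of $T^\pm$ into the Chernoff exponents so that the fixed constant $768\ln 200$ still drives each tail below $1/200$ — for which one invokes the standing assumption that $\epsilon$ is smaller than a sufficiently small positive constant, making the rounding a $(1+O(\epsilon^2))$ relative perturbation of the window endpoints.
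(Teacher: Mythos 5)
Your proposal is correct and matches the paper's proof: the paper likewise bounds the failure probability by the two binomial tail events (too many successes after $(1-\epsilon/16)\cdot 768\ln 200/(\epsilon^2\phi)$ iterations, too few after $(1+\epsilon/16)\cdot 768\ln 200/(\epsilon^2\phi)$ iterations) and applies multiplicative Chernoff bounds calibrated to give $0.005$ each. Your expectation bound via $\E[\beta]=a_0/\phi$ for a sum of i.i.d.\ geometrics is a slightly cleaner (and exact) version of the paper's argument, which instead iterates a ``with probability $1/2$ we exit within $O(1/(\epsilon^2\phi))$ iterations'' bound.
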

\begin{proof}
See Appendix C, Subsection \ref{proof 3}.
\end{proof}

\begin{lemma}
\label{bound for p_B}

It is the case that:

$$
\left(1-\frac{\epsilon}{16}\right) \phi
\leq
p_B
\leq
\left(1+\frac{\epsilon}{16}\right)\phi
$$
\end{lemma}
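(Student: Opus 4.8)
The plan is to express both $p_B$ and $\phi$ as weighted averages of the quantities $q_e$, $e \in E_I \cap M$, and then observe that in both cases the weights are nearly uniform. Since $p_A > 0$ (Lemma~\ref{bound for p_A 1}), dividing Equation~(\ref{eq 1-(5)}) by Equation~(\ref{eq 1-(6)}) gives
$$p_B \;=\; \frac{\sum_{e \in E_I \cap M} p_e\, q_e}{\sum_{e \in E_I \cap M} p_e}.$$
By Lemma~\ref{lemma edges uniform} each $p_e$ lies in $\bigl(1 \pm \tfrac{\epsilon}{310}\bigr)\tfrac{\delta}{2^n}$, so, writing $\bar q \myeq \tfrac{1}{|E_I \cap M|}\sum_{e \in E_I \cap M} q_e$ and using $0 \le q_e \le 1$,
$$\frac{1 - \epsilon/310}{1 + \epsilon/310}\,\bar q \;\le\; p_B \;\le\; \frac{1 + \epsilon/310}{1 - \epsilon/310}\,\bar q.$$

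Next I would handle $\phi$. A single iteration of the main loop draws $e$ from the distribution of $\mathcal{A}$ conditioned on success, runs $\mathcal{B}(e)$, and is declared successful iff $f(x) = 1 \land f(z) = 0$. Writing $e = (v_1, v_2)$ with $v_2 \preceq v_1$: in $\mathcal{B}(e)$ the vertex $x$ is the endpoint of an ascending walk from $v_1$, hence $x \succeq v_1$, and since $e$ is influential $f(v_1) = 1$; monotonicity gives $f(x) = 1$ always. Thus an iteration is successful iff $f(z) = 0$, and
$$\phi \;=\; \sum_{e \in E_I \cap M} Pr_{\mathcal{A}}[e]\cdot Pr_{\mathcal{B}(e)}[f(z) = 0].$$
Summing the per-pair guarantee of Lemma~\ref{lemma 3} over all $(P_1', P_2')$ whose final vertex $z'$ has $f(z') = 0$ (and recalling that $e \in P_1$ already forces $f(x)=1$, so this event is exactly what $q_e$ measures) yields $(1 - \tfrac{\epsilon}{70})Pr_{\mathcal{B}(e)}[f(z)=0] \le q_e \le (1+\tfrac{\epsilon}{70})Pr_{\mathcal{B}(e)}[f(z)=0]$; and summing the pairwise-ratio bound of Lemma~\ref{lemma edge sampler} to $1$ gives $\tfrac{1}{(1+\epsilon/70)|E_I \cap M|} \le Pr_{\mathcal{A}}[e] \le \tfrac{1}{(1-\epsilon/70)|E_I \cap M|}$ for every $e \in E_I \cap M$. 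Substituting both into the display for $\phi$,
$$\frac{\bar q}{(1+\epsilon/70)^2} \;\le\; \phi \;\le\; \frac{\bar q}{(1-\epsilon/70)^2}.$$

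Finally, combining the two displayed pairs of inequalities to cancel $\bar q$ gives
$$\frac{(1-\epsilon/70)^2(1-\epsilon/310)}{1 + \epsilon/310}\,\phi \;\le\; p_B \;\le\; \frac{(1+\epsilon/70)^2(1+\epsilon/310)}{1 - \epsilon/310}\,\phi ,$$
and since by Observation~\ref{epsilon is large} we may assume $\epsilon$ is below a sufficiently small constant, a routine estimate shows both extreme factors lie in $1 \pm \epsilon/16$ (the accumulated relative slack is roughly $\tfrac{2\epsilon}{70} + \tfrac{2\epsilon}{310} < \tfrac{\epsilon}{16}$), which is the claim. The only genuinely delicate points are (i) recognizing that $f(x)=1$ holds automatically in $\mathcal{B}(e)$, so that a ``successful iteration'' is literally the event $\{f(z)=0\}$ measured by $q_e$, and (ii) converting the one-sided additive approximation guarantees of Lemmas~\ref{lemma edge sampler} and~\ref{lemma 3} — which bound the error by a multiple of the \emph{approximating} probability — into clean two-sided multiplicative bounds while keeping the constants small enough to land inside $1\pm\epsilon/16$.
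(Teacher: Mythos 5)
Your proof is correct and follows essentially the same route as the paper's: both express $p_B$ and $\phi$ as near-uniform averages of the $q_e$ over $E_I \cap M$, using Lemma \ref{lemma edges uniform} for the $p_e$ weights and Lemmas \ref{lemma edge sampler} and \ref{lemma 3} for the sampling distributions, then cancel the common quantity $\frac{1}{|E_I\cap M|}\sum_e q_e$. The only (harmless) difference is that you normalize $p_B = \sum p_e q_e / \sum p_e$ directly via Equation (\ref{eq 1-(6)}), whereas the paper divides by the bound of Lemma \ref{bound for p_A 1} and invokes $|E_I| = 2^{n-1}I[f]$ and Observation \ref{M is large} — your shortcut is slightly cleaner and yields marginally tighter constants.
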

\begin{proof}

See Appendix C, Subsection \ref{proof 4}.
\end{proof}

Corollary \ref{bound for p_A 2} with Lemmas  \ref{bound for phi} and \ref{bound for p_B} together imply that with probability at least $2/3$:

$$
\left(1-\frac{\epsilon}{5} \right)\tilde{p}_A \tilde{p}_B
\leq 
p_A p_B
\leq
\left(1+\frac{\epsilon}{5} \right)\tilde{p}_A \tilde{p}_B
$$
Combining this with Lemma \ref{corner cutting} and  Equation (\ref{eq 1-(5)}) we get that:

$$
\left(1-\frac{\epsilon}{2} \right)2\tilde{p}_A \tilde{p}_B
\leq 
NS_\delta[f]
\leq
\left(1+\frac{\epsilon}{2}\right)2\tilde{p}_A \tilde{p}_B
$$
This proves the correctness of the algorithm. Now consider the number of queries:
\begin{itemize}
\item Estimating the influence requires $O\left(\frac{\sqrt{n} \log(n/\epsilon)}{I[f] \epsilon^3}\right)$ queries and polynomial time. Since by Observation \ref{epsilon is large} it is the case that $\epsilon \geq 1/n$, this is at most $O\left(\frac{\sqrt{n} \log(n)}{I[f] \epsilon^3}\right)$.
\item By Lemma \ref{lemma edge sampler}, successfully sampling an edge requires $O \left(\frac{\sqrt{n} \log^{1.5} n}{I[f] \epsilon}\right)$ queries and polynomial time. By Lemma \ref{lemma 3} for each edge we willl additionally spend a polynomial amount of extra time.
\item By Lemmas \ref{bound for phi} and \ref{bound for p_B} we will have $O \left(\frac{1}{\phi \epsilon^2} \right)=O \left(\frac{1}{p_B \epsilon^2} \right)$ iterations of the loop. 
\end{itemize}

Therefore, the overall run-time is polynomial, and the overall number of queries made is:

$$
O\left(\frac{\sqrt{n} \log n}{I[f] \epsilon^3}\right)
+
O \left(\frac{\sqrt{n} \log^{1.5} n}{I[f] \epsilon}\right)
O \left(\frac{1}{p_B \epsilon^2} \right)=O \left(\frac{\sqrt{n} \log^{1.5} n}{I[f] \epsilon} \cdot \frac{1}{p_B \epsilon^2} \right)
$$
Now using Lemmas \ref{bound for p_A 1} and \ref{corner cutting} together with Equation (\ref{eq 1-(5)}) it equals to:

$$
O\left(\frac{\sqrt{n} \delta \log^{1.5} n}{p_A \epsilon} \cdot \frac{1}{p_B \epsilon^2} \right)
=
O\left(
 \frac{\sqrt{n} \delta \log^{1.5} n}{NS_\delta[f] \epsilon^3}
 \right) 
$$
Which is the desired number of queries.

\section{Lower bounding the sample complexity.}

\blue{We begin our proof of Theorems \ref{2-1.3}, \ref{2-1.1} and \ref{2-1.2} by first defining the distributions} $D_1^{B}$, $D_1^{I}$ and $D_1^{\delta}$ \blue{to} consist of a single function $f_0$\blue{$:\{0,1\}^n\rightarrow\{0,1\}$}:
$$
f_0(x)
=\begin{cases} 1 &\mbox{if } L(x) > n/2+k \sqrt{n \log n} \\
0 & \mbox{ otherwise} \end{cases}$$
And here $k$ is chosen so that $n/2+k \sqrt{n \log n}$ is the smallest integer larger than $n/2$ for which $B[f_0] \leq 1/n^{C_1}$.

For our lower bounds we will need to show that $f_0$ and ORs of $f_0$ with other functions have useful properties. For this we will need the following lemma. Informally, it says that $B[F]$, $I[F]$ and $NS_{\delta}[F]$ are continuous functions of $F$ in the sense that changing $F$ a little bit does not change them drastically.

\begin{lemma}
\label{2-1.5}
For any monotone function $F$\blue{$:\{0,1\}^n\rightarrow\{0,1\}$}, denote by $F'$ the OR of $F$ and $f_0$. Then:
\begin{itemize}
\item[a)] 
$$
B[F] \leq B[F'] \leq B[F']+\frac{1}{n^{C_1}}
$$
\item[b)]
$$
\bigg \lvert 
I[F]- I[F'] 
\bigg \rvert\leq \frac{2n}{n^{C_1}}
$$
\item[c)] For any $\delta$:
$$
\bigg \lvert 
NS_{\delta}[F]- NS_{\delta}[F'] 
\bigg \rvert\leq \frac{2}{n^{C_1}}
$$
\end{itemize}
\end{lemma}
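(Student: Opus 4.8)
The plan is to prove all three parts by comparing $F$ and $F' = F \vee f_0$ on a point-by-point basis and tracking exactly where they differ. Since $f_0$ is the indicator of the set $T \myeq \{x : L(x) > n/2 + k\sqrt{n\log n}\}$, and $F' = F \vee f_0$, the functions $F$ and $F'$ agree everywhere except possibly on points $x \in T$ where $F(x) = 0$ but $F'(x) = 1$. In particular $F \preceq F'$ pointwise, and the set of disagreement $\Delta \myeq \{x : F(x) \neq F'(x)\}$ is contained in $T$, so $|\Delta| \le |T| = B[f_0]\cdot 2^n \le 2^n/n^{C_1}$. This single observation drives all three bounds.

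For part (a): since $F \preceq F'$ we have $B[F] \le B[F']$, and since $F' \le F + f_0$ pointwise we get $B[F'] \le B[F] + B[f_0] \le B[F] + 1/n^{C_1}$. (I note the statement as written says ``$B[F'] \le B[F'] + 1/n^{C_1}$,'' which appears to be a typo for $B[F] + 1/n^{C_1}$; I would state it as the latter.)

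For part (b): write $I[F] = n \cdot \Pr_{x,i}[F(x) \neq F(x^{\oplus i})]$, i.e. $I[F]$ equals $2 \cdot |E_I(F)| / 2^n$ where $E_I(F)$ is the set of influential edges. An edge $(x, x^{\oplus i})$ can be influential for $F$ but not for $F'$, or vice versa, only if at least one of its two endpoints lies in $\Delta$. The number of edges with at least one endpoint in $\Delta$ is at most $n|\Delta| \le n\cdot 2^n/n^{C_1}$, so $\big| |E_I(F)| - |E_I(F')| \big| \le n \cdot 2^n / n^{C_1}$, and multiplying by $2/2^n$ gives $|I[F] - I[F']| \le 2n/n^{C_1}$. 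For part (c): by Observation~\ref{pairing}, $NS_\delta[F] = 2\Pr_{(x,z)\sim T_\delta}[F(x) \neq F(z)]$; the event $\{F(x) \neq F(z)\} \triangle \{F'(x) \neq F'(z)\}$ is contained in $\{x \in \Delta\} \cup \{z \in \Delta\}$, and since both $x$ and $z$ are individually uniform on $\{0,1\}^n$, a union bound gives $\Pr[x \in \Delta \text{ or } z \in \Delta] \le 2|\Delta|/2^n \le 2/n^{C_1}$; multiplying by the factor of $2$ yields $|NS_\delta[F] - NS_\delta[F']| \le 4/n^{C_1}$. To match the stated bound of $2/n^{C_1}$ one should be slightly more careful: actually $\{F(x)\neq F(z)\} \triangle \{F'(x) \neq F'(z)\} \subseteq \Delta$ viewed appropriately — more precisely, if neither $x$ nor $z$ is in $\Delta$ the two events coincide, but one can also note that $F$ and $F'$ differ only by making more pairs non-disagreeing is false in general, so the cleanest route is the union bound giving $4/n^{C_1}$, or a marginally tighter argument. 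I would present the union-bound version and, if the constant $2$ is essential, refine by observing the disagreement set for the noise-sensitivity event has measure at most that of $\Delta$ under one of the two marginals plus a correction.

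The main obstacle is purely bookkeeping: getting the constants in part (c) to match the claimed $2/n^{C_1}$ rather than $4/n^{C_1}$, which requires being careful about whether to union-bound over both $x \in \Delta$ and $z \in \Delta$ or to exploit that $f_0$ only pushes function values from $0$ up to $1$. One clean fix: the symmetric difference of the two events $\{F(x)\neq F(z)\}$ and $\{F'(x)\neq F'(z)\}$ is contained in the event that exactly one of $x,z$ lies in $\Delta$ and, on that event, using $F \preceq F'$ one can show the contributions of ``newly disagreeing'' and ``newly agreeing'' pairs partially cancel, or simply absorb the factor into the $O(\cdot)$ if an extra constant is harmless downstream. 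Everything else is routine once the pointwise containment $\Delta \subseteq T$ and the bound $|T| \le 2^n/n^{C_1}$ are in hand.
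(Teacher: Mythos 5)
Your approach is the same as the paper's: bound the size of the disagreement set $\Delta=\{x: F(x)\neq F'(x)\}\subseteq\{x: f_0(x)=1\}$ by $2^n/n^{C_1}$, and then account, for each of the three quantities, for how much a disagreement can matter (the paper phrases this per changed point plus a triangle inequality; you phrase it over the whole set $\Delta$ at once, which is the same computation). Parts (a) and (b) are correct as you wrote them, and you are right that the displayed inequality in (a) contains a typo.

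The one genuine problem is in part (c), and it is self-inflicted: you write $NS_\delta[F]=2\Pr_{(x,z)\sim T_\delta}[F(x)\neq F(z)]$, but the definition is $NS_\delta[F]=\Pr_{(x,z)\sim T_\delta}[F(x)\neq F(z)]$; the factor $2$ appears only in the identity $NS_\delta[F]=2\Pr[F(x)=1\land F(z)=0]$ from Observation \ref{pairing}, which you do not need here. With the correct formula, your own union bound already finishes the proof: the events $\{F(x)\neq F(z)\}$ and $\{F'(x)\neq F'(z)\}$ can differ only if $x\in\Delta$ or $z\in\Delta$, and each of $x$ and $z$ is marginally uniform, so
$$
\bigl\lvert NS_\delta[F]-NS_\delta[F']\bigr\rvert\le \Pr[x\in\Delta]+\Pr[z\in\Delta]\le \frac{2\lvert\Delta\rvert}{2^n}\le\frac{2}{n^{C_1}},
$$
which is exactly the claimed constant. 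The closing paragraph about recovering a factor of $2$ via cancellation between newly disagreeing and newly agreeing pairs is chasing a problem that does not exist, and none of the proposed fixes is actually carried out, so part (c) as written only establishes the bound $4/n^{C_1}$.
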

\begin{proof}
See Appendix D, Subsection \ref{proof 5}.
\end{proof}
\begin{lemma}
\label{2-1.4}
It is the case that:
\begin{itemize} 
\item[a)] $k 
\leq
\sqrt{
\frac{C_1}{8}}$, and hence $k$ is also a constant.
\item[b)] $B[f_0] = 1/\Theta(n^{C_1})$
\item[c)]$\Omega( 1/n^{C_1}) \leq I[f_0] \leq O(n/n^{C_1})$
\item[e)] For any $\delta$, it is the case that $\Omega( 1/n^{C_1+1}) \leq NS_{\delta}[f_0] \leq O(1/n^{C_1})$.
\end{itemize}
\end{lemma}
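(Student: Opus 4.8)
The plan is to reduce all four claims to elementary estimates on the binomial random variable $L(x)\sim\mathrm{Bin}(n,1/2)$ near the threshold level. Write $\theta := n/2 + k\sqrt{n\log n}$ for the integer threshold of $f_0$, and for an integer $m$ let $g_m$ be the indicator of the event $\{L(x)>m\}$, so that $f_0=g_\theta$ and $B[f_0]=\Pr[L(x)>\theta]$. Since for large $n$ the threshold $\theta$ lies strictly above $n/2+1$ (the bias of $g_m$ is bounded below by a constant for $m$ within $O(\sqrt n)$ of $n/2$, hence $\gg 1/n^{C_1}$), the minimality in the definition of $\theta$ gives $\Pr[L(x)>\theta-1]>1/n^{C_1}$. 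Two trivial facts get used throughout: (i) $\Pr[L(x)=\theta+1]/\Pr[L(x)=\theta]=(n-\theta)/(\theta+1)$, which tends to $1$ once $k$ is known to be a constant; and (ii) the pmf $\Pr[L(x)=l]$ is non-increasing in $l$ for $l\ge n/2$, so $\Pr[L(x)>\theta]\le n\cdot\Pr[L(x)=\theta+1]\le n\cdot\Pr[L(x)=\theta]$, i.e.\ $\Pr[L(x)=\theta]\ge B[f_0]/n$.

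Part (a): a Chernoff/Hoeffding tail bound gives $\Pr[L(x)>n/2+\kappa\sqrt{n\log n}]\le\exp(-2\kappa^2\log n)=n^{-2\kappa^2}$, which is $\le 1/n^{C_1}$ as soon as $\kappa$ is an appropriate constant depending only on $C_1$; hence that is a valid threshold, and minimality forces $\theta\le n/2+\kappa\sqrt{n\log n}+1$, i.e.\ $k\le\kappa+o(1)$ is a constant. Part (b): the bound $B[f_0]\le 1/n^{C_1}$ is the definition of $\theta$; for the reverse inequality combine $B[f_0]=\Pr[L>\theta]\ge\Pr[L=\theta+1]\ge\tfrac12\Pr[L=\theta]$ (fact (i), valid for large $n$ since $k$ is now a constant) with $\Pr[L=\theta]=\Pr[L>\theta-1]-\Pr[L>\theta]>1/n^{C_1}-B[f_0]$, to get $2B[f_0]>1/n^{C_1}-B[f_0]$, hence $B[f_0]>1/(3n^{C_1})$; so $B[f_0]=\Theta(1/n^{C_1})$.

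Part (c): since $f_0$ is a threshold function, its influential edges are exactly the edges between level $\theta$ and level $\theta+1$, of which there are $(n-\theta)\binom{n}{\theta}$; therefore $I[f_0]=(n-\theta)\binom{n}{\theta}/2^{n-1}=2(n-\theta)\Pr[L(x)=\theta]$. As $k$ is a constant, $n-\theta=\Theta(n)$, and from fact (ii) together with part (b), $B[f_0]/n\le\Pr[L=\theta]\le 3\Pr[L=\theta+1]\le 3B[f_0]$; combining these yields $\Omega(1/n^{C_1})\le I[f_0]\le O(n/n^{C_1})$. Part (e): the upper bound is immediate — by Observation \ref{pairing}, $NS_\delta[f_0]=2\Pr[L(x)>\theta\land L(z)\le\theta]\le 2\Pr[L(x)>\theta]=2B[f_0]=O(1/n^{C_1})$ for every $\delta$. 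For the lower bound I would use the standard fact that $NS_\delta[f]$ is non-decreasing in $\delta$ on $[0,1/2]$, so $NS_\delta[f_0]\ge NS_{1/n}[f_0]$ for all $\delta\ge 1/n$; and at $\delta=1/n$, restricting to the event that $L(x)=\theta+1$ and that the noise flips exactly one of the $\theta+1$ one-coordinates and no other coordinate gives $NS_{1/n}[f_0]\ge 2\Pr[L(x)=\theta+1]\cdot(\theta+1)\tfrac1n(1-\tfrac1n)^{n-1}=\Theta(\Pr[L(x)=\theta+1])=\Omega(B[f_0]/n)=\Omega(1/n^{C_1+1})$.

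The main obstacle is obtaining the matching lower bounds in (b), (c) and (e): the definition of $\theta$ supplies only the upper bound $1/n^{C_1}$ on the bias, so one must control the ``overshoot'' when the tail first drops below $1/n^{C_1}$; this is handled cheaply by fact (i) (consecutive level probabilities differ only by a $1\pm o(1)$ factor), though one could alternatively invoke the Hypercube Continuity Lemma (Lemma \ref{continuity}) for a sharper version. Two sequencing points need care: part (a) must come first, since fact (i) and $n-\theta=\Theta(n)$ both rely on $k$ being a constant; and the lower bound in (e) must hold for all $\delta$ in the range $[1/n,1/2]$, which is why it is convenient to reduce to the single case $\delta=1/n$ via monotonicity of $NS_\delta[f]$ — a direct estimate of $\Pr[L(z)\le\theta\mid L(x)=\theta+1]$ for large $\delta$ would be considerably more painful.
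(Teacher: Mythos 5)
Your proof is correct and follows essentially the same route as the paper's: the key inequality $\Pr[L(x)\geq \theta]>1/n^{C_1}$ from minimality of the threshold, the control of the ``overshoot'' via the ratio $\Pr[L=\theta+1]/\Pr[L=\theta]=(n-\theta)/(\theta+1)=1-o(1)$, and the lower bound in (e) by reducing to $\delta=1/n$ via Lemma \ref{NS is increasing} and isolating the single-flip event. Two minor deviations: for the upper bounds in (c) and (e) the paper invokes the perturbation Lemma \ref{2-1.5} with $F\equiv 0$, whereas you compute them directly (exact count of influential edges, and $NS_\delta[f_0]\leq 2B[f_0]$ via Observation \ref{pairing}) — your way is if anything cleaner. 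The one point where you fall short of the literal statement is part (a): exhibiting a valid threshold $\kappa=\sqrt{C_1/2}$ and appealing to minimality gives $k\leq\sqrt{C_1/2}+o(1)$, not the claimed $k\leq\sqrt{C_1/8}$; the paper instead derives the bound from $1/n^{C_1}\leq\Pr[L\geq\theta]\leq n^{-8k^2}$ (its Hoeffding exponent $8k^2$ rather than the standard $2k^2$ is what produces the $\sqrt{C_1/8}$). This does not affect anything used inside the lemma, where only ``$k$ is a constant'' matters, but the precise constant is reused in the proof of Lemma \ref{2-1.7}(a) to match the factor $e^{\sqrt{C_1\log n/2}}$ in the theorem statements, so if you want the stated bound you would need to justify that sharper tail exponent.
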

\begin{proof}
See Appendix C, Subsection \ref{proof 6}.
\end{proof}\newline
We will use the two following main lemmas, that we will prove in two separate subsections. The first one is a computational lemma that says that any function family that is ``thin" at the level $L(x)=n/2+k \sqrt{n \log n}$ can be transformed into two distributions that are hard to distinguish. By ``thin" we mean that they have few positive points at the level right below the threshold of $f_0$. The second lemma says that there exist function families that are both ``thin" in the sense the first lemma requires and have a large amount of bias, influence and noise sensitivity.
\begin{lemma}
\label{2-1.6}
Suppose $F$\blue{$:\{0,1\}^n\rightarrow\{0,1\}$} is a monotone Boolean function with the property that:
$$Pr_{x \in_R \{0,1\}^n}[F(x)=1 \vert L(x)=n/2+k \sqrt{n \log n}] \leq 1/q_0$$
Additionally, suppose $\mathcal{C}$ is an algorithm that makes $o(q_0)$ queries given access to the following random function:
\begin{itemize}
\item With probability $1/2$ it is drawn from $D_1$ that consists of only $f_0$.
\item With probability $1/2$ it is drawn from $D_2$ that consists of an OR of the following:
\begin{itemize}
\item The function $f_0$.
\item $F(\sigma(x))$, where $\sigma$ is a random permutation of indices.
\end{itemize}
\end{itemize}
Consequently, suppose that $\mathcal{C}$ outputs a guess whether its input was from $D_1$ or $D_2$. Then $\mathcal{C}$ has to err with probability more than $1/3$. 
\end{lemma}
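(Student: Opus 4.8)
The plan is to exploit the fact that $f_0$ equals $1$ on every vertex of level exceeding $l_0 := n/2+k\sqrt{n\log n}$, while every function in $D_2$ has the form $g=f_0 \vee (F\circ\sigma)$ for a uniformly random permutation $\sigma$; thus $g\ge f_0$ pointwise, and a run of $\mathcal C$ on a $D_2$-input is indistinguishable from a run on $f_0$ unless $\mathcal C$ queries a vertex $x$ with $L(x)\le l_0$ on which the unknown function is $1$. Call such a query a \emph{witness}. First I eliminate adaptivity: fix the internal randomness $r$ of $\mathcal C$ and let $x_1^r,x_2^r,\dots$ be the deterministic sequence of at most $m:=o(q_0)$ queries $\mathcal C$ makes if every answer is the value of $f_0$ at that vertex, and let $a(r)\in\{1,2\}$ be the guess it then outputs. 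On input $f_0$, $\mathcal C$ makes exactly these queries and outputs $a(r)$. On input $g\sim D_2$, $\mathcal C$ makes exactly these queries, in the same order, up to the first $i$ with $g(x_i^r)\ne f_0(x_i^r)$; since $g\ge f_0$, such a disagreement is a $0\to1$ flip at a vertex of level $\le l_0$, i.e. a witness. Hence if no witness occurs among $x_1^r,\dots,x_m^r$, then $\mathcal C$ again outputs $a(r)$ on input $g$.

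Next I bound the probability of a witness. Fix $r$; queries of level $>l_0$ are never witnesses and can be discarded. For a vertex $x$ of level $\ell\le l_0$, the image $\sigma(x)$ is uniform over the level-$\ell$ vertices, so $\Pr_\sigma[F(\sigma(x))=1]$ equals the level-$\ell$ density of $F$. Because $F$ is monotone, these densities are non-decreasing in $\ell$ (deleting a uniformly random $1$-coordinate of a uniform level-$(\ell{+}1)$ vertex yields a uniform level-$\ell$ vertex, and $x\preceq y$ implies $F(x)\le F(y)$), hence this probability is at most the level-$l_0$ density of $F$, which is $\le 1/q_0$ by hypothesis. A union bound over the at most $m$ queries gives, for every $r$, that $\Pr_\sigma[\text{a witness occurs among }x_1^r,\dots,x_m^r]\le m/q_0=o(1)$. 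Write $W$ for this event; note $W$ depends on $(r,\sigma)$ while $a(r)$ depends only on $r$.

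Finally I assemble the error bound. Since the input is $f_0$ or $g$ with probability $\tfrac12$ each, and on $\neg W$ the algorithm outputs $a(r)$ on both inputs, its error probability satisfies
\begin{align*}
\Pr[\text{err}]
&\ge \tfrac12\Pr_r[a(r)=2] + \tfrac12\Pr_{r,\sigma}[\neg W \wedge a(r)=1] \\
&\ge \tfrac12\Pr_r[a(r)=2] + \tfrac12\big(\Pr_r[a(r)=1]-\Pr_{r,\sigma}[W]\big) \\
&= \tfrac12 - \tfrac12\Pr_{r,\sigma}[W] \ \ge\ \tfrac12 - \tfrac{m}{2q_0},
\end{align*}
which exceeds $1/3$ for all sufficiently large $n$, since $m=o(q_0)$. (If the $o(q_0)$ bound is only on the \emph{expected} number of queries, first apply Markov's inequality to restrict to an execution making $O(\E[\#\text{queries}])=o(q_0)$ queries, losing a constant additive term in the error probability that the slack above absorbs.)

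The step I expect to require the most care is the adaptivity elimination: one must argue precisely that, when run against a $D_2$-input, $\mathcal C$ reproduces its fixed ``$f_0$-transcript'' $x_1^r,x_2^r,\dots$ verbatim until it first hits a witness, and that for a lower bound on the error it then suffices to know that a witness has already been found (so that optimally outputting ``$2$'' upon seeing a witness and ``$a(r)$'' otherwise is without loss of generality). The remaining ingredients — the monotonicity of the level densities of $F$ and the union bound — are routine.
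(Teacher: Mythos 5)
Your proof is correct and follows essentially the same route as the paper's: reduce to the deterministic ``all-answers-are-$f_0$'' transcript, observe that the run on a $D_2$-input deviates only upon hitting a point of level at most $n/2+k\sqrt{n\log n}$ where $F\circ\sigma$ is $1$, use monotonicity of $F$ to bound each such hit probability by the level-$l_0$ density $1/q_0$, and union bound over the $o(q_0)$ queries. The only cosmetic difference is that the paper exploits monotonicity by lifting each transcript query to a dominating point at level exactly $l_0$ (its Lemma \ref{2-2.3}), whereas you invoke the equivalent fact that the level densities of a monotone function are non-decreasing; both yield the same bound.
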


\begin{lemma}
\label{2-1.7}
There exist functions $F^{B}$\blue{$:\{0,1\}^n\rightarrow\{0,1\}$}, $F^{I}$\blue{$:\{0,1\}^n\rightarrow\{0,1\}$} and for every $1/n \leq\delta \leq 1/2$ there exists $F^{\delta}$\blue{$:\{0,1\}^n\rightarrow\{0,1\}$} such that:
\begin{itemize}
\item Any $f$ in $\{F^B, F^I\} \cup \{F^{\delta}:1/n \leq\delta \leq 1/2\}$ has the property that:
$$Pr_{x \in_R \{0,1\}^n}[f(x)=1 \vert L(x)=n/2+k \sqrt{n \log n}] \leq \Theta\left(1/n^{C_2} \cdot e^{\sqrt{C_1 \log n/2}} \right)$$
\item $B[F^{B}] \geq \Omega(1/n^{C_2})$.
\item $I[F^{I}] \geq \Omega(\sqrt{n}/n^{C_2})$.
\item For any $1/n \leq \delta \leq 1/2$,  it is the case that:
\begin{equation*}
NS_{\delta}[F^{\delta}]
\geq
\begin{cases}
\Omega(\delta \sqrt{n} /n^{C_2})
&\text{ if }1/n \leq \delta \leq 1/\sqrt{n} \\
\Omega(1/n^{C_2})
&\text{ if }1/\sqrt{n} < \delta \leq 1/2
\end{cases}
\end{equation*}
\end{itemize}
\end{lemma}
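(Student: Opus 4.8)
The plan is to realize $F^{B}$, $F^{I}$ and $F^{\delta}$ as (slightly modified) Talagrand random DNFs and to verify their properties via the probabilistic method. Concretely, set $m := 2^{\sqrt n}/n^{C_2}$ and let $F$ be the random monotone DNF whose $m$ clauses are independent, each clause being the AND of $\sqrt n$ \emph{distinct} indices of $[n]$ chosen uniformly at random. Since the statement allows the three functions to be distinct, it suffices to show that a single such random $F$ is, with positive probability, simultaneously ``thin'' at the level $\ell := n/2 + k\sqrt{n\log n}$ and has bias $\Omega(1/n^{C_2})$ (and to run the analogous argument once for influence and once for noise sensitivity). I would obtain this by showing that thinness fails with probability only $e^{-\Omega(m)}$, while the ``value is large'' event has probability at least $1/\poly(n)$, so a union bound over the complements of these two events still leaves a valid choice of $F$.

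For thinness, let $T(F)$ denote the fraction of level-$\ell$ vertices on which $F$ evaluates to $1$. A single clause of $\sqrt n$ distinct indices is satisfied by a given level-$\ell$ vertex with probability $\binom{\ell}{\sqrt n}/\binom{n}{\sqrt n} = \prod_{j=0}^{\sqrt n - 1}\frac{\ell-j}{n-j} \le (\ell/n)^{\sqrt n} = 2^{-\sqrt n}\bigl(1 + 2k\sqrt{\log n/n}\bigr)^{\sqrt n} \le 2^{-\sqrt n}\,e^{2k\sqrt{\log n}}$, so by a union bound over clauses $\E[T(F)] \le m\,2^{-\sqrt n}\,e^{2k\sqrt{\log n}} = e^{2k\sqrt{\log n}}/n^{C_2}$. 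Plugging in $k \le \sqrt{C_1/8}$ from Lemma \ref{2-1.4}(a) gives $2k\sqrt{\log n} \le \sqrt{C_1\log n/2}$, which matches the bound claimed in the first bullet. Since resampling one clause changes $T(F)$ by at most $2^{-\sqrt n}\,e^{2k\sqrt{\log n}}$, McDiarmid's bounded-differences inequality yields $T(F) \le 2\,\E[T(F)]$ except with probability $e^{-\Omega(m)}$.

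For the ``value is large'' parts I would argue as follows. For the bias, a Bonferroni (inclusion--exclusion) lower bound gives $\E[B[F]] \ge m\,2^{-\sqrt n} - \binom m2\,\E\bigl[\Pr_x[C_i(x)=C_j(x)=1]\bigr]$; since two independent random $\sqrt n$-clauses share $O(1)$ coordinates in expectation, the correction term is $O(1/n^{2C_2})$, so $\E[B[F]] = \Omega(1/n^{C_2})$. For influence and noise sensitivity I would restrict attention to vertices $x$ at which \emph{exactly one} clause $C$ is satisfied: flipping any of the $\sqrt n$ ones of $C$ down to $0$ falsifies $C$ and, by monotonicity (no other clause was satisfied at $x$), makes $F$ output $0$, so $I[F] \ge \sqrt n\cdot\Pr_x[\text{exactly one clause true}]$; similarly $F(z)=0$ whenever some index of $C$ flips to $0$ (probability $1-(1-\delta)^{\sqrt n} = \Theta(\min(1,\delta\sqrt n))$) and no other clause turns true at $z$ (a union bound over the remaining $m-1$ clauses bounds the failure probability), giving $NS_\delta[F] \ge \Omega\bigl(\min(1,\delta\sqrt n)/n^{C_2}\bigr)$ in expectation, which splits into the two regimes $\delta\le 1/\sqrt n$ and $\delta>1/\sqrt n$ exactly as in the statement. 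Taking expectations over the random clauses, in each case $\E\bigl[\Pr_x[\text{exactly one clause true}]\bigr] = \Omega(1/n^{C_2})$ by the same second-moment estimate used for the bias. Finally, since each of $B[F]$, $I[F]/n$ and $NS_\delta[F]$ is a probability in $[0,1]$ with mean at least its target value $\mu$, the reverse Markov inequality gives that the value is at least $\mu/2$ with probability at least $\mu/(2n) \ge 1/\poly(n) \gg e^{-\Omega(m)}$; a union bound with the thinness event from the previous paragraph then yields the desired $F$.

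The main obstacle I expect is the estimate controlling the ``no other clause becomes true'' events. For influence this amounts to $\Pr_x[\text{exactly one clause true}] = \Omega\bigl(\Pr_x[\text{some clause true}]\bigr)$, which in the dense borderline case $C_2 = 0$ (where both the first and second moments are $\Theta(1)$) requires a genuine second-moment/Bonferroni computation rather than a crude union bound. For noise sensitivity it is more delicate: one must compare $\Pr_z[\exists\,C'\ne C:\,C'(z)=1]$ against $\Pr_z[C(z)\text{ becomes false}] = 1-(1-\delta)^{\sqrt n}$ uniformly over the full range $1/n \le \delta \le 1/2$, and this comparison is precisely what forces the case split in the statement. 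The thinness computation, the McDiarmid concentration, and the final union bound are routine by comparison.
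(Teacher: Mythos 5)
Your construction is the same as the paper's: it also takes the OR of $2^{\sqrt n}/n^{C_2}$ independent random AND-clauses of $\sqrt n$ distinct indices, proves thinness by a union bound over the clauses, lower-bounds the expected bias/influence/noise sensitivity via the event that exactly one clause is satisfied (and, for noise sensitivity, that one of that clause's coordinates flips to zero while no other clause is satisfied at either endpoint), and concludes by the probabilistic method. One piece of your bookkeeping is unnecessary: your thinness computation is a union bound over clauses for a uniformly random $x$ at level $\ell$, and that bound holds verbatim for \emph{every} fixed realization of the clauses, not merely in expectation. So thinness is a deterministic property of every function in the support of the distribution (this is exactly what the paper's part (a) asserts), and the McDiarmid concentration, the reverse Markov step, and the final union bound can all be dropped: the first-moment bounds on $B$, $I$, $NS_\delta$ alone yield the desired functions.

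The substantive problem is that your ``large value'' bounds do not survive the boundary case $C_2=0$, which the theorems explicitly allow ($C_1-1>C_2\ge 0$) and which is in fact the headline case in the introduction. For the bias, the second-order Bonferroni bound $m2^{-\sqrt n}-\binom m2\,\E[\Pr_x[C_i=C_j=1]]$ evaluates to roughly $1/n^{C_2}-(e/2)/n^{2C_2}$ (note the correction term needs $\E[2^{|C_i\cap C_j|}]=O(1)$, not merely $\E[|C_i\cap C_j|]=O(1)$, although that does hold), and at $C_2=0$ this is negative, hence vacuous. Likewise, for noise sensitivity your union bound on ``no other clause turns true at $z$'' gives a failure probability of order $m2^{-\sqrt n}=1/n^{C_2}$, which is not bounded away from $1$ when $C_2=0$. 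The paper avoids both problems by using the independence of the clauses directly: for a fixed good $x$ (resp.\ pair $(x,z)$), the probability that some clause is satisfied is $1-(1-p)^m\ge\Omega(\min(1,pm))$, and the probability that all $m-1$ other clauses are unsatisfied at both endpoints is $(1-2p)^{m-1}=\Omega(1)$, where $p=\Theta(2^{-\sqrt n})$; the ``exactly one clause'' probability is then $mp(1-p)^{m-1}=\Omega(1/n^{C_2})$ for every $C_2\ge 0$. You flag this regime yourself as the main obstacle, but the Bonferroni inequality you actually write down does not close it; you need either this direct independence computation or a genuine Chung--Erd\H{o}s-type second-moment bound in its place.
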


Recall that we defined $D_1^B$, $D_1^I$ and $D_1^\delta$ to be composed of only the function $f_0$. Now we also have functions that are thin and have large bias, influence and noise sensitivity. Therefore, we use them to define distributions we can use with Lemma \ref{2-1.6}:
\begin{itemize}
\item $D_2^B$ as the OR of $f_0(x)$ and $F^B(\sigma(x))$. Where, recall, $\sigma$ is a random permutation of indices.
\item $D_2^I$ as the OR of $f_0(x)$ and $F^I(\sigma(x))$. 
\item For each $1/n \leq \delta \leq 1/2$, we define $D_2^\delta$ as the OR of $f_0(x)$ and $F^\delta(\sigma(x))$. 
\end{itemize} 
\begin{observation}
\label{2-1.9}
 Permuting the indices to an input of a function preserves its bias, influence and noise sensitivity. That, together with
Lemma \ref{2-1.5} and Lemma \ref{2-1.7}, implies that:
\begin{itemize}
\item[a)] For any $f$ in $D_2^{B}$, we have $B[f]\geq \Omega(1/n^{C_2})$.
\item[b)] For any $f$ in $D_2^{I}$, we have $I[f]\geq \Omega(\sqrt{n}/n^{C_2}) - O(n/n^{C_1}) = \Omega(\sqrt{n}/n^{C_2})$. The last equality is true because $C_1-1 > C_2$.
\item[c)]For all $1/n \leq \delta \leq 1/2$ and for all $f$ in 
$D_2^\delta$: 

$$NS_{\delta}[f] \geq \begin{cases}
\Omega(\delta \sqrt{n} /n^{C_2}) - O(1/n^{C_1})
&\text{ if }1/n \leq \delta \leq 1/\sqrt{n} \\
\Omega(1/n^{C_2}) - O(1/n^{C_1})
&\text{ if }1/\sqrt{n} < \delta \leq 1/2
\end{cases}
=
\begin{cases}
\Omega(\delta \sqrt{n} /n^{C_2}) 
&\text{ if }1/n \leq \delta \leq 1/\sqrt{n} \\
\Omega(1/n^{C_2})
&\text{ if }1/\sqrt{n} < \delta \leq 1/2
\end{cases}
$$
Here, again, the last equality is true because $C_1 -1 > C_2$
\end{itemize}
\end{observation}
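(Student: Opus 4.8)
The plan is to derive all three bounds mechanically from three facts already in hand: (i) bias, influence and noise sensitivity are invariant under permuting the input coordinates; (ii) OR-ing with $f_0$ perturbs each of these quantities by at most the amounts quantified in Lemma~\ref{2-1.5}; and (iii) the base functions $F^B$, $F^I$, $F^\delta$ attain the large values guaranteed by Lemma~\ref{2-1.7}. The only nonroutine part is the closing bookkeeping, in which one checks that the perturbations of (ii) are absorbed by the bounds of (iii); this is precisely where the hypothesis $C_1-1>C_2$ is used.

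First I would record the permutation-invariance claim. Fix a permutation $\sigma$ of $[n]$ and a Boolean function $g$, and write $g_\sigma(x)\eqdef g(\sigma(x))$. The map $x\mapsto\sigma(x)$ is a measure-preserving bijection of $\{0,1\}^n$, it carries hypercube edges to hypercube edges, and it commutes with the correlated-pair distribution $T_\delta$, which treats all coordinates symmetrically. Hence $B[g_\sigma]=B[g]$, $I[g_\sigma]=I[g]$ and $NS_\delta[g_\sigma]=NS_\delta[g]$ for every $\delta$, and this holds for every realization of the random $\sigma$ used to define $D_2^B$, $D_2^I$, $D_2^\delta$. In particular, an arbitrary $f\in D_2^B$ is the OR of $f_0$ and some $(F^B)_\sigma$ with $B[(F^B)_\sigma]=B[F^B]$, and likewise for $D_2^I$ and $D_2^\delta$.

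Next I would apply Lemma~\ref{2-1.5} termwise. For part (a): Lemma~\ref{2-1.5}(a) says OR-ing with $f_0$ can only increase the bias, so $B[f]\ge B[(F^B)_\sigma]=B[F^B]\ge\Omega(1/n^{C_2})$ by Lemma~\ref{2-1.7}. For part (b): Lemma~\ref{2-1.5}(b) gives $I[f]\ge I[(F^I)_\sigma]-2n/n^{C_1}=I[F^I]-2n/n^{C_1}\ge\Omega(\sqrt n/n^{C_2})-O(n/n^{C_1})$. For part (c): Lemma~\ref{2-1.5}(c) gives $NS_\delta[f]\ge NS_\delta[(F^\delta)_\sigma]-2/n^{C_1}=NS_\delta[F^\delta]-2/n^{C_1}$, which by Lemma~\ref{2-1.7} is at least $\Omega(\delta\sqrt n/n^{C_2})-O(1/n^{C_1})$ when $1/n\le\delta\le1/\sqrt n$ and at least $\Omega(1/n^{C_2})-O(1/n^{C_1})$ when $1/\sqrt n<\delta\le1/2$.

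The step I expect to demand the most attention — modest though it is — is confirming that these error terms are of strictly lower order. In (b) the main term is $n^{1/2-C_2}$ and the error is $n^{1-C_1}$; since $C_1-1>C_2$ forces $1-C_1<1/2-C_2$, the error is $o(\sqrt n/n^{C_2})$. In (c) the tightest case is $\delta=1/n$, where the main term is $\Omega(n^{-1/2-C_2})$ and the error is $O(n^{-C_1})$; from $C_1-1>C_2$ we get $C_1>C_2+1/2$, so the error is again lower order, while for $1/\sqrt n<\delta\le1/2$ the comparison $n^{-C_1}=o(n^{-C_2})$ is immediate. In each case the subtraction leaves an $\Omega(\cdot)$ of the same order as the corresponding main term, which is exactly what the observation asserts.
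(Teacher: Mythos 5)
Your proposal is correct and follows essentially the same route as the paper: permutation invariance of bias, influence and noise sensitivity, the perturbation bounds of Lemma \ref{2-1.5} applied to the OR with $f_0$, the lower bounds of Lemma \ref{2-1.7} for the base functions, and the condition $C_1-1>C_2$ to absorb the $O(n/n^{C_1})$ and $O(1/n^{C_1})$ error terms (including the tightest case $\delta=1/n$, where $C_1>C_2+1/2$ suffices). Your explicit check of the error-term orders simply fills in bookkeeping the paper leaves implicit.
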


\blue{
Now, we are ready to prove our main theorems.  Let us first consider the case of estimating the bias. We will prove it by contradiction, showing that the negation of Theorem \ref{2-1.1} implies we can reliably distinguish two distributions that Lemma \ref{2-1.6} prevents us from distinguishing. Suppose $\mathcal{L}_B$ is an algorithm, taking as input a monotone function $f: \{0,1\}^n\rightarrow \{0,1\}$ and with probability at least $2/3$ distinguishing whether $f$ (i) has a bias of $\Theta(1/n^{C_1})$ or (ii) has a bias of at least $\Omega(1/n^{C_2})$. For the sake of contradiction, assume that it makes $o \left(\frac{n^{C_2}}{e^{\sqrt{C_1 \log n/2}}} \right)$ queries.}

\blue{
By Lemma \ref{2-1.4} every function in $D_1$ has a bias of $\Theta(1/n^{C_1})$ and by Observation \ref{2-1.9}, the bias of every function in $D_2^B$ is at least $\Omega(1/n^{C_2})$. Therefore, $\mathcal{L}_B$ can distinguish between them with probability at least $2/3$ making $o \left(\frac{n^{C_2}}{e^{\sqrt{C_1 \log n/2}}} \right)$. But by Lemma \ref{2-1.6}, such an algorithm has to err with probability more than $1/3$. We have a contradiction and Theorem \ref{2-1.1} follows.
}

\blue{
Theorems \ref{2-1.2} and \ref{2-1.3} follow analogously.
}

\subsection{Proof of Lemma \ref{2-1.6}}

Suppose $\mathcal{C}$ is an adaptive algorithm that makes at most $q$ queries and distinguishes a random function in $D_2$ from $D_1$. We denote the number of queries it makes as $q$. Without loss of generality, we assume that it always makes precisely $q$ queries. The algorithm is adaptive and in the end it outputs $1$ or $2$. Then, the probability that the algorithm correctly distinguishes equals:
$$
p_{\mathcal{C}} \myeq
\frac{1}{2}Pr[\mathcal{C} \text{ returns } 1
\vert f \in_{R} D_1]
+
\frac{1}{2}
Pr[\mathcal{C} \text{ returns } 2
\vert f \in_{R} D_2]
$$
We call the difference above \textbf{the distinguishing power} of $\mathcal{C}$.
\begin{observation}
\label{2-2.1}
For any $f$ in $D_1 \cup D_2$ it is the case that if $L(x)>n/2+k\sqrt{n \log n}$, then $f(x)=1$. Therefore, without loss of generality we can assume that $\mathcal{C}$ never queries any point in that region.  
\end{observation}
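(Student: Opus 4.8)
The plan is to verify the claim separately for the two distributions that make up $D_1 \cup D_2$, and then justify the ``without loss of generality'' reduction by a simulation argument.

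First I would handle $D_1$. Since $D_1$ consists solely of the function $f_0$, and $f_0$ is defined to equal $1$ precisely on those $x$ with $L(x) > n/2 + k\sqrt{n \log n}$, the claim holds for $D_1$ immediately by the definition of $f_0$. Next I would handle $D_2$. Every $f$ in $D_2$ has the form $f(x) = f_0(x) \vee F(\sigma(x))$ for some permutation $\sigma$ of the indices (and the fixed ``thin'' function $F$). Whenever $L(x) > n/2 + k\sqrt{n \log n}$, the first disjunct $f_0(x)$ already equals $1$, so $f(x) = 1$ regardless of the value of $F(\sigma(x))$. Hence the claim holds for every $f$ in $D_2$ as well, and therefore for all $f \in D_1 \cup D_2$.

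For the reduction, given any algorithm $\mathcal{C}$ that distinguishes a random function in $D_2$ from $D_1$, I would construct $\mathcal{C}'$ that simulates $\mathcal{C}$ step by step but intercepts every query $\mathcal{C}$ would make to a point $x$ with $L(x) > n/2 + k\sqrt{n \log n}$, returning the answer $1$ to $\mathcal{C}$ without actually querying the oracle. By the first two parts, this hard-coded answer is always the true value of $f(x)$ for every $f$ in the support of $D_1$ and of $D_2$; consequently $\mathcal{C}'$ produces exactly the same output distribution as $\mathcal{C}$ on every such input, so it has the same distinguishing power, while making no more oracle queries than $\mathcal{C}$. Thus we may assume from now on that $\mathcal{C}$ never queries a point in that region.

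This observation is essentially immediate, so there is no real obstacle; the only point worth stating carefully is that the substitution of the hard-coded value $1$ is simultaneously correct for every function in the support of both $D_1$ and $D_2$ (which is exactly what the first two parts establish), so that replacing the intercepted query by the answer $1$ can never alter $\mathcal{C}$'s behavior on any function it might actually be run against.
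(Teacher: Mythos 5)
Your proof is correct and follows exactly the reasoning the paper intends: the paper states this observation without proof precisely because, as you note, $f_0$ equals $1$ above the threshold level by definition and every function in $D_2$ is an OR with $f_0$, so such queries are uninformative and can be answered internally by a simulating algorithm. Your explicit simulation argument for the ``without loss of generality'' step is a careful spelling-out of what the paper treats as immediate, with no difference in substance.
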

\begin{observation}
\label{2-2.2}
If $\mathcal{C}$ is randomized, we can think of it as a probability distribution over deterministic algorithms. The distinguishing powers of $\mathcal{C}$ then will be the weighted sum of the distinguishing power of the deterministic algorithms, weighted by their probabilities. Therefore, the distinguishing power of the best of these deterministic algorithms is at least that of $\mathcal{C}$. Thus, without loss of generality we can assume that $\mathcal{C}$ is deterministic.
\end{observation}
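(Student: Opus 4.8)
The plan is to make the standard averaging (Yao-type) argument precise. First I would recall that a randomized algorithm $\mathcal{C}$ making exactly $q$ queries is, by definition, specified by a distribution $R$ over internal coin sequences $r$ together with, for each fixed $r$, a deterministic $q$-query procedure $\mathcal{C}_r$: on any oracle, $\mathcal{C}_r$ asks a sequence of $q$ queries whose locations, and whose final answer in $\{1,2\}$, are determined by $r$ together with the answers received so far. Because the total query budget is fixed (we already assumed $\mathcal{C}$ always makes precisely $q$ queries), every such $\mathcal{C}_r$ also makes exactly $q$ queries. The essential point is that the coins $r\sim R$ are drawn independently of the random input: the function $f$ is drawn from $D_1$ with probability $1/2$ and from $D_2$ with probability $1/2$, and in the $D_2$ case the random permutation $\sigma$ of the indices is part of this independent input randomness as well.

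Using this independence I would expand the distinguishing power. Conditioning on $f \in_R D_1$, independence of $r$ from the input gives
$$
\Pr[\mathcal{C}\text{ returns }1 \mid f \in_R D_1]
= \E_{r \sim R}\big[\Pr[\mathcal{C}_r\text{ returns }1 \mid f \in_R D_1]\big],
$$
and symmetrically $\Pr[\mathcal{C}\text{ returns }2 \mid f \in_R D_2] = \E_{r \sim R}\big[\Pr[\mathcal{C}_r\text{ returns }2 \mid f \in_R D_2]\big]$. Averaging the two identities and applying linearity of expectation yields $p_{\mathcal{C}} = \E_{r \sim R}[p_{\mathcal{C}_r}]$, where $p_{\mathcal{C}_r}$ is the distinguishing power of the deterministic procedure $\mathcal{C}_r$ — this is exactly the claim that $\mathcal{C}$'s distinguishing power is the weighted sum of those of the deterministic $\mathcal{C}_r$'s, weighted by their probabilities under $R$. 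Since the expectation of a real random variable never exceeds its supremum, there is a fixing $r^{*}$ of the coins with $p_{\mathcal{C}_{r^{*}}} \geq p_{\mathcal{C}}$, and $\mathcal{C}_{r^{*}}$ is a deterministic algorithm making exactly $q$ queries. Hence any bound of the form $p_{\mathcal{C}'} < 2/3$ (equivalently, error probability strictly above $1/3$) proven for all deterministic $q$-query algorithms $\mathcal{C}'$ automatically transfers to $\mathcal{C}$, and for the remainder of the proof we may assume $\mathcal{C}$ is deterministic.

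Finally, I would remark that Observation \ref{2-2.1} applies to the deterministic $\mathcal{C}_{r^{*}}$ exactly as it does to $\mathcal{C}$, so the resulting deterministic algorithm may additionally be assumed never to query any $x$ with $L(x) > n/2 + k\sqrt{n\log n}$. There is no real obstacle here; the only point that genuinely requires care is the independence of the algorithm's internal randomness from the randomness defining the input distribution, since that independence is precisely what licenses pulling the expectation over $r$ outside the conditional probabilities above.
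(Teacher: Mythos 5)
Your proposal is correct and is essentially the paper's own argument made precise: the paper's observation is exactly this averaging step (distinguishing power of $\mathcal{C}$ equals the expectation over coin fixings of the deterministic algorithms' distinguishing powers, so some fixing does at least as well), and your careful note about the independence of the internal coins from the input randomness, plus the compatibility with Observation \ref{2-2.1}, only spells out what the paper leaves implicit. No gaps.
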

Now, since $\mathcal{C}$ is deterministic and it makes $q$ queries, it can be represented as a decision tree of depth $q$. At each node, $\mathcal{C}$ queries a point and proceeds to the next node. In the end, after $q$ queries, the algorithm reaches a leaf and outputs the label of the leaf, namely $1$ or $2$. We can divide this decision tree into two regions:
\begin{itemize}
\item A path $A$ that the algorithm takes if at every query it receives zero.
\item The rest of the decision tree. We call this region $B$.
\end{itemize}

By Observation \ref{2-2.1} and the definition of $f_0$, when the algorithm is given access to a member of $D_1$, namely $f_0$, all the points $x$ that it queries will have $f(x)=f_0(x)=0$. Therefore, the algorithm will follow the path $A$ on the decision tree and end up on the single leaf there.

Suppose now the algorithm is given access to a function in $D_2$. There are two cases:
\begin{itemize}
\item All the queries $x^j$ it makes, will have $f(x^j)=0$. Then, on the decision tree it will follow the path $A$ and end up at the single leaf on it.
\item After making query $x^{j}$ for which $f(x^j)=1$, the algorithm ends up in the subset of the tree we call $B$. We call the probability that this happens $p_{\text{get1}}$.
\end{itemize}

If the single leaf in $A$ is not labeled with $1$, then the algorithm will always err, given access to $D_1$. Similarly, the algorithm can reach a leaf in $B$ only if it was given access to $D_2$. Thus, labeling all such leaves with $2$ can only increase the distinguishing power. Therefore, without loss of generality, we assume that this is the labeling used in $\mathcal{C}$. Then, the distinguishing power $p_{\mathcal{C}}$ equals $\frac{1}{2}(1+p_{\text{get}1})$.

The following lemma shows that we can assume that the algorithm is non-adaptive and only makes queries on the level $n/2+k\sqrt{n \log n}$.
\begin{lemma}
\label{2-2.3}
There exists an algorithm $\mathcal{D}$ that satisfies all of the following:
\begin{itemize}
\item It is deterministic and \textbf{non}-adaptive.
\item Its distinguishing power between $D_1$ and $D_2$ is at least $p_{\mathcal{C}}$.
\item Just as $\mathcal{C}$ it makes $q$ queries. We call them $z^1,...,z^q$.
\item For each of these $z^j$, it is the case that $L(z^j)=n/2+k\sqrt{n \log n}$.
\end{itemize}
\end{lemma}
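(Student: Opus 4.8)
The plan is to convert the adaptive decision tree $\mathcal{C}$ into the desired non-adaptive $\mathcal{D}$ in two stages: first restrict all queries to the critical level $n/2 + k\sqrt{n\log n}$, then strip away adaptivity. Recall from the preceding discussion that the only thing that matters for $\mathcal{C}$'s distinguishing power is $p_{\text{get}1}$, the probability (over the random permutation $\sigma$ defining the function in $D_2$) that $\mathcal{C}$ ever queries a point $x$ with $F(\sigma(x)) = 1$ before exhausting its budget; indeed $p_{\mathcal{C}} = \tfrac12(1 + p_{\text{get}1})$, and an identical identity will hold for any algorithm of the type we build, so it suffices to produce a non-adaptive $\mathcal{D}$ querying only level-$(n/2+k\sqrt{n\log n})$ points whose own $p_{\text{get}1}$ is at least that of $\mathcal{C}$.

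First I would push every query up to the critical level. Since $F' = F \vee f_0$ and all functions in $D_1 \cup D_2$ are monotone, for any queried point $x$ with $L(x) \le n/2 + k\sqrt{n\log n}$ we may replace $x$ by any point $x^{\uparrow}$ with $x \preceq x^{\uparrow}$ and $L(x^\uparrow) = n/2 + k\sqrt{n\log n}$: by monotonicity $F(\sigma(x)) = 1 \Rightarrow F(\sigma(x^\uparrow)) = 1$, so querying $x^\uparrow$ is at least as informative for the task of ``finding a $1$,'' and on $D_1$ the answer is still $0$ (we are still below the truncation threshold, using Observation \ref{2-2.1}). One must be slightly careful that this replacement is done consistently as a transformation of the decision tree — formally, define a new tree in which the node originally querying $x$ instead queries $x^\uparrow$, keeping the same branching structure; on input $f_0$ the walk is unchanged (all answers $0$), and on input from $D_2$ the event ``reach region $B$'' only becomes more likely. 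This yields an adaptive algorithm $\mathcal{C}'$ of the same depth $q$, with distinguishing power $\ge p_{\mathcal{C}}$, all of whose queries lie on the critical level.

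Next I would remove adaptivity. The key observation is that on the ``all-zeros'' path $A$ of $\mathcal{C}'$ the sequence of queried points $z^1, \dots, z^q$ is \emph{fixed} — it does not depend on $\sigma$, because as long as every answer so far is $0$ the algorithm deterministically proceeds along $A$. Now let $\mathcal{D}$ be the non-adaptive algorithm that simply queries exactly this fixed list $z^1, \dots, z^q$ and outputs $2$ iff some answer is $1$. I claim $\mathcal{D}$'s $p_{\text{get}1}$ equals $\mathcal{C}'$'s: for a function $F'(\sigma(\cdot))$ in $D_2$, $\mathcal{C}'$ leaves $A$ (i.e. sees a $1$) if and only if at least one of $z^1,\dots,z^q$ evaluates to $1$ under $F'\circ\sigma$ — and that is precisely the event that makes $\mathcal{D}$ output $2$. (If $\mathcal{C}'$ sees a $1$ at step $j$ it has queried $z^1,\dots,z^j$, all on level $n/2+k\sqrt{n\log n}$, so in particular some $z^i$ gave a $1$; conversely if some $z^i$ gives $1$ then $\mathcal{C}'$ cannot have stayed on $A$ through step $i$.) Hence $\mathcal{D}$ is deterministic, non-adaptive, makes $q$ queries all at the critical level, and has distinguishing power $\tfrac12(1 + p_{\text{get}1}) \ge p_{\mathcal{C}}$, proving the lemma. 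The subsequent union bound over the $q$ queried points (each hit by the random ``thin'' set with probability $\le 1/q_0$) will then finish the proof of Lemma \ref{2-1.6}.

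The main obstacle is conceptual rather than computational: making the ``monotone lifting'' step rigorous as a manipulation of the decision tree while tracking that it is never \emph{harmful} — i.e. that after replacing $x$ by $x^\uparrow$, the event of reaching region $B$ on a $D_2$-input only grows, and nothing changes on the $D_1$-input. Once one is comfortable identifying ``distinguishing power'' with ``probability of ever seeing a $1$,'' both transformations are essentially bookkeeping, but that identification (and the claim that region-$B$ leaves may all be relabeled $2$, and the single leaf on $A$ relabeled $1$, without loss) has to be invoked carefully at each step.
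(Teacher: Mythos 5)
Your proposal is correct and is essentially the paper's own argument: the paper likewise takes the queries $y^1,\dots,y^q$ along the all-zeros path $A$, lifts each to a point $z^j \succeq y^j$ at level $n/2+k\sqrt{n\log n}$, and uses monotonicity to conclude that the probability of seeing a $1$ (hence $p_{\text{get}1}$ and the distinguishing power) can only increase, while nothing changes on $D_1$. Your two-stage presentation (lift the whole tree first, then read off its all-zeros path) produces the identical algorithm $\mathcal{D}$, so the difference is purely organizational.
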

\begin{proof}
Consider the queries that $\mathcal{C}$ makes along the path $A$. Call them $y^1,...,y^q$. We define $z^j$ as an arbitrary point that satisfies (i) $L(z^j)=n/2+k\sqrt{n \log n}$ and (ii) $ y^j\preceq z^j$. At least one such point has to exist since $L(y^j) \leq n/2+k\sqrt{n \log n}$.

The algorithm $\mathcal{D}$ queries each of these $z^j$ and returns $2$ if for at least one of them $f(z^j)=1$. Otherwise it returns $1$.

Since $ y^j\preceq z^j$ and the functions are monotone, whenever $f(y^j)=1$, then $f(z^j)=1$. At least one of $f(y^j)$ equals one with probability $p_{\text{get1}}$, and therefore at least one of $f(z^j)$ equals one with probability at least $p_{\text{get1}}$.

Thus, $\mathcal{D}$ has a distinguishing power of at least $\frac{1}{2}(1+p_{\text{get1}})$. This implies that the distinguishing power of $\mathcal{D}$ is at least that of $\mathcal{C}$.
\end{proof}

Now, we can bound the distinguishing power of $\mathcal{D}$, which we call $p_{\mathcal{D}}$. We have:

\begin{multline*}
\frac{1}{2}
Pr[\mathcal{D} \text{ returns } 2
\vert f \in_{R} D_2]
=
Pr_{f \in_{R} D_2}\left[\bigvee_{j=0}^q f(z^j)=1\right]
\leq
\sum_{j=0}^{q}
Pr_{f \in_{R} D_2}\left[f(z^j)=1\right]
\\=
\sum_{j=0}^{q}
Pr_{\sigma \text{ is a random permutation}}\left[F(\sigma( z^j))=1\right]
=
q \cdot Pr_{x \in_{R} \{0,1\}^n
}\left[F(x)=1
\bigg \lvert
L(x)=n/2+k \sqrt{n \log n}\right
]
\leq
\frac{q}{q_0}
\end{multline*}
Above we used (i) a union bound and the fact that by Lemma \ref{2-2.3}, the algorithm $\mathcal{D}$ is non-adaptive (ii) The fact that by definition $f(x)=F(\sigma (x))$. Recall that $\sigma$ is the random permutation $F$ was permuted with. (iii) For any constant $x$, $\sigma(x)$ is uniformuly distributed among the vertices with the same level. (iv) Lemma \ref{2-2.3} together with the condition on the function $F$.

Therefore, we get that:

$$
p_{\mathcal{D}}
\myeq
\frac{1}{2}
Pr[\mathcal{D} \text{ returns } 2
\vert f \in_{R} D_2]
+\frac{1}{2}
Pr[\mathcal{D} \text{ returns } 1
\vert f \in_{R} D_1]
\leq
\frac{q}{q_0}+\frac{1}{2}
$$

Since $q=o(q_0)$, then $p_{\mathcal{D}}$ has to be less than $2/3$. Since $p_{\mathcal{C}}$ is at most $p_{\mathcal{D}}$ by Lemma \ref{2-2.3}, then $p_{\mathcal{C}}$ also has to be less than $2/3$. This proves the lemma.
\subsection{Proof of Lemma \ref{2-1.7}}
Consider the distribution\footnote{There are two differences between this distribution and Talagrand random functions: (i) Here we choose $1/n^{C_2}\cdot 2^{\sqrt{n}}$ clauses, whereas Talagrand functions have just $2^{\sqrt{n}}$ clauses. (ii) In Talagrand functions the indices in each clause are sampled with replacement, whereas here we sample them without replacement.} $H$ of functions, which is OR of $1/n^{C_2} \cdot 2^{\sqrt{n}}$ AND clauses of uniformly and independently chosen subsets of $\sqrt{n}$ indices, chosen without replacement.
We will prove that:
\begin{itemize}
\item[a)] \textbf{Any} $f$ in $H$ has the property that:
$$Pr_{x \in_R \{0,1\}^n}[f(x)=1 \vert L(x)=n/2+k \sqrt{n \log n}] \leq \Theta\left(1/n^{C_2} \cdot e^{\sqrt{C_1 \log n/2}} \right)$$
\item[b)] $E_{f \in_{R} H} [B[f]] \geq \Omega(1/n^{C_2})$.
\item[c)] $E_{f \in_{R} H} [I[f]] \geq \Omega(\sqrt{n}/n^{C_2})$.
\item[d)] For any $1/n \leq \delta \leq 1/2$,  it is the case that:
\begin{equation*}
E_{f \in_{R} H}[NS_{\delta}[f]]
\geq
\begin{cases}
\Omega(\delta \sqrt{n} /n^{C_2})
&\text{ if }1/n \leq \delta \leq 1/\sqrt{n} \\
\Omega(1/n^{C_2})
&\text{ if }1/\sqrt{n} < \delta \leq 1/2
\end{cases}
\end{equation*}
\end{itemize}
Then, the corresponding claims of the lemma will follow by an application of the probabilistic method. We have divided into subsections the proofs of the claims above.
\subsection{Proof of (a)}

Here we treat the clauses as fixed and look at the probability over the randomness of choosing $x$ to satisfy any of them. For a given clause, the probability that $x$ will satisfy it equals:

\begin{multline*}
\prod_{i=0}^{\sqrt{n}-1}
\frac{n/2+k \sqrt{n \log n}-i}{n}
\leq
\left(
\frac{n/2+k \sqrt{n \log n}}{n}
\right)^{\sqrt{n}}
\\=
\frac{1}{2^{\sqrt{n}}}
\left(
1+\frac{2k \sqrt{ \log n}}{\sqrt{n}}
\right)^{\sqrt{n}}
\leq
\frac{1}{2^{\sqrt{n}}}
\Theta\left(e^{2k \sqrt{\log n}}\right)
\leq
\frac{1}{2^{\sqrt{n}}}
\cdot 
\Theta
(
e^{\sqrt{C_1 \log n/2}}
)
\end{multline*}
In the very end we used that by Lemma \ref{2-1.4} it is the case that $k \leq \sqrt{C_1 /8}$.

Now, that we know the probability for one clause, we can upper-bound the probability $x$ satisfies any of the $\frac{1}{n^{C_2}}\cdot 2^{\sqrt{n}}$ using a union bound. This gives us an upper bound of $\Theta\left(1/n^{C_2} \cdot e^{\sqrt{C_1 \log n/2}} \right)$.
\subsection{Proof of b)}

It is the case that:

$$
E_{f \in_{R} H}[B[f]]
=
E_{f \in_{R} H}[E_{x \in_{R} \{0,1\}^n}[f(x)]]
=
E_{x \in_{R} \{0,1\}^n}[E_{f \in_{R} H}[f(x)]]
\geq
\frac{1}{2}
E_{x \in_{R} \{0,1\}^n}\left[E_{f \in_{R} H}[f(x)] \Bigg\vert L(x) \geq \frac{n}{2}\right]
$$

If we fix a value of $x$ for which $L(x)\geq n/2$, and randomly choose a single AND of $\sqrt{n}$ indices, the probability that it evaluates to one on $x$ is:

\begin{multline*}
\frac{L(x)}{n}
\cdot
\frac{L(x)-1}{n-1}
\cdot
...
\cdot
\frac{L(x)-\sqrt{n}+1}{n-\sqrt{n}+1}
\geq
\left(
\frac{L(x)-\sqrt{n}+1}{n-\sqrt{n}+1}
\right)^{\sqrt{n}}
\geq 
\left(
\frac{n/2-\sqrt{n}+1}{n-\sqrt{n}+1}
\right)^{\sqrt{n}}
\\ \geq
\left(
\frac{1}{2}
-\frac{\sqrt{n}/2-1/2}{n-\sqrt{n}+1}
\right)^{\sqrt{n}}
\geq
\frac{1}{2^{\sqrt{n}}}
\left(
1
-\frac{1}{\Theta(\sqrt{n})}
\right)^{\sqrt{n}}
=\frac{1}{\Theta(2^{\sqrt{n}})}
\end{multline*}
Then, since we have $2^{\sqrt{n}}/n^{C_2}$ clauses and they are chosen independently:

$$
E_{x \in_{R} \{0,1\}^n}\left[E_{f \in_{R} H}[f(x)] \Bigg\vert L(x) \geq \frac{n}{2}\right]
\geq
1-\left(
1-\frac{1}{\Theta(2^{\sqrt{n}})}
\right)^{2^{\sqrt{n}}/n^{C_2}}
\geq
\frac{1}{\Theta(n^{C_2})}
$$
This implies that $E_{f \in_{R} H}[B[f]] \geq 1/\Theta(n^{C_2})$.

\subsection{Proof of c)}
We have that:
\begin{equation}
\label{eq 2-(3)}
E_{f \in_{R} H}[I[f]]
=
E_{f \in_{R} H}
\left[
n \cdot 
Pr_{x \in_{R} \{0,1\}^n, i \in_{R} [n]}
[
f(x) \neq f(x^{\oplus i})
]
\right]
=
n \cdot
Pr_{f \in_{R} H, x \in_{R} \{0,1\}^n, i \in_{R} [n]}
[
f(x) \neq f(x^{\oplus i})
]
\end{equation}
The probability of an event is the expectation of its indicator random variable. Using this twice, gives us the second equality above.

From Hoeffding's inequality, it follows that with probability at least $0.95$ it is the case that $n/2-\sqrt{n} \leq L(x) \leq n/2+\sqrt{n}$. From this and Equation (\ref{eq 2-(3)}) it follows:
\begin{multline}
\label{eq 2-(4)}
E_{f \in_{R} H}[I[f]]
\geq
n \cdot
Pr_{f \in_{R} H, x \in_{R} \{0,1\}^n, i \in_{R} [n]}
\left[
f(x) \neq f(x^{\oplus i}) \bigg \vert
n/2-\sqrt{n} \leq L(x) \leq n/2+\sqrt{n}
\right]
\\
\times
Pr_{x \in_{R} \{0,1\}^n}
[
n/2-\sqrt{n} \leq L(x) \leq n/2+\sqrt{n}
]
\\
\geq
0.95n
Pr_{f \in_{R} H, x \in_{R} \{0,1\}^n, i \in_{R} [n]}
\left[
f(x) \neq f(x^{\oplus i}) \bigg \vert
n/2-\sqrt{n} \leq L(x) \leq n/2+\sqrt{n}
\right]
\end{multline}

Now we will lower-bound $Pr_{f \in_{R} H, i \in_{R} [n]}
[
f(x) \neq f(x^{\oplus i})
]
$ for any $x$, for which $n/2-\sqrt{n} \leq L(x) \leq n/2+\sqrt{n}$. Name the clauses in $f$ as $\land_1, \land_2,...,\land_{\frac{1}{n^{C_2}}2^{\sqrt{n}}}$.
For any clause $\land_{j}$ we have:
$$
Pr_{f \in_{R} H}[\land_{j} \text{ is satisfied}]
=
\frac{L(x)}{n}
\cdot
\frac{L(x)-1}{n-1}
\cdot
...
\cdot
\frac{L(x)-\sqrt{n}+1}{n-\sqrt{n}+1}
$$
And since $n/2-\sqrt{n} \leq L(x) \leq n/2+\sqrt{n}$:
$$
\frac{1}{2^{\sqrt{n}}}
\left(
1-\frac{1}{\Theta(\sqrt{n})}
\right)^{\sqrt{n}}
\leq
\left(
\frac{L(x)-\sqrt{n}+1}{n-\sqrt{n}+1}
\right)^{\sqrt{n}}
\leq
Pr_{f \in_{R} H}[\land_{j} \text{ is satisfied}]
\leq
\left(
\frac{L(x)}{n}
\right)^{\sqrt{n}}
\leq
\frac{1}{2^{\sqrt{n}}}
\left(
1+\frac{1}{\Theta(\sqrt{n})}
\right)^{\sqrt{n}}
$$
This implies that:
\begin{equation}
\label{eq 2-(5)}
Pr_{f \in_{R} H}[\land_{j} \text{ is satisfied}]
=\frac{1}{\Theta(2^{\sqrt{n}})}
\end{equation}
For every $i$, so that $1\leq i \leq 2^{\sqrt{n}}/n^{C_2}$, consider the following sequence of events, which we call $M_i$:
\begin{itemize}
\item $x$ satisfies $\land_i$. By Equation (\ref{eq 2-(5)}) the probability of this happening is $1/\Theta(2^{\sqrt{n}})$.
\item $x$ does not satisfy all the other $\frac{1}{n^{C_2}} 2^{\sqrt{n}}-1$ clauses. Since they are are chosen independently, by Equation (\ref{eq 2-(5)}) we have that the probability of this is $\left(1-1/\Theta(2^{\sqrt{n}}) \right)^{2^{\sqrt{n}}/n^{C_2}-1}$, which is at least $\Theta(1)$.
\item $i$ is one of the inputs that are relevant to $\land_i$. The probability of this is $1/\sqrt{n}$.
\end{itemize}
Since these three events are independent:
\begin{equation*}
Pr_{f \in_{R} H}[M_i]
\geq\frac{1}{\Theta(2^{\sqrt{n}})}
\cdot \Theta(1) \cdot \frac{1}{\sqrt{n}}
=\frac{1}{\Theta(\sqrt{n}2^{\sqrt{n}})}
\end{equation*}

If $M_i$ happens, then $f(x) \neq f(x^{\oplus i})$. Additionally for different values of $i$, the $M_i$ are disjoint and the probability of $M_i$ is the same for all $i$ by symmetry. Thus we have:
\begin{multline}
\label{eq 2-(6)}
Pr_{f \in_{R} H, i \in_{R} [n]}
\left[f(x) \neq f(x^{\oplus i})\right]
\geq
Pr_{f \in_{R} H, i \in_{R} [n]}
\left[
\bigvee_i M_i
\right]
=
\frac{2^{\sqrt{n}}}{n^{C_2}}
Pr_{f \in_{R} H, i \in_{R} [n]}
\left[
M_1
\right]
\\ \geq \frac{2^{\sqrt{n}}}{n^{C_2}}
\cdot \frac{1}{\Theta(\sqrt{n}2^{\sqrt{n}})}
=
\frac{1}{\Theta(\sqrt{n} \cdot n^{C_2})}
\end{multline}

Combining Equations (\ref{eq 2-(4)}) and (\ref{eq 2-(6)}):
$$
E_{f \in_{R} H}[I[f]]
\geq
0.95n \cdot \frac{1}{\Theta(\sqrt{n} \cdot n^{C_2})}
=\Omega\left(\frac{\sqrt{n}}{ n^{C_2}}\right)
$$

\subsection{Proof of d)}
Recall that in the definition of noise sensitivity, $x$ is chosen uniformly and $y$ is chosen by flipping each bit of $x$ with probability $\delta$. We have:

\begin{equation}
\label{eq 2-(7)}
E_{f \in_{R} H}[NS_{\delta}[f]]
=
E_{f \in_{R} H}
\left[
Pr_{(x,y) \in_{R} T_{\delta}}
[
f(x) \neq f(y)
]
\right]
=
Pr_{f \in_{R} H,(x,y) \in_{R} T_{\delta}}
[
f(x) \neq f(y)
]
\end{equation}

Consider the following three ``good" events, which are similar to the ones introduced in \cite{num9} to analyze the noise sensitivity of Talagrand random functions:
\begin{itemize}
\item $G_1$ is when $n/2-\sqrt{n} \leq L(x) \leq n/2+\sqrt{n}$. By Hoeffding's inequality, its probability is at least $0.95$.
\item $G_2$ is when $n/2-\sqrt{n} \leq L(y) \leq n/2+\sqrt{n}$. Since $y$ is also distributed uniformly, its probability is also at least $0.95$.
\item Denote by $S_x$ the set of indices $i$ for which $x_i=1$. By the definition of noise sensitivity, in expectation $\delta |S_x|$ of them become zero in $y$. The event $G_3$ happens when at least $\delta |S_x|/2$ of them are zero in $y$. By the Chernoff bound, the probability of this is at least $1-\exp(-\delta n/8)\geq 1-\exp(-1/8) \geq 0.11$. 
\end{itemize}
By a union bound, with probability at least $0.01$ all the events $G_1$, $G_2$ and $G_3$ happen. Therefore:

\begin{multline}
\label{eq 2-(8)}
E_{f \in_{R} H}[NS_{\delta}[f]]
\geq
Pr_{f \in_{R} H, (x,y) \in_{R} T_{\delta}}
\left[
f(x) \neq f(y) \bigg \vert
G_1 \land G_2 \land G_3
\right]
\cdot
Pr_{(x,y) \in_{R} T_{\delta}}
[
G_1 \land G_2 \land G_3
]
\\\geq
0.01 \cdot
Pr_{f \in_{R} H, (x,y) \in_{R} T_{\delta}}
\left[
f(x) \neq f(y) \bigg \vert
G_1 \land G_2 \land G_3
\right]
\end{multline}
Now, suppose we are given values of $x$ and $y$ that satisfy $G_1 \land G_2 \land G_3$, we will lower bound $Pr_{f \in_{R} H}
\left[
f(x) \neq f(y)
\right]$. Using $G_1$ and $G_2$, just as in the proof of equation (5), we have that for any clause $\land_j$:
\begin{align}
\label{eq 2-(9)}
Pr_{f \in_{R} H}[\land_{j} \text{ is satisfied by } x]
=\frac{1}{\Theta(2^{\sqrt{n}})}
&&
Pr_{f \in_{R} H}[\land_{j} \text{ is satisfied by } y]
=\frac{1}{\Theta(2^{\sqrt{n}})}
\end{align}

Now, analogous to how we lower bounded the expected influence while proving part c), consider the following sequence of events, which we call $N_i$:
\begin{itemize}
\item $x$ satisfies $\land_i$. By Equation (\ref{eq 2-(9)}), the probability of this is at least $1/\Theta(2^{\sqrt{n}})$.
\item All the $\land_j$ for $j \neq i$ are unsatisfied by both $x$ and $y$. By Equation (\ref{eq 2-(9)}) and a union bound, for each individual clause the probability of being unsatisfied by both $x$ and $y$ is at least $1-2 \cdot 1/\Theta(2^{\sqrt{n}})=1-1/\Theta(2^{\sqrt{n}})$. By independence, the probability of the overall event is at least $\left(1-1/\Theta(2^{\sqrt{n}}) \right)^{2^{\sqrt{n}}/n^{C_2}-1}$, which is at least $\Theta(1)$.
\item Given that $x$ satisfies $\land_i$, it happens that at least one of the coordinates relevant to $\land_i$ is zero in $y$. We call the probability of this happening $p_{\text{flip}}$. 
\end{itemize}
The third event is conditioned on the first one, so the probability that both happen equals to the product of their probabilities. In addition, The first and third event depend only on the randomness in choosing $\land_i$, and the second event only on the randomness in choosing all the other clauses. Therefore the second event is independent from the first and third, and thus:
\begin{equation}
\label{eq 2-(10)}
Pr_{f \in_{R} H}
[N_i]
\geq
\frac{1}{\Theta(2^{\sqrt{n}})}
\cdot
\Theta(1) 
\cdot
p_{\text{flip}}
=
\frac{p_{\text{flip}}}{\Theta(2^{\sqrt{n}})}
\end{equation}

We now lower-bound $p_{\text{flip}}$. Because of $G_3$, at least $\delta |S_x|/2$ of the indices in $S_x$ become zero in $y$. Let $S_{\land_i}$ be the set of $\sqrt{n}$ indices relevant to $\land_i$. Since they were chosen uniformly at random then, conditioning on $\land_i$ being satisfied, $S_{\land_i}$ has equal probability of being any subset of $S_{x}$ of size $\sqrt{n}$. Therefore, the probability that at least one of them ends up among the indices in $S_x$ that become zero in $y$:
\begin{equation*}
p_{\text{flip}}
=
1-\prod_{j=0}^{\sqrt{n}-1}
\left(
1-\frac{\delta |S_x|/2}{|S_x|-j}
\right)
\geq
1-(1-\delta/2)^{\sqrt{n}}
\geq
\begin{cases}
\Theta(\delta \sqrt{n})
&\text{ if }1/n \leq \delta \leq 1/\sqrt{n} \\
\Theta(1)
&\text{ if }1/\sqrt{n} < \delta \leq 1/2
\end{cases}
\end{equation*}
Combining this with Equation (\ref{eq 2-(10)}), we get:
\begin{equation}
\label{eq 2-(11)}
Pr_{f \in_{R} H}
[N_i]
\geq
\frac{p_{\text{flip}}}{\Theta(2^{\sqrt{n}})}
\geq
\begin{cases}
\Theta(\delta \sqrt{n} /2^{\sqrt{n}})
&\text{ if }1/n \leq \delta \leq 1/\sqrt{n} \\
\Theta(1/2^{\sqrt{n}})
&\text{ if }1/\sqrt{n} < \delta \leq 1/2
\end{cases}
\end{equation}
We now combine (i) Equation (\ref{eq 2-(8)}) (ii) The fact that if $N_i$ happens then $f(x) \neq f(y)$ (iii) The fact that the different $N_i$ are disjoint and the probability of $N_i$ is the same for all $i$ by symmetry (iv) Equation (\ref{eq 2-(11)}):
\begin{equation*}
E_{f \in_{R} H}[NS_{\delta}[f]]
\geq
0.01 \cdot
Pr_{f \in_{R} H}
\left[
\bigvee_i N_i
\right]
=
0.01 \cdot
\frac{2^{\sqrt{n}}}{n^{C_2}}
Pr_{f \in_{R} H, i \in_{R} [n]}
\left[
N_1
\right]
\geq
\begin{cases}
\Theta(\delta \sqrt{n} /n^{C_2})
&\text{ if }1/n \leq \delta \leq 1/\sqrt{n} \\
\Theta(1/n^{C_2})
&\text{ if }1/\sqrt{n} < \delta \leq 1/2
\end{cases}
\end{equation*}

\section{Acknowledgments}
We are grateful to the anonymous referees for helpful comments and suggestions.

\bibliographystyle{plain}
\bibliography{mybib}

\begin{thebibliography}{10}

\bibitem{num19}
Maria-Florina Balcan, Eric Blais, Avrim Blum, and Liu Yang.
\newblock Active property testing.
\newblock In {\em Foundations of Computer Science (FOCS), 2012 IEEE 53rd Annual
  Symposium on}, pages 21--30. IEEE, 2012.

\bibitem{num18}
Aleksandrs Belovs and Eric Blais.
\newblock A polynomial lower bound for testing monotonicity.
\newblock In {\em Proceedings of the forty-eighth annual ACM symposium on
  Theory of Computing}, pages 1021--1032. ACM, 2016.

\bibitem{num7}
Itai Benjamini, Gil Kalai, and Oded Schramm.
\newblock Noise sensitivity of boolean functions and applications to
  percolation.
\newblock {\em Publications Math{\'e}matiques de l'Institut des Hautes
  {\'E}tudes Scientifiques}, 90(1):5--43, 1999.

\bibitem{num17}
Eric Blais, Ryan O’Donnell, and Karl Wimmer.
\newblock Polynomial regression under arbitrary product distributions.
\newblock {\em Machine learning}, 80(2-3):273--294, 2010.

\bibitem{num22}
Deeparnab Chakrabarty and Comandur Seshadhri.
\newblock An o(n) monotonicity tester for boolean functions over the hypercube.
\newblock {\em SIAM Journal on Computing}, 45(2):461--472, 2016.

\bibitem{num23}
Xi~Chen, Rocco~A Servedio, and Li-Yang Tan.
\newblock New algorithms and lower bounds for monotonicity testing.
\newblock In {\em Foundations of Computer Science (FOCS), 2014 IEEE 55th Annual
  Symposium on}, pages 286--295. IEEE, 2014.

\bibitem{num20}
Mahdi Cheraghchi, Adam Klivans, Pravesh Kothari, and Homin~K Lee.
\newblock Submodular functions are noise stable.
\newblock In {\em Proceedings of the twenty-third annual ACM-SIAM symposium on
  Discrete Algorithms}, pages 1586--1592. Society for Industrial and Applied
  Mathematics, 2012.

\bibitem{num4}
Ilias Diakonikolas, Prasad Raghavendra, Rocco~A. Servedio, and Li{-}Yang Tan.
\newblock Average sensitivity and noise sensitivity of polynomial threshold
  functions.
\newblock {\em {SIAM} J. Comput.}, 43(1):231--253, 2014.

\bibitem{num2}
Adam~Tauman Kalai, Adam~R Klivans, Yishay Mansour, and Rocco~A Servedio.
\newblock Agnostically learning halfspaces.
\newblock {\em SIAM Journal on Computing}, 37(6):1777--1805, 2008.

\bibitem{num8}
Gil Kalai et~al.
\newblock Noise sensitivity and chaos in social choice theory.
\newblock Technical report, 2005.

\bibitem{num3}
Daniel~M. Kane.
\newblock The gaussian surface area and noise sensitivity of degree-\emph{d}
  polynomial threshold functions.
\newblock {\em Computational Complexity}, 20(2):389--412, 2011.

\bibitem{num5}
Daniel~M. Kane.
\newblock The average sensitivity of an intersection of half spaces.
\newblock In {\em Symposium on Theory of Computing, {STOC} 2014, New York, NY,
  USA, May 31 - June 03, 2014}, pages 437--440, 2014.

\bibitem{num16}
Nathan Keller and Guy Kindler.
\newblock Quantitative relation between noise sensitivity and influences.
\newblock {\em Combinatorica}, 33(1):45--71, 2013.

\bibitem{num10}
Subhash Khot, Guy Kindler, Elchanan Mossel, and Ryan O’Donnell.
\newblock Optimal inapproximability results for {MAX-CUT} and other 2-variable
  {CSP}s?
\newblock {\em SIAM Journal on Computing}, 37(1):319--357, 2007.

\bibitem{num24}
Subhash Khot, Dor Minzer, and Muli Safra.
\newblock On monotonicity testing and boolean isoperimetric type theorems.
\newblock In {\em Foundations of Computer Science (FOCS), 2015 IEEE 56th Annual
  Symposium on}, pages 52--58. IEEE, 2015.

\bibitem{num11}
Adam~R Klivans, Ryan O'Donnell, and Rocco~A Servedio.
\newblock Learning intersections and thresholds of halfspaces.
\newblock {\em Journal of Computer and System Sciences}, 68(4):808--840, 2004.

\bibitem{num13}
Rajsekar Manokaran, Joseph~Seffi Naor, Prasad Raghavendra, and Roy Schwartz.
\newblock {SDP} gaps and {UGC} hardness for multiway cut, 0-extension, and
  metric labeling.
\newblock In {\em Proceedings of the fortieth annual ACM symposium on Theory of
  computing}, pages 11--20. ACM, 2008.

\bibitem{num9}
Elchanan Mossel and Ryan O'Donnell.
\newblock On the noise sensitivity of monotone functions.
\newblock {\em Random Structures \& Algorithms}, 23(3):333--350, 2003.

\bibitem{num14}
Elchanan Mossel and Ryan O'Donnell.
\newblock Coin flipping from a cosmic source: On error correction of truly
  random bits.
\newblock {\em Random Structures \& Algorithms}, 26(4):418--436, 2005.

\bibitem{num12}
Ryan O'Donnell.
\newblock Hardness amplification within {NP}.
\newblock In {\em Proceedings of the thiry-fourth annual ACM symposium on
  Theory of computing}, pages 751--760. ACM, 2002.

\bibitem{num1}
Ryan O'Donnell.
\newblock {\em Computational applications of noise sensitivity}.
\newblock PhD thesis, Massachusetts Institute of Technology, 2003.

\bibitem{num15}
Ryan O'Donnell.
\newblock {\em Analysis of boolean functions}.
\newblock Cambridge University Press, 2014.

\bibitem{num6}
Yuval Peres.
\newblock Noise stability of weighted majority.
\newblock {\em arXiv preprint math/0412377}, 2004.

\bibitem{num26}
Dana Ron, Ronitt Rubinfeld, Muli Safra, Alex Samorodnitsky, and Omri Weinstein.
\newblock Approximating the influence of monotone boolean functions in
  {$O(\sqrt{n})$} query complexity.
\newblock {\em {TOCT}}, 4(4):11:1--11:12, 2012.

\bibitem{num21}
Dana Ron, Ronitt Rubinfeld, Muli Safra, and Omri Weinstein.
\newblock Approximating the influence of monotone boolean functions in
  {$O(\sqrt{n})$} query complexity.
\newblock In {\em Approximation, Randomization, and Combinatorial Optimization.
  Algorithms and Techniques}, pages 664--675. Springer, 2011.

\bibitem{num25}
Michel Talagrand.
\newblock How much are increasing sets positively correlated?
\newblock {\em Combinatorica}, 16(2):243--258, 1996.

\end{thebibliography}

\section{Appendix A}
In this section we use Fourier analysis of boolean functions. We will use the notation of \cite{num15}.

The following lemma is very similar to a statement from \cite{num9}:
\begin{lemma}
For any function $f$\blue{$:\{0,1\}^n\rightarrow\{0,1\}$} and a parameter $\delta\leq 1/2$ it is the case that:

$$
NS_{\delta}[f] \leq
\delta I[f]
$$
\end{lemma}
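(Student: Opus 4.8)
The plan is to pass to the Fourier side, after encoding the range of $f$ as $\{-1,1\}$ in the usual way (replacing $f$ by $1-2f$, which changes neither $NS_\delta$ nor $I$, since both depend only on the induced partition of the cube). Writing $f = \sum_{S \subseteq [n]} \hat f(S)\,\chi_S$ in the conventions of \cite{num15}, I would recall the standard identities
\[
I[f] = \sum_{S} |S|\, \hat f(S)^2, \qquad \mathrm{Stab}_\rho[f] = \sum_{S} \rho^{|S|}\, \hat f(S)^2,
\]
together with Parseval's identity $\sum_S \hat f(S)^2 = 1$. Substituting $\rho = 1-2\delta$ into the relation $NS_\delta[f] = \tfrac12\bigl(1 - \mathrm{Stab}_{1-2\delta}[f]\bigr)$ quoted in the introduction then gives the closed form
\[
NS_\delta[f] = \frac12 \sum_{S \subseteq [n]} \bigl(1 - (1-2\delta)^{|S|}\bigr)\,\hat f(S)^2 .
\]

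Next I would reduce the claimed bound to an elementary one-variable inequality. Since every Fourier weight $\hat f(S)^2$ is nonnegative, it suffices to prove that $1 - (1-2\delta)^k \le 2\delta k$ for every integer $k \ge 0$; applying this with $k = |S|$ and summing against $\hat f(S)^2$ over all $S$, the right-hand side becomes exactly $2\delta \cdot \tfrac12 I[f]$, which yields $NS_\delta[f] \le \delta I[f]$. The one-variable inequality is just Bernoulli's inequality: because $\delta \le 1/2$ we have $1-2\delta \ge 0$, i.e.\ $-2\delta \ge -1$, and hence $(1-2\delta)^k = (1+(-2\delta))^k \ge 1 - 2\delta k$ for every nonnegative integer $k$, which rearranges to the desired bound (with the case $k=0$ being the trivial $0 \le 0$).

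I expect no genuine obstacle here; the computation is routine once the Fourier identities are in hand. The only point worth flagging is that the hypothesis $\delta \le 1/2$ is used in an essential way: it is precisely what makes the base $1-2\delta$ nonnegative and so legitimizes Bernoulli's inequality, and without it the term-by-term bound fails for large $|S|$. (One could instead give a purely combinatorial proof by a union bound over the independent $\delta$-flips in the noise process, but the Fourier route is cleaner given the identities already recorded in the paper.)
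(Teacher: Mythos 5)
Your proposal is correct and matches the paper's own proof in Appendix A essentially verbatim: both pass to the Fourier expressions $I[f]=\sum_S |S|\hat f(S)^2$ and $NS_\delta[f]=\tfrac12\sum_S(1-(1-2\delta)^{|S|})\hat f(S)^2$ and conclude via Bernoulli's inequality $(1-2\delta)^{|S|}\ge 1-2\delta|S|$. Your remark that $\delta\le 1/2$ is what legitimizes Bernoulli's inequality is a correct (and slightly more careful) observation than the paper makes explicit.
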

\begin{proof}
We will use the Fourier expressions for both of the above (see \cite{num15}):
\begin{align*}
I[f]=\sum_{S} |S| \hat{f}^2(S)
&&
NS_{\delta}[f]=\frac{1}{2} \sum_{S} (1-(1-2 \delta)^{|S|}) \hat{f}^2(S)
\end{align*}
We can now use Bernoulli's inequality $(1-2\delta)^{|S|} \geq 1-2\delta |S|$. Therefore:

$$
NS_{\delta}[f]
\leq
\frac{1}{2} \sum_{S} 2\delta |S| \hat{f}^2(S)
=
\delta
\sum_{S} |S| \hat{f}^2(S)
=
\delta I[f]
$$
This completes the proof.
\end{proof}
\begin{lemma}
\label{NS is increasing}
For a fixed function $f$\blue{$:\{0,1\}^n\rightarrow\{0,1\}$} and for values of $\delta$ satisfying $0 < \delta \leq 1/2$, $NS_{\delta}[f]$ is an increasing function of $\delta$. 
\end{lemma}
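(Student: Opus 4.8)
The plan is to work with the Fourier expansion of noise sensitivity, exactly as in the preceding lemma of this appendix. Recall (see \cite{num15}) that
$$
NS_{\delta}[f]=\frac{1}{2}\sum_{S\subseteq[n]}\bigl(1-(1-2\delta)^{|S|}\bigr)\hat{f}(S)^2 .
$$
Each coefficient $\hat{f}(S)^2$ is a nonnegative constant independent of $\delta$, so it suffices to show that for every fixed $S$ the scalar function $\delta\mapsto 1-(1-2\delta)^{|S|}$ is nondecreasing on $(0,1/2]$. For $S=\emptyset$ this function is identically $0$, and for $|S|=k\ge 1$ its derivative equals $2k(1-2\delta)^{k-1}$, which is nonnegative on $[0,1/2]$ (and strictly positive on $[0,1/2)$). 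Summing the nonnegative contributions term by term then shows $NS_{\delta}[f]$ is nondecreasing in $\delta$ on $(0,1/2]$, which is the claim.

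To obtain strict monotonicity whenever $f$ is not constant: in that case some $\hat{f}(S)\neq 0$ with $S\neq\emptyset$, and for that $S$ the term $1-(1-2\delta)^{|S|}$ is strictly increasing on $[0,1/2)$ by the derivative computation above, so $NS_{\delta}[f]$ is strictly increasing there. If $f$ is constant then $NS_{\delta}[f]\equiv 0$ and the statement holds in the weak (nondecreasing) sense. One may phrase the final lemma either way depending on the convention for ``increasing''.

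An alternative, Fourier-free route would be a coupling/semigroup argument: for $0<\delta_1\le\delta_2\le 1/2$ one writes the $\delta_2$-noisy copy of $x$ as the result of first applying $\delta_1$-noise to $x$ to get $w$ and then an independent extra $\delta'$-noise to $w$, where $\delta'\in[0,1/2]$ satisfies $(1-2\delta_1)(1-2\delta')=1-2\delta_2$ (solvable precisely because $0\le 1-2\delta_2\le 1-2\delta_1\le 1$), and then argues that the additional round of noise cannot decrease the disagreement probability with $f(x)$. That last averaging step is essentially equivalent to the Fourier computation, so the Fourier proof is the cleaner one and is what I would use.

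The only genuine content is the sign of the derivative $\frac{d}{d\delta}\bigl(1-(1-2\delta)^{k}\bigr)=2k(1-2\delta)^{k-1}\ge 0$ on $[0,1/2]$, which is immediate; there is no real obstacle, the key move being simply to reduce the monotonicity of $NS_{\delta}[f]$ to a term-by-term comparison via nonnegativity of the Fourier weights $\hat{f}(S)^2$.
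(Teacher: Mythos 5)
Your proof is correct and follows exactly the route the paper takes: the paper's own proof is a one-line appeal to the Fourier formula $NS_{\delta}[f]=\frac{1}{2}\sum_{S}(1-(1-2\delta)^{|S|})\hat{f}^2(S)$, and your term-by-term derivative computation just fills in the (immediate) details. No issues.
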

\begin{proof}
This follows immediately from the Fourier formula for noise sensitivity.
\end{proof}

\section{Appendix B}
Recall that standard sampling approach requires $O(\frac{1}{NS_{\delta}[f] \epsilon^2})$ queries to estimate noise sensitivity. Here we will show that for sufficiently small constant $\epsilon$, the standard sampling algorithm is optimal up to a constant for all values of $NS_{\delta}[f] \geq 1/2^n$. 

For any $\alpha$ we define $H^{\alpha}$ to be the uniform distribution over all functions $f : \{0,1\}^n \rightarrow \{0,1\}$ for which it is the case that $Pr_{x \in_R \{0,1\}^n}[f(x)=1]=\alpha$.

\begin{lemma}
\label{appendix 2 Lemma 1}
For any sufficiently large $n$, any $\alpha$, satisfying $10^6/2^{n} \leq \alpha \leq 1/2$ and any $\delta$, satisfying $1/n \leq \delta \leq 1/2$, it is the case that:
$$
Pr_{f \in_R H^{\alpha}}
[0.1 \alpha \leq NS_{\delta}[f] \leq 3 \alpha]
\geq 
0.99
$$
\end{lemma}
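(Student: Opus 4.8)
The plan is to compute the first two moments of $NS_\delta[f]$ for $f\in_R H^\alpha$ and then apply Chebyshev's inequality. Write $N=2^n$ and $m=\alpha N$ (an integer, else $H^\alpha$ is empty), and identify $f\in_R H^\alpha$ with its support $S=f^{-1}(1)$, a uniformly random $m$-element subset of the $N$ hypercube vertices. For $x\ne z$ put $p_\delta(x,z)=\delta^{d(x,z)}(1-\delta)^{n-d(x,z)}$ (the probability the noise process started at $x$ lands at $z$), $Y_{xz}=\mathbf{1}[f(x)\ne f(z)]$, and $w_{xz}=\tfrac2N p_\delta(x,z)$. Grouping the ordered pairs in the definition of $NS_\delta$ into unordered ones (the diagonal contributes $0$),
$$NS_\delta[f]=\sum_{\{x,z\}:\,x\ne z} w_{xz}\,Y_{xz}.$$
I will use three facts about the weights: $\sum_{\{x,z\}:x\ne z}w_{xz}=1-(1-\delta)^n=:\beta$, with $1-e^{-1}\le\beta\le1$ since $\delta n\ge1$; $\max_{x\ne z}w_{xz}\le\tfrac{2\delta}{N}$, because $\delta^d(1-\delta)^{n-d}$ is maximized over $d\ge1$ at $d=1$ when $\delta\le\tfrac12$; and $\sum_{z\ne c}w_{cz}=\tfrac{2\beta}{N}\le\tfrac2N$ for every fixed $c$.

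Next, the first moment. For $x\ne z$ we have $\E_f[Y_{xz}]=2\cdot\tfrac mN\cdot\tfrac{N-m}{N-1}=2\alpha(1-\alpha)\tfrac N{N-1}=:\mu_2$, so $\E_f[NS_\delta[f]]=\beta\mu_2$. Since $\alpha\le\tfrac12$ and $n$ is large, $\mu_2\in[\alpha,\,2.01\alpha]$, hence $\E_f[NS_\delta[f]]\in[(1-e^{-1})\alpha,\,2.01\alpha]\subseteq[0.63\alpha,\,2.01\alpha]$, which sits comfortably inside the target window $[0.1\alpha,3\alpha]$ with slack $0.5\alpha$ on each side.

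The variance I would bound by expanding $\Var_f[NS_\delta[f]]=\sum_{P,P'}w_Pw_{P'}\,\mathrm{Cov}(Y_P,Y_{P'})$ over ordered pairs of (unordered) pairs and splitting on $|P\cap P'|\in\{0,1,2\}$. (i) Same pair: $\mathrm{Cov}(Y_P,Y_P)=\mu_2(1-\mu_2)\le\mu_2\le3\alpha$, contributing at most $3\alpha\sum_{\{x,z\}}w_{xz}^2\le3\alpha\,(\max w)\,\beta\le\tfrac{6\alpha\delta}{N}\le\tfrac{6\alpha}{N}$. (ii) Overlap one (pairs $\{c,x\},\{c,x'\}$): $\mathrm{Cov}(Y_{cx},Y_{cx'})\le\E[Y_{cx}Y_{cx'}]\le\Pr[f(c)=1]+\Pr[f(x)=f(x')=1]\le\alpha+\alpha^2\le2\alpha$, contributing at most $2\alpha\sum_c(\sum_{z\ne c}w_{cz})^2\le2\alpha\cdot N\cdot\tfrac4{N^2}=\tfrac{8\alpha}{N}$. (iii) Disjoint: conditioned on $Y_{xz}=1$ the remaining $m-1$ elements of $S$ are uniform among the other $N-2$ vertices, so $\Pr[Y_{x'z'}=1\mid Y_{xz}=1]=2\tfrac{(m-1)(N-m-1)}{(N-2)(N-3)}$; a short calculation shows this differs from $\mu_2$ by $O(1/N)$ with an absolute constant (no dependence on $\alpha$; it is in fact $\le0$ for $\alpha\le\tfrac12$), hence $|\mathrm{Cov}(Y_{xz},Y_{x'z'})|=\mu_2\cdot O(1/N)=O(\alpha/N)$, contributing at most $O(\alpha/N)\cdot(\sum_P w_P)^2=O(\alpha\beta^2/N)=O(\alpha/N)$. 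Summing, $\Var_f[NS_\delta[f]]=O(\alpha/N)$, and since $\alpha\ge10^6/N$ gives $\tfrac1N\le\tfrac{\alpha}{10^6}$, we obtain $\Var_f[NS_\delta[f]]\le C'\alpha^2/10^6$ for an absolute constant $C'$ (only a couple dozen).

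Finally, Chebyshev's inequality gives $\Pr_f\!\big[\,|NS_\delta[f]-\E_f[NS_\delta[f]]|\ge0.5\alpha\,\big]\le\tfrac{C'\alpha^2/10^6}{0.25\alpha^2}=\tfrac{4C'}{10^6}<0.01$; on the complementary event $NS_\delta[f]\in(\E_f[NS_\delta[f]]-0.5\alpha,\ \E_f[NS_\delta[f]]+0.5\alpha)\subseteq(0.13\alpha,\,2.51\alpha)\subseteq[0.1\alpha,3\alpha]$, which proves the lemma (if the small constant $C'$ turned out inconvenient one could simply enlarge the constant $10^6$ in the hypothesis). The one genuinely delicate step is case (iii): one must verify that the correlation between the bichromaticity of two vertex-disjoint pairs, induced purely by sampling $S$ without replacement, is only $O(1/N)$ relative to $\mu_2$ — this is precisely what forces the whole variance down to $O(\alpha/N)\ll\alpha^2$. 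As an alternative to the moment computation, one could instead observe that swapping a single element in and out of $S$ changes $NS_\delta[f]$ by at most $\tfrac4N$ and invoke a bounded-differences (McDiarmid-type) inequality for sampling without replacement, yielding deviations of order $\sqrt{m}\cdot\tfrac1N=\sqrt{\alpha/N}\ll\alpha$ under the same hypothesis $\alpha\ge10^6/N$.
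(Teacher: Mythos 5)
Your proposal is correct and takes essentially the same route as the paper: both compute $\E_{f\in_R H^{\alpha}}[NS_{\delta}[f]]$ (placing it in a window well inside $[0.1\alpha,3\alpha]$), bound the variance by $O(\alpha/2^n)$ by showing the only non-negligible correlations come from collisions among the sampled vertices, and conclude via Chebyshev using $\alpha\geq 10^6/2^n$. The only difference is organizational — the paper conditions two independent noise pairs on whether all four points are distinct, whereas you decompose the covariance sum by the overlap size of the two vertex pairs — and your case analysis (including the $O(1/N)$ control of the disjoint-pair correlation from sampling without replacement) checks out.
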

\begin{proof}
$x=y$ implies that $f(x)=f(y)$, hence:
\begin{equation}
\label{appendix 2 eq 1}
E_{f \in_R H^{\alpha}}[NS_{\delta}[f]]
=
Pr_{f \in_R H^{\alpha}; (x,y) \in_R T_{\delta}}[f(x) \neq f(y)]
=
Pr_{(x,y) \in_R T_{\delta}}[x \neq y]
\cdot
Pr_{f \in_R H^{\alpha}; (x,y) \in_R T_{\delta}}
[f(x) \neq f(y) | x \neq y]
\end{equation}
\label{appendix 2 eq 2}
Since $\delta \geq 1/n$, we have that:
\begin{equation}
Pr_{(x,y) \in_R T_{\delta}}[x \neq y]
=
1-(1-\delta)^n
\geq
1-(1-1/n)^n
\geq
1-e^{-1}
\geq
0.25
\end{equation}
Additionally, we have:

\begin{equation}
\label{appendix 2 eq 3}
Pr_{f \in_R H^{\alpha}; (x,y) \in_R T_{\delta}}
[f(x) \neq f(y) | x \neq y]
=
2 \alpha \cdot \frac{2^n \cdot (1-\alpha)}{2^n-1}
\end{equation}
Combing equations (\ref{appendix 2 eq 1}), (\ref{appendix 2 eq 2}) and (\ref{appendix 2 eq 3}), we get:

\begin{equation}
\label{appendix 2 eq 4}
E_{f \in_R H^{\alpha}}
[NS_{\delta}[f]]
=
Pr_{(x,y) \in_R T_{\delta}}[x \neq y] \cdot
2 \alpha \cdot \frac{2^n \cdot (1-\alpha)}{2^n-1}
\geq
0.20 \alpha
\end{equation}
At the same time, for sufficiently large $n$ we have that:
\begin{equation}
\label{appendix 2 eq 4.5}
E_{f \in_R H^{\alpha}}
[NS_{\delta}[f]]
=
Pr_{(x,y) \in_R T_{\delta}}[x \neq y] \cdot
2 \alpha \cdot \frac{2^n \cdot (1-\alpha)}{2^n-1}
\leq
2.5 \alpha
\end{equation}
Now, we will bound the variance. We have:
\begin{multline}
\label{appendix 2 eq 6}
E_{f \in_R H^{\alpha}}[(NS_{\delta}[f])^2]
=
Pr_{f \in_R H^{\alpha}; (x^1,y^1) \in_R T_{\delta}; (x^2,y^2) \in_R T_{\delta}}[(f(x^1) \neq f(y^1)) \land (f(x^2) \neq f(y^2))]
\\=
(Pr_{(x,y) \in_R T_{\delta}}[x \neq y])^2
\cdot
Pr_{f \in_R H^{\alpha}; (x^1,y^1) \in_R T_{\delta}; (x^2,y^2) \in_R T_{\delta}}[(f(x^1) \neq f(y^1)) \land (f(x^2) \neq f(y^2))\bigg \vert(x^1 \neq y^1) \land (x^2 \neq y^2)]
\end{multline}
We have the following three facts:
\begin{enumerate}
\item For arbitrary $x^1$, $y^1$, $x^2$ and $y^2$, we have 
$$Pr_{f \in_R H^{\alpha}}[(f(x^1) \neq f(y^1)) \land (f(x^2) \neq f(y^2))] \leq Pr_{f \in_R H^{\alpha}}[(f(x^1) \neq f(y^1)] \leq 2 \alpha \cdot 2^n (1-\alpha)/(2^n-1)$$
\item Suppose we further given that no two of $x^1, x^2, y^1$ and $y^2$ are equal to each other. We call this event $K$. If $K$ is the case then $$Pr_{f \in_R H^{\alpha}}[(f(x^1) \neq f(y^1)) \land (f(x^2) \neq f(y^2))]=4
\alpha \cdot \frac{2^n(1-\alpha)}{2^n-1} \cdot \frac{2^n\alpha-1}{2^n-2} \cdot \frac{2^n(1-\alpha)-1}{2^n-3}
$$
\item If $(x^1, y^1)$ and $(x^2, y^2)$ are picked independently from $T_\delta$ conditioned on $x^1 \neq y^1$ and $x^2 \neq y^2$, then $K$ happens unless $x^1=x^2$ or $x^1=y^2$ or $y^1=x^2$ or $y^1=y^2$. By independence and a union bound, the probability of any of these happening is at most $4/2^n$. 
\end{enumerate}
Therefore:
\begin{multline}
\label{appendix 2 eq 7}
Pr_{f \in_R H^{\alpha}; (x^1,y^1) \in_R T_{\delta}; (x^2,y^2) \in_R T_{\delta}}[(f(x^1) \neq f(y^1)) \land (f(x^2) \neq f(y^2)) \bigg \vert(x^1 \neq y^1) \land (x^2 \neq y^2)]
\\ \leq
Pr_{f \in_R H^{\alpha}; (x^1,y^1) \in_R T_{\delta}; (x^2,y^2) \in_R T_{\delta}}[(f(x^1) \neq f(y^1)) \land (f(x^2) \neq f(y^2)) \bigg \vert (x^1 \neq x^2) \land  (x^1 \neq y^1) \land (y^1 \neq x^2) \land (y^1 \neq y^2)
\\ \land (x^1 \neq y^1) \land (x^2 \neq y^2)]
+
Pr_{f \in_R H^{\alpha}; (x^1,y^1) \in_R T_{\delta}; (x^2,y^2) \in_R T_{\delta}}[(f(x^1) \neq f(y^1)) \land (f(x^2) \neq f(y^2)) \bigg \vert (x^1 = x^2) \lor  (x^1 = y^1) \\ \lor (y^1 = x^2) \lor (y^1 = y^2)] \cdot Pr_{(x^1,y^1) \in_R T_{\delta}; (x^2,y^2) \in_R T_{\delta}}[(x^1 = x^2) \lor  (x^1 = y^1) \lor (y^1 = x^2) \lor (y^1 = y^2)]
\\ \leq
4
\alpha \cdot \frac{2^n(1-\alpha)}{2^n-1} \cdot \frac{2^n\alpha-1}{2^n-2} \cdot \frac{2^n(1-\alpha)-1}{2^n-3}
+2 \alpha \cdot 2^n \frac{1-\alpha}{2^n-1}\frac{4}{2^n}
\end{multline}
 Combining this with (\ref{appendix 2 eq 4}) and (\ref{appendix 2 eq 6}) we get a bound for the variance:
\begin{multline}
\label{appendix 2 eq 8}
Var_{f \in_R H^{\alpha}}[NS_{\delta}[f]]
\\ =
(Pr_{(x,y) \in_R T_{\delta}}[x \neq y])^2
\cdot
\left(
4
\alpha \cdot \frac{2^n(1-\alpha)}{2^n-1} \cdot \frac{2^n\alpha-1}{2^n-2} \cdot \frac{2^n(1-\alpha)-1}{2^n-3}
+2 \alpha \frac{2^n \cdot(1-\alpha)}{2^n-1}\frac{4}{2^n}
-
\left(
2 \alpha \cdot \frac{2^n \cdot (1-\alpha)}{2^n-1}
\right)^2
\right)
\\ \leq
\frac{
4 \alpha^2 (1- \alpha)^2
}{(1-1/2^n) \cdot (1-2/2^n) \cdot (1-3/2^n)}
+
\frac{
\alpha(1-\alpha) 
}
{(1-1/2^n)}
\cdot
\frac{8}{2^n}
-
4 \alpha^2 (1-\alpha)^2 
\leq 
\frac{48}{2^n} \alpha^2(1-\alpha)^2
+\frac{16}{2^n} \alpha (1-\alpha)
\leq
\frac{100 \alpha}{2^n}
\end{multline}
(\ref{appendix 2 eq 8}) implies that $NS_{\delta}[f]$ has standard deviation of at most $10\sqrt{\alpha/2^n}$. Since $\alpha \geq 10^6/2^n$, this standard deviation of $NS_{\delta}[f]$ is at most $\alpha/100$. By Chebyshev's inequality, $NS_{\delta}[f]$ is within $\alpha/10$ of its expectation with probability $0.99$. Together with  equations (\ref{appendix 2 eq 4}) and  (\ref{appendix 2 eq 4.5}) this implies the statement of the lemma.
\end{proof}
\begin{theorem}
For any sufficiently large $n$, any $\delta$ satisfying $1/n \leq \delta \leq 1/2$ and any $\alpha_0$ satisfying $10^5/2^{n} \leq \alpha_0 \leq 1/1200$, let $\mathcal{G}$ be an algorithm that given access to a function $f$\blue{$:\{0,1\}^n\rightarrow\{0,1\}$} with probability at least $0.99$ outputs NO if it is given access to a function $f$\blue{$:\{0,1\}^n\rightarrow\{0,1\}$}, satisfying
$$
\alpha_0
\leq
NS_{\delta}[f]
\leq
30 \alpha_0
$$
and with probability at least $0.99$ outputs YES, given a function satisfying:
$$
60
\alpha_0
\leq
NS_{\delta}[f]
\leq
1800 \alpha_0
$$
Then, there is a function $f_0$ given which $\mathcal{G}$ makes $\Omega \left(\frac{1}{\alpha_0}\right)=\Omega \left(\frac{1}{NS_{\delta}[f]}\right)$ queries.
\end{theorem}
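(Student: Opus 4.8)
The plan is to reduce the query lower bound to an indistinguishability statement between two ensembles of functions with well-separated noise sensitivities, and then bound the distinguishing power of any low-query algorithm by a union bound on its decision tree, exactly in the spirit of the proof of Lemma~\ref{2-1.6}. Concretely, I would set $\alpha_1 \myeq 10\alpha_0$ and $\alpha_2 \myeq 600\alpha_0$ and consider the ensembles $H^{\alpha_1}$ and $H^{\alpha_2}$. The hypotheses $10^5/2^n \le \alpha_0 \le 1/1200$ are precisely what makes both $\alpha_1$ and $\alpha_2$ lie in the range $[10^6/2^n,\,1/2]$ required by Lemma~\ref{appendix 2 Lemma 1}; applying that lemma, at least a $0.99$-fraction of the functions drawn from $H^{\alpha_1}$ have $NS_\delta$ in $[\alpha_0, 30\alpha_0]$ (the ``NO'' window, since $0.1\alpha_1 = \alpha_0$ and $3\alpha_1 = 30\alpha_0$), and at least a $0.99$-fraction of those drawn from $H^{\alpha_2}$ have $NS_\delta$ in $[60\alpha_0, 1800\alpha_0]$ (the ``YES'' window, since $0.1\alpha_2 = 60\alpha_0$ and $3\alpha_2 = 1800\alpha_0$).

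Combining this with the promise on $\mathcal{G}$: since $\mathcal{G}$ outputs NO with probability at least $0.99$ on every function in the NO window and YES with probability at least $0.99$ on every function in the YES window, and the event that a random $f$ lands in the appropriate window is independent of $\mathcal{G}$'s coins, we get $\Pr_{f \in_R H^{\alpha_1}}[\mathcal{G}(f)=\text{NO}] \ge 0.99\cdot 0.99 > 0.98$ and likewise $\Pr_{f \in_R H^{\alpha_2}}[\mathcal{G}(f)=\text{YES}] > 0.98$. Hence $\mathcal{G}$ distinguishes $H^{\alpha_1}$ from $H^{\alpha_2}$ with distinguishing power (the average of these two probabilities) at least $0.98$.

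Next I would upper bound the distinguishing power of any algorithm making at most $q$ queries. As in the proof of Lemma~\ref{2-1.6}, one may average over coins to assume $\mathcal{G}$ is deterministic and view it as a depth-$q$ decision tree; let $A$ be the unique root-to-leaf path followed when every query is answered $0$, and let $\ell$ be the label of its leaf. The key observation is that $H^\alpha$ is invariant under permutations of the domain, so every fixed point is answered $1$ with probability exactly $\alpha$; by a union bound over the (at most $q$, and without loss of generality distinct) points along $A$, the probability that $f \in_R H^\alpha$ ever drives $\mathcal{G}$ off $A$ is at most $q\alpha$. If $\ell = \text{NO}$, then $\Pr_{f \in_R H^{\alpha_2}}[\mathcal{G}(f)=\text{YES}] \le q\alpha_2$; if $\ell = \text{YES}$, then $\Pr_{f \in_R H^{\alpha_1}}[\mathcal{G}(f)=\text{NO}] \le q\alpha_1 \le q\alpha_2$; either way the distinguishing power is at most $\tfrac12(1 + q\alpha_2)$. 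Comparing with the lower bound $0.98$ from the previous paragraph forces $q\alpha_2 \ge 0.96$, i.e.\ $q \ge 0.96/\alpha_2 = \Omega(1/\alpha_0)$.

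Finally, to phrase this as a statement about a specific $f_0$, I would argue by contradiction: if $\mathcal{G}$ made fewer than $q_0 \myeq \lfloor 0.96/\alpha_2\rfloor$ queries on every $f$ whose noise sensitivity lies in the NO or YES window (over all coin outcomes), then halting $\mathcal{G}$ after $q_0$ steps changes nothing on those inputs and yields a genuine $q_0$-query algorithm of distinguishing power $> 0.98$, contradicting the bound $\tfrac12(1+q_0\alpha_2) < 0.98$. Thus some $f_0$ with $NS_\delta[f_0] = \Theta(\alpha_0)$ forces $\mathcal{G}$ to make $\Omega(1/\alpha_0) = \Omega(1/NS_\delta[f_0])$ queries. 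The only mildly delicate points are the reduction to a deterministic, fixed-length algorithm (averaging over coins, plus this truncation trick) and the fact that the symmetry of $H^\alpha$ makes the per-query ``hit'' probability exactly $\alpha$ in spite of the sampling-without-replacement correlations; both are routine, and everything else is a short calculation with the constants $10, 600, 30, 60, 1800$ chosen in the statement.
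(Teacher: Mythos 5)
Your proposal is correct and follows essentially the same route as the paper: reduce to distinguishing $H^{10\alpha_0}$ from $H^{600\alpha_0}$, invoke Lemma \ref{appendix 2 Lemma 1} to place $0.99$-fractions of each ensemble in the NO and YES windows respectively, and conclude via a union bound over the algorithm's errors. The only difference is that you also spell out (via the decision-tree/all-zeros-path argument) why $\Omega(1/\alpha_0)$ queries are needed to distinguish the two ensembles, a step the paper asserts without proof; your version of that step is sound, including the handling of the sampling-without-replacement correlations and the truncation to a fixed query count.
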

\begin{proof}
Consider distributions $H^{10\alpha_0}$ and $H^{600\alpha_0}$. One needs $\Omega \left(\frac{1}{\alpha_0}\right)=\Omega \left(\frac{1}{NS_{\delta}[f]}\right)$ queries to distinguish between them with any constant probability. Both values $10 \alpha_0$ and $600 \alpha_0$ are within the scope of Lemma \ref{appendix 2 Lemma 1}. Therefore, from Lemma \ref{appendix 2 Lemma 1} it is the case that:
$$
Pr_{f \in_R H^{\alpha_0/B_1}}
\left[
\alpha_0
\leq
NS_{\delta}[f]
\leq
30 \alpha_0
\right]
\geq 
0.99
$$
$$
Pr_{f \in_R H^{2B_2\alpha_0/B_1^2}}
\left[
60
\alpha_0
\leq
NS_{\delta}[f]
\leq
1800 \alpha_0
\right]
\geq 
0.99
$$
Since $\mathcal{G}$ is correct with probability at least $0.99$, by a union bound it will distinguish between a random function from $H^{\alpha_0/B_1}$ and $H^{2B_2\alpha_0/B_1^2}$ with probability at least $0.98$. But one needs at least $\Omega \left(\frac{1}{\alpha_0}\right)=\Omega \left(\frac{1}{NS_{\delta}[f]}\right)$ queries to distinguish them. This implies the lower bound on the number of queries $\mathcal{G}$ makes.
\end{proof}
\section{Appendix C}
\subsection{Proof of Lemma \ref{continuity}}
\label{proof 1}
We distinguish three cases:
\begin{enumerate}
\item $l_1 \geq \frac{n}{2}$ 
\item $l_1 \leq \frac{n}{2} \leq l_2$
\item $l_2 \leq \frac{n}{2}$.
\end{enumerate}

We first prove the case 1. Since here ${{n}\choose{l_1}} \geq {{n}\choose{l_2}}$, the left inequality is true. We proceed to prove the right inequality. So, for sufficiently large $n$:

\begin{multline*}
\frac{{{n}\choose{l_1}}}{{{n}\choose{l_2}}}=
\frac{l_2 ! (n-l_2)!}{l_1 ! (n-l_1)!}=
\prod_{i=0}^{l_2-l_1-1} \frac{l_2-i}{n-l_1-i}
\leq
\left(
\frac{l_2}{n-l_2}
\right)^{l_2-l_1}
\leq
\left(
\frac{\frac{n}{2}
+\sqrt{C_1 n \log(n)}}{\frac{n}{2}
-\sqrt{C_1 n \log(n)}}
\right)^{l_2-l_1} \\
\leq
\left(
1+5 \sqrt{\frac{C_1\log(n)}{n}}
\right)^{l_2-l_1}
\leq
\left(
1+5 \sqrt{\frac{C_1\log(n)}{n}}
\right)^{
C_2 \xi \sqrt{\frac{n}{\log(n)}}}
\leq e^{5C_2 \sqrt{C_1} \xi}
=e^{0.5 \xi}
\leq
1+\xi
\end{multline*}
This completes the proof for case 1. We note that this method of bounding the product above was inspired by the proof in \cite{num21}. There it was used to bound probabilities of random walks directly, whereas we are using it to proof this Continuity Lemma first and then apply it later for random walks.

Now, we derive case 3 from it. Suppose $l_1 \leq l_2 \leq \frac{n}{2}$, then, ${{n}\choose{l_1}} \leq {{n}\choose{l_2}}$ which gives us the inequality on the right. To prove the left one, define $l_2'=n-l_1$ and $l_1'=n-l_2$. $l_1'$ and $l_2'$ will satisfy all the requirements for case 1, thus we have:

$$
\frac{{{n}\choose{l_1'}}}{{{n}\choose{l_2'}}}
\leq 1+\xi
$$

This implies:

$$
\frac{{{n}\choose{l_1}}}{{{n}\choose{l_2}}}=
\frac{{{n}\choose{l_2'}}}{{{n}\choose{l_1'}}}
\geq 
\frac{1}{1+\xi}
\geq 1-\xi
$$
This completes the proof for case 3. For case 2, together cases 1 and 3 imply that the following are true:
\begin{align*}
1
\leq
\frac{{{n}\choose{n/2}}}{{{n}\choose{l_2}}}
\leq 1+\xi
&&
1-\xi
\leq
\frac{{{n}\choose{l_1}}}{{{n}\choose{n/2}}}
\leq 1
\end{align*}

Multiplying these together, we show the lemma in case 2.

\subsection{Proof of Lemma \ref{corner cutting}}
\label{proof 2}

Recall that $NS_\delta[f]=2 \cdot Pr_D[f(x)=1 \land f(z)=0]$ and $p_A p_B=Pr_D[f(x)=1 \land f(z)=0 \land \overline{E_1} \land \overline{E_2}]$, which implies the left inequality. We now prove the right one. We have:

\begin{equation}
\label{eq 1-(1)}
\frac{NS_\delta[f]}{2}=
Pr_D[f(x)=1 \land f(z)=0]
\leq
Pr_D[f(x)=1 \land f(z)=0 \land \overline{E_1} \land \overline{E_2}]
+Pr_D[E_1 \lor E_2]
\end{equation}
By Chernoff bound we have:

\begin{equation}
\label{eq 1-(2)}
Pr_D[E_1]
\leq
\exp(
-
\frac{1}{3}
n \cdot \delta 3 
\eta_2 \log n
)
\leq
\frac{1}{n^{\eta_2}}
\end{equation}

Now using the Hoeffding bound together with the fact that since $\delta\leq 1/(\sqrt{n} \log n)$ we have $t_2\leq \sqrt{n}/\log n \cdot (1+3 \eta_2 \log(n)) \leq \sqrt{n \log n}$:
\begin{equation}
\label{eq 1-(3)}
Pr_D[E_2| \overline{E_1}]
\leq
Pr_D[|L(x)-n/2| \geq t_1-t_2]
\leq
2 \exp(
-2(\eta_1-1)^2 \log n
) 
=\frac{2}{n^{2(\eta_1-1)^2}}
\end{equation}
Thus, combining Equations (\ref{eq 1-(1)}) and (\ref{eq 1-(2)}) with Observation \ref{epsilon is large} we have:

\begin{equation}
\label{eq 1-(4)}
Pr_D[E_1 \lor E_2]
\leq
Pr_D[E_1]+
Pr_D[E_2| \overline{E_1}]
\leq
\frac{1}{n^{\eta_2}}+
\frac{1}{n^{2(\eta_1-1)^2}}
\leq \frac{\epsilon}{15 n^C}
\end{equation}
Combining Equations (\ref{eq 1-(1)}) and (\ref{eq 1-(4)}) we get:

$$
\frac{1}{2} NS_{\delta}[f]
\leq
Pr_D[f(x)=1 \land f(z)=0 \land \overline{E_1} \land \overline{E_2}]
+\frac{\epsilon}{15 n^C}
$$
Since $NS_{\delta}[f] \geq \frac{1}{n^{C}}$ and $p_A p_B=Pr_D[f(x)=1 \land f(z)=0 \land \overline{E_1} \land \overline{E_2}]$, this implies the lemma.

\subsection{Proof of Lemma \ref{bound for phi}}
\label{proof 3}
Recall that the probability of any given iteration to be successful is $\phi$. We can upper-bound the probability that the inequality fails to hold by the sum of probabilities the two following bad events: (i) after $(1-\epsilon/16)\cdot 768 \ln 200/(\epsilon^2 \phi)$ iterations there are more than $768 \ln 200/\epsilon^2$ successes. (ii) after $(1+\epsilon/16)\cdot 768 \ln 200/(\epsilon^2 \phi)$ iterations there are less than $768 \ln 200/\epsilon^2$ successes.

By Chernoff bound:
$$
Pr[\text{(i) happens}]
\leq
\exp
\left(
-\frac{1}{3}
\left(\frac{1}{1-\epsilon/16
}-1\right)^2
(1-\epsilon/16)\frac{768 \ln 200}{\epsilon^2}
\right)
\leq 
0.005
$$

$$
Pr[\text{(ii) happens}]
\leq
\exp
\left(
-\frac{1}{2}
\left(1-\frac{1}{1+\epsilon/16 }\right)^2
(1+\epsilon/16)\frac{768 \ln 200}{\epsilon^2}
\right)
\leq 
0.005
$$
This proves the correctness. 

To bound the expected number of iterations, we first observe that from a similar Chernoff bound, after $O(1/(\epsilon^2 \phi))$ iterations with probability at lest $1/2$ we exit the main loop. Each further time we make the same number of iterations, the probability of exiting only increases. This implies that the expected number of iterations is $O(1/(\epsilon^2 \phi))$.

\subsection{Proof of Lemma \ref{bound for p_B}}
\label{proof 4}

Recall that by Equation (\ref{eq 1-(5)}):
$$
p_A p_B
=\sum_{e \in E_I\cap M}
p_{e} q_{e}
$$
By Lemma \ref{lemma edges uniform}:

$$
\left(1-\frac{\epsilon}{310}\right)\frac{\delta}{2^n}\sum_{e \in E_I\cap M} q_{e}
\leq
p_A p_B
\leq
\left(1+\frac{\epsilon}{310}\right)\frac{\delta}{2^n}\sum_{e \in E_I\cap M} q_{e}
$$
Dividing this equation by the equation in Lemma \ref{bound for p_A 1} and substituting $|E_I|=2^{n-1} I[f]$:

$$
\left(1-\frac{\epsilon}{70}\right)\frac{1}{|E_I|}\sum_{e \in E_I\cap M} q_{e}
\leq
p_B
\leq
\left(1+\frac{\epsilon}{70}\right)\frac{1}{|E_I|}\sum_{e \in E_I\cap M} q_{e}
$$

Now applying Observation \ref{M is large} :
\begin{equation}
\label{eq 1-(32)}
\left(1-\frac{\epsilon}{33}\right)\frac{1}{|E_I\cap M|}\sum_{e \in E_I\cap M} q_{e}
\leq
p_B
\leq
\left(1+\frac{\epsilon}{33}\right)\frac{1}{|E_I\cap M|}\sum_{e \in E_I\cap M} q_{e}
\end{equation}

Define $\Psi$ to be the set of pairs of paths $(P_1', P_2')$ for which the following hold:
\begin{itemize}
\item $P_1'$ is a descending path and $P_2'$ is an ascending path.
\item The endpoint of $P_1'$ is the starting point of $P_2'$.
\item The value of $f$ at the starting point of $P_1'$ is one, and it is zero at the endpoint of $P_2'$.
\end{itemize}
If and only if $(P_1,P_2)$ is in $\Psi$, we have that $f(x) \neq f(z)$, therefore these are the only paths contributing to $\phi$.

Using this definition, we have:

\begin{equation}
\label{eq 1-(33)}
\phi
=
\sum_{e \in E_I\cap M}
\left(
Pr_{e' \in_R \mathcal{A}}[e'=e]
\sum_{(P_1', P_2')\in \Psi :e \in P_1'}
Pr_{\mathcal{B}_{e}}[(P_1=P_1') \land (P_2=P_2')]
\right)
\end{equation}

\begin{equation}
\label{eq 1-(34)}
q_{e}
=
\sum_{(P_1', P_2') \in \Psi:e \in P_1'}
Pr_{D}[(P_1=P_1') \land (P_2=P_2')\vert((e \in P_1)) \land \overline{E_1} \land \overline{E_2}]
\end{equation}
Combining Equation (\ref{eq 1-(34)}) and Lemma \ref{lemma 3} we get:

\begin{multline}
\label{eq 1-(35)}
\left(1-\frac{\epsilon}{70}\right)\sum_{(P_1', P_2')\in \Psi:e \in P_1}
Pr_{\mathcal{B}(e)}[(P_1=P_1') \land (P_2=P_2')]
\leq
q_{e}
\\
\leq
\left(1+\frac{\epsilon}{70}\right)\sum_{(P_1', P_2')\in \Psi:e \in P_1}
Pr_{\mathcal{B}(e)}[(P_1=P_1') \land (P_2=P_2')]
\end{multline}

Now, if in Lemma \ref{lemma edge sampler} we fix an $e_2$ and sum over $e_1$ in $E \cap M$ we get that for all $e$:
 
\begin{equation}
\label{eq 1-(36)}
\left(1-\frac{\epsilon}{70}\right)Pr_{e' \in_R \mathcal{A}}[e'=e]
\leq
\frac{1}{|E \cap M|}
\leq
\left(1+\frac{\epsilon}{70}\right)Pr_{e' \in_R \mathcal{A}}[e'=e]
\end{equation}

Combining Equation (\ref{eq 1-(33)}) with Equation (\ref{eq 1-(35)}) and Equation (\ref{eq 1-(36)}) we get:

\begin{equation}
\label{eq 1-(37)}
\left(1-\frac{\epsilon}{33}\right)\phi
\leq
\frac{1}{|E_I\cap M|}\sum_{e \in E_I\cap M} q_{e}
\leq
\left(1+\frac{\epsilon}{33}\right)\phi
\end{equation}

Equations (\ref{eq 1-(32)}) and (\ref{eq 1-(37)}) together imply the lemma.

\section{Appendix D}
\subsection{Proof of Lemma \ref{2-1.5}}
\label{proof 5}
If we make the function equal to one on inputs it possibly equaled zero, the bias of the function cannot decrease.
Additionally, $F'$ can be constructed from $F$ by changing less than $1/n^{C_1}$ fraction of its points. Such a transformation cannot increase the bias by more than $1/n^{C_1}$. This proves (a).

Regarding (b) and (c), we start we the following observation: suppose only one point $f(x^0)$ of an arbitrary function is changed. By union bound the probability for a randomly chosen $x$ that either $x=x^0$ or $x^{\oplus i}=x^0$ is at most $1/2^{n-1}$. Therefore, the probability that $f(x) \neq f(x^{\oplus i})$ cannot change by more than $1/2^{n-1}$. This implies that the influence cannot change by more than $n/2^{n-1}$. 

Regarding noise sensitivity, the situation is analogous. For any $\delta$, the probability that any of $x$ and $y$ equals $x_0$, on which the value of the function is changed, is at most $1/2^{n-1}$ by a union bound. Therefore, $NS_{\delta}[f]$ cannot change by more than $1/2^{n-1}$.

Using the observation and a triangle inequality we get:
$$
\bigl\lvert
I[F]-I[F']
\bigr\rvert
\leq
\frac{2^{n}}{n^{C_1}}
\cdot
\frac{n}{2^{n-1}}
=
\frac{2n}{n^{C_1}}
$$
$$
\bigl\lvert
NS_{\delta}[F]-NS_{\delta}[F']
\bigr\rvert
\leq
\frac{2^{n}}{n^{C_1}}
\cdot
\frac{1}{2^{n-1}}
=
\frac{2}{n^{C_1}}
$$
This completes the proof of (b) and (c).

\subsection{Proof of Lemma \ref{2-1.4}}
\label{proof 6}

Now, we proceed to proving (a). By our condition on $k$, it has to be the case that:
\begin{equation}
\label{eq 2-(0.5)}
Pr_{x \in_R \{0,1\}^n}
[L(x) \geq n/2+k\sqrt{n \log n}]
\geq
\frac{1}{n^{C_1}}
\end{equation}
Then, by Hoeffding's bound, it is the case that:
$$
\frac{1}{n^{C_1}}
\leq
Pr_{x \in_R \{0,1\}^n}
[L(x) \geq n/2+k\sqrt{n \log n}]
\leq
\exp\left(-2n \cdot 
\left( \frac{2 k \sqrt{n \log n}}{n}\right)^2 \right)
=
\exp(-8k^2 \log n)
=\frac{1}{n^{8 k^2}}
$$
And therefore, 
$
k 
\leq
\sqrt{
\frac{C_1}{8}}
$, which proves (a).

Since $n/2+k \sqrt{n \log n}$ is an integer, then $f_0(x)$ equals one if and only if $L(x) \geq n/2+k \sqrt{n \log n}+1$. Therefore, we can rewrite Equation (\ref{eq 2-(0.5)}) as:

\begin{equation}
\label{eq 2-(1)}
B[f_0]+\frac{1}{2^{n}} \cdot {{n}\choose{n/2+k \sqrt{n \log n}}} > \frac{1}{n^{C_1}}
\end{equation}
Additionally, for sufficiently large $n$ we have:
\begin{multline}
\label{eq 2-(2)}
B[f_0]
=
\frac{1}{2^n}
\sum_{l=n/2+k \sqrt{n \log n}+1}^{n}
{{n}\choose{l}} \geq \frac{1}{2^{n}} \cdot {{n}\choose{n/2+k \sqrt{n \log n}+1}} 
\\ =  \frac{1}{2^{n}} \cdot {{n}\choose{n/2+k \sqrt{n \log n}}} \cdot \frac{n-(n/2+k \sqrt{n \log n}+1)+1}{n/2+k \sqrt{n \log n}+1}
= 
\left(
1-O\left(\sqrt{\frac{\log n}{n}} \right)
\right) \cdot \frac{1}{2^n}
{{n}\choose{n/2+k \sqrt{n \log n}}}
\\ \geq
\frac{1}{2} \cdot  \frac{1}{2^n}
{{n}\choose{n/2+k \sqrt{n \log n}}}
\end{multline}
Above, we used the fact that $k$ is at most a constant. Combining Equations (\ref{eq 2-(1)}) and (\ref{eq 2-(2)}) we get that for sufficiently large $n$:
$$
3 B[f_0]
\geq
\frac{1}{n^{C_1}}-\frac{1}{2^n}
{{n}\choose{n/2+k \sqrt{n \log n}}}+2 \cdot \frac{1}{2} \cdot  \frac{1}{2^n}
{{n}\choose{n/2+k \sqrt{n \log n}}}
=
\frac{1}{n^{C_1}}
$$
This together with the fact that $B[f_0] \leq 1/n^{C_1}$, proves (b).

Consider (c) now. We have that:

$$
\frac{1}{3n^{C_1}}
\leq
B[f_0]
=
\frac{1}{2^n}
\sum_{l=n/2+k \sqrt{n \log n}+1}^{n}
{{n}\choose{l}}
\leq
\frac{n}{2^{n}}
{{n}\choose{n/2+k \sqrt{n \log n}+1}}
$$
This implies that:

$$
Pr_{x \in_{R} \{0,1\}^n}
[L(x)=n/2+k \sqrt{n \log n}+1]
=
\frac{1}{2^n}
{{n}\choose{n/2+k \sqrt{n \log n}+1}}
\geq
\frac{1}{3n^{C_1+1}}
$$
At the same time, given that $L(x)=n/2+k \sqrt{n \log n}+1$, if one flips an index $i$ for which $x_i=1$, then it will result that $f_0(x^{\oplus i})=0$. And since the number of such indices is at least half:

\begin{multline*}
I[f_0]
=
n \cdot
Pr_{x \in_R \{0,1\}^n; i \in_R [n]}
[f(x) \neq f(x^{\oplus i})]
\geq
n 
\cdot
Pr_{x \in_{R} \{0,1\}^n}
[L(x)=n/2+k \sqrt{n \log n}+1]
\cdot
\frac{1}{2}
\\ \geq
n 
\cdot
\frac{1}{3n^{C_1+1}}
\cdot
\frac{1}{2}
=
\Omega\left(\frac{1}{n^{C_1}}\right)
\end{multline*}
This proves the left inequality in (c). The right inequality is also correct, because it follows from Lemma \ref{2-1.5} by picking $F$ to be the all-zeros function. Thus, (c) is true.

Regarding noise sensitivity, a known lemma (stated in Appendix A as Lemma \ref{NS is increasing}) implies that noise sensitivity is an increasing function of $\delta$. Therefore, it is enough to consider $\delta=1/n$. Then, for any $x$, if we flip each index with probability $1/n$, the probability that overall exactly one index will be flipped equals $n \cdot \frac{1}{n} (1-1/n)^{n-1}=\Omega(1)$. Additionally, given that only one index is flipped, it is equally likely to be any of the $n$ indices. Therefore, we can lower-bound the noise sensitivity:

\begin{multline*}
NS_{\delta}[f_0]
\geq
NS_{1/n}[f_0]
=
Pr_{(x,y) \in_R T_{1/n}}
[f_0(x) \neq f_0(y)]
\\ \geq
\Omega(1)
\cdot
Pr_{x \in_R \{0,1\}^n, i \in_R [n]}
[f_0(x) \neq f_0(x^{\oplus i})]
=
\Omega(1) \cdot \frac{1}{n} \cdot I[f_0]
\end{multline*}
Together with (c), this implies the left inequality in (d). 

Regarding the right inequality, it follows from Lemma \ref{2-1.5} by picking $F$ to be the all-zeros function. Thus, (d) is true.

\end{document}